\documentclass[AMA,STIX1COL]{WileyNJD-v2}
\usepackage{moreverb}
\usepackage{amsfonts}
\usepackage{mathrsfs}
\usepackage{amsmath}
\usepackage{bm}
\usepackage{graphicx}
\usepackage{subfigure}
\usepackage{color}
\usepackage{verbatim}
\usepackage{pgf,tikz,epstopdf}
\usetikzlibrary{arrows,decorations.pathmorphing, backgrounds ,positioning, fit, matrix}
\usepackage[ruled,lined]{algorithm2e}
\usepackage{algpseudocode}

\usepackage{natbib}
\hypersetup{
colorlinks=true,
linkcolor=blue,
citecolor=blue
}

\newcommand{\wxy}[1]{\textcolor{black}{#1}}
\newcommand{\hw}[1]{\textcolor{black}{#1}}

\newtheorem{thm}{Theorem}
\newcommand{\col}{\textnormal{col}}
\newcommand{\diag}{\textnormal{diag}}

\newcommand\BibTeX{{\rmfamily B\kern-.05em \textsc{i\kern-.025em b}\kern-.08em
T\kern-.1667em\lower.7ex\hbox{E}\kern-.125emX}}

\articletype{research article}%

\received{<day> <Month>, <year>}
\revised{<day> <Month>, <year>}
\accepted{<day> <Month>, <year>}


\begin{document}

\title{\wxy{Strategic learning for disturbance rejection in multi-agent systems: Nash and Minmax in graphical games}}

\author[1]{Xinyang Wang}

\author[2]{Martin Guay}

\author[3]{Shimin Wang}

\author[1]{Hongwei Zhang}

\authormark{Wang \textsc{et al}}

\address[1]{\orgdiv{Guangdong Provincial Key Laboratory of Intelligent Morphing Mechanisms and Adaptive Robotics}, \orgdiv{School of Mechanical Engineering and Automation}, \orgname{Harbin Institute of Technology}, \orgaddress{\state{Shenzhen, Guangdong 518055}, \country{P.R. China}}}

\address[2]{\orgdiv{Department of Chemical Engineering}, \orgname{Queen's University}, \orgaddress{\state{Kingston, ON K7L 3N6}, \country{Canada}}}

\address[3]{\orgname{Massachusetts Institute of Technology}, \orgaddress{\state{Cambridge, MA 02139}, \country{USA}}}

\corres{*Hongwei Zhang, Harbin
Institute of Technology, Shenzhen,
Guangdong 518055, P.R. China.\\ \email{hwzhang@hit.edu.cn}}

\fundingInfo{the National Key R{\&}D Program of China under Grant 2022YFB4700200, Guangdong Basic and Applied Basic Research Foundation under project 2023A1515011981, the Shenzhen Science
and Technology Program under projects 
JCYJ20220818102416036 and RCJC20210609104400005, and partly by NSERC.}


\abstract[Abstract]{\wxy{This article investigates the optimal control problem with disturbance rejection for discrete-time multi-agent systems under cooperative and non-cooperative graphical games frameworks.
Given the practical challenges of obtaining accurate models, Q-function-based policy iteration methods are proposed to seek the Nash equilibrium solution for the cooperative graphical game and the distributed minmax solution for the non-cooperative graphical game.
To implement these methods online, two reinforcement learning frameworks are developed, an actor-disturber-critic structure for the cooperative graphical game and an actor-adversary-disturber-critic structure for the non-cooperative graphical game.
The stability of the proposed methods is rigorously analyzed, and simulation results are provided to illustrate the effectiveness of the proposed methods.}}

\keywords{multi-agent system, \wxy{disturbance rejection}, reinforcement learning, Nash equilibrium }

\jnlcitation{\cname{
\author{Wang X}, 
\author{Guay M},
\author{Wang S},
\author{Zhang H}.}
\ctitle{Strategic learning for disturbance rejection in multi-agent systems: Nash and Minmax in graphical games}. \cjournal{Int J Robust Nonlinear Control.}\cvol{}}

\maketitle

\section{Introduction}\label{sec1}
Over the past two decades, distributed control of multi-agent systems (MAS) has attracted extensive attention from both academia and industry. 
It has been applied in various scenarios, such as smart grid \cite{yaz2014}, mobile robots \cite{wilson2020robotarium} and unmanned air vehicles \cite{wang2018finite}. 
So far, fruitful results have been reported in the control community \cite{jadbabaie2003coordination,olfati2004consensus,ren2005consensus,hong2006tracking}.
The objective of distributed control is to design a control protocol for each agent via local interactions with neighbours, reaching a common control goal of the overall system. 
Compared with centralized control scheme, distributed control has greater flexibility, scalability and robustness \citep*{olfati2007consensus}. 
By distributing the control objects, MAS can enhance fault tolerance against some unexpected emergencies,  lower communication costs and reduce the complexity of designing a control protocol.  
Optimal distributed control systems are designed to minimize a cost function in a distributed fashion. 
The formal construction of the control policy of an optimal distributed control problem requires the solution of the Hamilton-Jacobi-Bellman (HJB) equation, which is generally impossible to compute analytically. 
%

%
To numerically solve the HJB equation, reinforcement learning (RL) is usually adopted. 
%
RL is a computational technique, whose principle is to optimize a control policy by maximizing the expected cumulative reward from the current state to the final state based on a trial-and-error approach \citep*{sutton2018reinforcement,tang2022robust}. 
Model-based RL methods, such as policy iteration (PI) and value iteration (VI), require complete knowledge of the system's dynamics, which limits the applicability of RL techniques \citep*{bertsekas1996neuro}.
To address this issue, a Q-learning algorithm has been proposed to compute optimal control policies in the absence of any specific knowledge of the system dynamics\citep*{al2007model}.
In control engineering, the RL technique is primarily utilized to solve the HJB equation to minimize a cost function. 
They have been exploited to overcome difficulties in many practical situations. Of particular interest in this study are RL methods that overcome the difficulties associated with the optimal distributed control problem \citep*{peng2022distributed}.
The combination of RL with adaptive control techniques has led to the development of several methods to learn optimal control policies in real-time, including adaptive dynamic programming (ADP), heuristic dynamic programming (HDP) and others \citep*{lewis2009reinforcement}.
So far, many control methods based on RL, such as event-triggered control, $H_{\infty}$ control, among others, have been studied with some progress \citep*{kiumarsi2017optimal}.

In practice, most physical systems are inevitably affected by uncertain disturbances.
$H_{\infty}$ control is a well-established robust control method to minimize the effects of uncertainties and disturbances for dynamic systems \citep*{isidori1994h}. 
In $H_{\infty}$ control, the $H_{\infty}$ norm of the transfer function from disturbances to output, which can quantify the worst-case disturbance attenuation, is minimized to improve the stability and performance of control systems.
By considering the disturbance input as a maximizing player and the control input as a minimizing player, $H_{\infty}$ control can be transformed into a zero-sum game which can be solved using RL techniques \citep*{bacsar2008h}. 
A model-free Q-learning method is developed to deal with the $H_{\infty}$ tracking problem by solving the game algebraic Riccati equation in Reference \zcite{yang2021h}.
The optimal $H_{\infty}$ control problem is solved by an output feedback Q-learning algorithm for the linear zero-sum game using an off-line method in Reference \zcite{rizvi2018output}.
The above-mentioned works on RL based $H_{\infty}$ control mainly focus on single-agent systems.

\hw{For complex and challenging tasks, MAS generally outperform single-agent systems.
It is well established that by combining RL and graphical game theory, optimal control problems for MAS can be solved effectively \cite{vamvoudakis2010online,abouheaf2014multi,zhang2016data}.
According to the relationship between each agent in MAS, graphical games can be cooperative or non-cooperative.
For the case where the knowledge of the neighbors' actions is available, each agent of a cooperative graphical games can implement the best response, and achieve Nash equilibrium\cite{abouheaf2014multi}.
In the case where the neighbors' actions are unknown, each agent in a non-cooperative graphical games must make the best decision subject to the worst-case actions of its neighbors to achieve conservative performance in a fully distributed fashion\cite{lopez2020reinforcement}.
%
%
Recently, graphical game theory and RL techniques have also been utilized to address distributed $H_{\infty}$ control problem of MAS.
$H_{\infty}$ control of MAS is considered by solving the coupled HJI equation for the differential cooperative graphical game in Reference \zcite{jiao2016multi}, which requires knowledge of the overall system dynamics. 
For homogeneous continuous-time MAS where neighbors' actions are unavailable, a non-cooperative graphical game with disturbance rejection is formulated and implemented by seeking a distributed minmax solution\cite{lian2022online}.
These methods to acquire the solution of cooperative and non-cooperative graphical games all need knowledge of the system dynamics.
However, obtaining accurate mathematical models for physical systems, if possible, is usually challenging.
To date, both cooperative and non-cooperative graphical games for discrete-time MAS with disturbance rejection in the absence of accurate system models has not been addressed satisfactorily in the literature.}
\wxy{
In this article, both cooperative and non-cooperative graphical games for discrete-time MAS are addressed using PI algorithms that achieve a Nash equilibrium solution in the cooperative case and a fully distributed minmax solution with disturbance rejection in the non-cooperative case.
The proposed PI methods are based on the Q-function which requires no information about the system dynamics.
To seek the Nash equilibrium solution and the fully distributed minmax solution online, an actor-disturber-critic framework and an actor-adversary-disturber-critic framework are proposed for cooperative and non-cooperative graphical games, respectively.
The convergence to the approximate Nash solution and the approximate distributed minmax solution of the online learning algorithms are rigorously analyzed.}


%
\emph{\textbf{Notations:}} The $n \times n$ identity matrix is denoted by $I_n$.
The notation $\mathbf{1}_n$ is the n dimensional vector of ones.
The maximum and minimum singular values of a matrix are denoted by $\bar{\sigma}(\cdot)$ and $\underline{\sigma}(\cdot)$, respectively.
The vector of columns of a matrix is denoted by $\textnormal{vec}(\cdot)$.
Matrix $\diag(x_1,\cdots,x_n)$ is diagonal with $x_i$ being the $i^{th}$ diagonal entry.
Matrix $P>0$ ($P \geq 0$) means $P$ is positive definite (semi-definite).
Kronecker product is denoted by $\otimes$.
Both Euclidean norm of a vector and Frobenius norm of a matrix are denoted by $\lVert \cdot \rVert$.
\section{Preliminaries and problem formulation} \label{Sec Preliminaries}

\subsection{Graph theory} \label{subsec-graph}
A directed graph $\mathcal G$ consists of a finite nonempty set of nodes $\mathcal V = \{v_1, \dots, v_N\}$ and a set of edges $E \subseteq \mathcal V \times \mathcal V $. 
Its adjacent matrix is denoted as $\mathcal A =[a_{ij}]$ where $a_{ij} > 0$ if $(v_i, v_j) \in E$ and $a_{ij} = 0$ otherwise.
Node $v_j$ is a neighbor of $v_i$ if $a_{ij} > 0$. 
The set of neighbors of node $i$ is denoted as $N_i = \{j|a_{ij} > 0\}$ and $-i=\{j|j \in N_i\}$.
The Laplacian matrix is defined as $\mathcal L = \mathcal D - \mathcal A$, where $\mathcal D = \diag(d_1,\cdots,d_N)$ is the in-degree matrix with $d_i = \sum _ {j \in N_i}a_{ij}$.
A sequence of edges $\{ (v_i, v_k), (v_k, v_l), \dots, (v_l, v_j) \}$ is a directed path from node $i$ to node $j$. 
A directed graph contains a spanning tree if the root node has a directed path to every other node.

\subsection{Problem formulation} \label{Sec-PF}
Consider a MAS consisting of a leader node and $N$ follower nodes.
The follower node is described by
{
\begin{align} \label{e2.2.1}
x_{i,k+1} = Ax_{i,k} + B_iu_{i,k}+ E_iw_{i,k},
\end{align}
where $x_{i,k}\in \mathbb{R}^n$ is the state vector, $u_{i,k} \in \mathbb{R}^p$ is the control input, $w_{i,k} \in \mathbb{R}^q$ is the disturbance input vector belonging to $L_2[0,\infty)$, i.e., $\sum_{k=0}^{\infty} \lVert w_{i,k} \rVert^2 < \infty$.}
The system matrices $A \in \mathbb{R}^{n \times n}$, $B_i \in \mathbb{R}^{n \times p}$, $E_i \in \mathbb{R}^{n \times q}$ are constants and $(A,B_i)$ is reachable for $i=1,2,\dots,N$.
The leader agent is
\begin{align} \label{e2.2.2}
x_{0,k+1} = Ax_{0,k}.
\end{align}
The local neighborhood tracking error $\delta_i\in \mathbb{R}^n$ of agent $i$ is defined as
\begin{align} \label{e2.2.3}
\delta_{i,k}  =\sum\nolimits_{j \in N_i}a_{ij}(x_{j,k}-x_{i,k})+g_i(x_{0,k}-x_{i,k})
\end{align}
where $g_i = 1$ if agent $i$ is pinned to the leader, otherwise $g_i = 0$. The compact form of local neighbourhood tracking error is
\begin{align*} 
\delta_k=-((L+G)\otimes I_n)\epsilon_k
\end{align*}
where $\epsilon_k=x_k-\underline{x}_{0,k}$ is the global disagreement vector,
$x_k=\col(x_{1,k},x_{2,k},\dots,x_{N,k}) \in \mathbb{R}^{Nn}$,
$\underline{x}_{0,k} = \mathbf{1}_n \otimes x_{0,k}$
and $G = \textnormal{diag}(g_1,\dots,g_N )$ is the pinning gain matrix.

\begin{assumption} \label{a1}
 The graph contains a spanning tree and $g_i = 1$ if agent $i$ is the root of the spanning tree.
\end{assumption} 


\begin{lemma}\label{l1}
\citep*{zhang2012adaptive}Under Assumption \ref{a1}, the global disagreement vector $ \epsilon_k$ is bounded by $\lVert \delta_k\rVert /\underline{\sigma}((L+G)\otimes I_n)$.
\end{lemma}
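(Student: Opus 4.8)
The plan is to read the bound directly off the compact identity $\delta_k = -((L+G)\otimes I_n)\epsilon_k$ stated just above the lemma, once the coefficient matrix has been shown to be invertible. First I would establish that, under Assumption~\ref{a1}, the matrix $L+G$ is nonsingular. This is the standard leader--follower property: the Laplacian $L$ is weakly diagonally dominant, each diagonal entry being the sum of the off-diagonal magnitudes in its row; adding the nonnegative diagonal matrix $G$, which is strictly positive in the row corresponding to the spanning-tree root, makes at least one row strictly dominant, and the spanning-tree structure then propagates this strictness so that no eigenvalue of $L+G$ sits at the origin (equivalently, $L+G$ is a nonsingular $M$-matrix). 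Hence $\underline{\sigma}(L+G)>0$, and since taking the Kronecker product with $I_n$ merely replicates the singular values, $\underline{\sigma}((L+G)\otimes I_n)>0$ as well.

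Second, I would invert the identity to get $\epsilon_k = -((L+G)\otimes I_n)^{-1}\delta_k$ and take Euclidean norms, using that the induced $2$-norm of the inverse of a nonsingular matrix $M$ equals $1/\underline{\sigma}(M)$:
\begin{align*}
\lVert \epsilon_k \rVert \le \lVert ((L+G)\otimes I_n)^{-1} \rVert \, \lVert \delta_k \rVert = \frac{\lVert \delta_k \rVert}{\underline{\sigma}((L+G)\otimes I_n)}.
\end{align*}
This is exactly the claimed bound, so the proof closes immediately after the norm estimate.

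The only step with any real content is the nonsingularity of $L+G$; the norm inequality itself is routine. Since this invertibility is precisely the fact quoted from \citep*{zhang2012adaptive}, I would simply cite it; if a self-contained derivation were preferred, I would spell out the Gershgorin/$M$-matrix argument sketched above, ordering the follower nodes by graph distance from a pinned root and arguing inductively that each successive block inherits strict diagonal dominance from the pinning term or from an already-resolved neighbour. I do not expect any further obstacle.
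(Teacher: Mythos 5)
Your argument is correct and is exactly the standard reasoning behind this result: the paper itself offers no proof, simply citing Zhang and Lewis (2012), and that reference establishes precisely what you outline, namely that under the spanning-tree-plus-pinning condition $L+G$ is a nonsingular $M$-matrix, after which inverting $\delta_k=-((L+G)\otimes I_n)\epsilon_k$ and using $\lVert M^{-1}\rVert = 1/\underline{\sigma}(M)$ gives the bound. Your one caveat is well placed: weak diagonal dominance with a single strictly dominant row does not by itself yield invertibility, so the inductive propagation along the spanning tree (or an equivalent $M$-matrix/irreducibility argument) is indeed the step that must be spelled out if you do not simply cite the reference.
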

The dynamics of agent $i$'s local neighbourhood tracking error is
\begin{align} \label{e2.2.8}
\delta_{i,(k+1)} = A\delta_{i,k}-(d_i+g_i)B_iu_{i,k} +\mathop{\sum}\nolimits_{j \in N_i} a_{ij}B_ju_{j,k}  -(d_i+g_i)E_iw_{i,k}
+\mathop{\sum}\nolimits_{j \in N_i}a_{ij}E_jw_{j,k}.
\end{align}

\subsubsection{Cooperative graphical game for disturbance rejection}

Similar with the cost function in Reference \zcite{jiao2016multi}, we now define the cost function $J_i:=J_i(\delta_{i,0},u_{i},u_{-i},w_{i},w_{-i})$ of each agent $i$  
\begin{align*} 
   J_i  = &\mathop{\sum}\nolimits_{k=0}^{\infty} r_i(\delta_{i,k}, u_{i,k}, u_{-i,k}, w_{i,k}, w_{-i,k}) \nonumber \\
    = & \mathop{\sum}\nolimits_{k=0}^{\infty} \left(\delta_{i,k}^\top Q_{ii} \delta_{i,k} + u_{i,k}^\top R_{ii} u_{i,k} +\mathop{\sum}\nolimits_{j \in N_i} u_{j,k}^\top R_{i j} u_{j,k}  -\beta^2 w_{i,k}^\top T_{ii} w_{i,k} -\beta^2\mathop{\sum}\nolimits_{j \in N_i}  w_{j,k}^\top T_{ij} w_{j,k}\right)
\end{align*}
where $u_i = \mathop{\{u_{i,l}\}}\nolimits_{l=0}^{\infty}$, $w_i = \mathop{\{w_{i,l}\}}\nolimits_{l=0}^{\infty}, u_{-i} = \mathop{\{u_{-il}\}}\nolimits_{l=0}^{\infty}$, $w_{-i} = \mathop{\{w_{-il}\}}\nolimits_{l=0}^{\infty}, Q_{ii}, R_{ii}, T_{ii} > 0$, $R_{ij}, T_{ij} \geq 0$, $\beta\geq\beta^*\geq0$, $\beta^*$ is the smallest gain that the disturbance attenuation can achieve and $r_i(\delta_{i,k}, u_{i,k}, u_{-i,k}, w_{i,k}, w_{-i,k})$ is the cost at the current step.
 
The value function $V_{i,k} := V_i({\delta_{i,k}})$ of each agent $i$ can be written as the Bellman equation
\begin{align}
V_{i,k}    & =\mathop{\sum}\nolimits_{l=k}^{\infty} r_i(\delta_{i,l}, u_{i,l}, u_{-i,l}, w_{i,l}, w_{-i,l})\nonumber\\ 
&= r_i(\delta_{i,k}, u_{i,k}, u_{-i,k}, w_{i,k}, w_{-i,k})+V_{i,k+1}.\label{e2.2.10} 
\end{align}

The objective of cooperative zero-sum graphical game is to find the optimal control policies $u_i^*$ for the synchronization of MAS under the worst-case disturbance $w_i^*$
\begin{align} \label{e2.2.11} 
\textbf{Objective} : \min\limits_{u_{i} \in \mathbb{R}^p}\space\max\limits_{w_{i} \in \mathbb{R}^q}\space J_{i}(\delta_{i,0},u_{i},u_{-i}^*,w_{i},w_{-i}^*),
\end{align}
where $u_{-i}^*$ and $w_{-i}^*$ are the optimal control and the worst disturbance of the neighbor nodes, respectively.

We now state the following standard assumption on the $H_\infty$ optimal control problem.

\begin{assumption} \label{a3} 
 There exists a solution of \eqref{e2.2.11}, i.e., the optimal solution satisfies the Isaacs condition 
\begin{align*}
\min\limits_{u_i \in \mathbb{R}^p}\space\max\limits_{w_{i} \in \mathbb{R}^q}\space  J_{i}(\delta_{i,0},u_{i},u_{-i}^*,w_{i},w_{-i}^*) = \max\limits_{w_i \in \mathbb{R}^q}\space\min\limits_{u_{i}\in \mathbb{R}^p}\space J_{i}(\delta_{i,0},u_{i},u_{-i}^*,w_{i},w_{-i}^*).
\end{align*}
\end{assumption} 
\wxy{To find the optimal solution to the cooperative zero-sum graphical game, the coupled discrete-time Hamilton-Jacobi-Isaacs (DTHJI) equation is defined as} 
\begin{align} \label{e2.2.13}
 \wxy{H_{i,k}}=&\ \wxy{H_i(\delta_{i,k}, \nabla V_{i,k+1}^*, u_{i,k}^*, u_{-i,k}^*, w_{i,k}^*, w_{-i,k}^*)}\nonumber \\
 = &\ \wxy{r_i(\delta_{i,k}, u_{i,k}^*, u_{-i,k}^*, w_{i,k}^*, w_{-i,k}^*) + \nabla V_{i,k+1}^{*\top} \delta_{i,k+1} = 0},
\end{align}
\wxy{where $\nabla V_{i,k+1}^*=\frac{\partial V_{i,k+1}^*}{\partial \delta_{i,k+1}}$, $V_{i,k}^* = \min\limits_{u_{i,k} \in \mathbb{R}^p}\space\max\limits_{ \{w_{i,k} \in \mathbb{R}^q \}} \space \{ r_i(\delta_{i,k}, u_{i,k}, u_{-i,k}, w_{i,k}, w_{-i,k})+V_{i,k+1}^*\}$}, and the optimal-case of the control policies and the worst-case of the disturbance policies calculated by using the stationary condition \cite{lewis2012optimal} are given by
\begin{align} \label{e2.2.14}
     u_{i,k}^* & =\frac{d_i+g_i}{2}R_{ii}^{-1}B_i^\top\nabla V_{i,k+1}^*, \nonumber  \\
     w_{i,k}^* & =-\frac{d_i+g_i}{2\beta^2}T_{ii}^{-1}E_i^\top\nabla V_{i,k+1}^*.
\end{align}

\begin{definition}\citep*{abouheaf2014multi}\label{d1}
    $\{u_i^*, u_{-i}^*, w_i^*, w_{-i}^*\}$ is the Nash solution of the zero-sum graphical game if 
    \begin{align*}
    J_i(\delta_{i,0},u_i^*,u_{-i}^*,w_i,w_{-i}^*) \leq J_i(\delta_{i,0},u_i^*,u_{-i}^*,w_i^*,w_{-i}^*) \leq  J_i(\delta_{i,0},u_i,u_{-i}^*,w_i^*,w_{-i}^*) \nonumber 
    \end{align*}
    holds for $i = 1, 2, \cdots, N$, where $u_i^* = \mathop{\{u_{i,l}^*\}}\nolimits_{l=0}^{\infty}$, $w_i^* = \mathop{\{w_{i,l}^*\}}\nolimits_{l=0}^{\infty}, u_{-i}^* = \mathop{\{u_{-i,l}^*\}}\nolimits_{l=0}^{\infty}$, $w_{-i}^* = \mathop{\{w_{-i,l}^*\}}\nolimits_{l=0}^{\infty}$.
\end{definition}

\begin{lemma} \label{l3}
Let $V_{i,k}^*$ be a positive definite solution of the DTHJI equation \eqref{e2.2.13}, or, equivalently the Bellman equation \eqref{e2.2.10} with the control policies $u_{i,k}^*$ and $w_{i,k}^*$ given by equation \eqref{e2.2.14}. 
Under Assumption \ref{a1} and Assumption \ref{a3}, the following statements hold:
\begin{itemize}
    \item[\textbf{1)}] \wxy{If $\beta > 0$ is large enough such that the following inequalities
        \begin{align} \label{stable_condition1}
        B_l^\top R_{lj}^{-1} B_l \geq \frac{1}{\beta^2} E_l^\top T_{lj}^{-1} E_l,  \qquad \forall j \in N_l \cup \{ l\}
        \end{align}
    hold for all $l = 1, 2, \dots, N$, then the error dynamics \eqref{e2.2.8} is asymptotically stable, and all agents synchronize to the leader;}
    \item[\textbf{2)}] The optimal cost function of each agent $i$ is given by $$J_i(\delta_{i,0},u_i^*,u_{-i}^*,w_i^*,w_{-i}^*)=V_{i,0}^*;$$
    \item[\textbf{3)}] \wxy{The system \eqref{e2.2.8} is $L_2$ stable with $L_2$-gain bounded by $\beta$;}
    \item[\textbf{4)}] $u_{i}^*$, $u_{-i}^*$, $w_{i}^*$ and $w_{-i}^*$ constitute a Nash equilibrium solution.
\end{itemize}
\end{lemma}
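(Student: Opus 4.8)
The plan is to establish the four items in the order 1) $\to$ 2) $\to$ 3) $\to$ 4), with the positive definite solution $V_{i,k}^{*}=V_i^{*}(\delta_{i,k})$ playing the role of a common Lyapunov/storage function (note $V_i^{*}(0)=0$ since $V_i^{*}$ is positive definite). For 1), evaluate the Bellman equation \eqref{e2.2.10} along the closed loop driven by $\{u_i^{*},u_{-i}^{*},w_i^{*},w_{-i}^{*}\}$ to get $V_{i,k+1}^{*}-V_{i,k}^{*}=-r_i(\delta_{i,k},u_{i,k}^{*},u_{-i,k}^{*},w_{i,k}^{*},w_{-i,k}^{*})$, then substitute the closed forms \eqref{e2.2.14} for $u_{j,k}^{*},w_{j,k}^{*}$, $j\in N_i\cup\{i\}$: each control penalty becomes a quadratic form in $\nabla V^{*}$ built from $B_\ell R_{\ell\cdot}^{-1}B_\ell^{\top}$ and each matching disturbance penalty a $1/\beta^{2}$-scaled form built from $E_\ell T_{\ell\cdot}^{-1}E_\ell^{\top}$, so the gain condition \eqref{stable_condition1} forces, pair by pair, the control penalty to dominate and hence $r_i\ge\delta_{i,k}^{\top}Q_{ii}\delta_{i,k}\ge 0$ with $Q_{ii}>0$. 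Thus $V_{i,k}^{*}$ is nonincreasing and, summing $V_{i,k+1}^{*}-V_{i,k}^{*}\le-\delta_{i,k}^{\top}Q_{ii}\delta_{i,k}$ over $k\ge 0$, $\sum_k\delta_{i,k}^{\top}Q_{ii}\delta_{i,k}\le V_{i,0}^{*}<\infty$, so $\delta_{i,k}\to0$ for every $i$; a standard Lyapunov argument then gives asymptotic stability of \eqref{e2.2.8}, and Lemma \ref{l1} converts $\delta_k\to0$ into $\lVert\epsilon_k\rVert\le\lVert\delta_k\rVert/\underline{\sigma}((L+G)\otimes I_n)\to0$, i.e. synchronization to the leader. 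Item 2) is then immediate: telescoping \eqref{e2.2.10} along the optimal policies gives $V_{i,0}^{*}=\sum_{k=0}^{T}r_i(\delta_{i,k},u_{i,k}^{*},u_{-i,k}^{*},w_{i,k}^{*},w_{-i,k}^{*})+V_{i,T+1}^{*}$, and $V_{i,T+1}^{*}\to0$ by 1), so $V_{i,0}^{*}=J_i(\delta_{i,0},u_i^{*},u_{-i}^{*},w_i^{*},w_{-i}^{*})$.

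For 3), I would use the dissipation inequality implied by the DTHJI equation \eqref{e2.2.13}: by the Isaacs condition (Assumption \ref{a3}), $(u_i^{*},w_i^{*})$ is a saddle point of the one-step game $\min_{u_i}\max_{w_i}\{r_i+V_{i,k+1}^{*}\}$, so holding $u_i=u_i^{*}$ and the neighbours at $(u_{-i}^{*},w_{-i}^{*})$, for an arbitrary disturbance $w_i$ one has $r_i(\delta_{i,k},u_{i,k}^{*},u_{-i,k}^{*},w_{i,k},w_{-i,k}^{*})+V_{i,k+1}^{*}(\delta_{i,k+1})\le V_{i,k}^{*}(\delta_{i,k})$. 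Rearranging and using the definition of $r_i$, $V_{i,k+1}^{*}-V_{i,k}^{*}\le-\bigl(\delta_{i,k}^{\top}Q_{ii}\delta_{i,k}+u_{i,k}^{*\top}R_{ii}u_{i,k}^{*}+\sum_{j\in N_i}u_{j,k}^{*\top}R_{ij}u_{j,k}^{*}\bigr)+\beta^{2}w_{i,k}^{\top}T_{ii}w_{i,k}+\beta^{2}\sum_{j\in N_i}w_{j,k}^{*\top}T_{ij}w_{j,k}^{*}$; summing over $k\ge0$, using $V_{i,k}^{*}\ge0$ and (from 1)) $w_{-i}^{*}\in L_2$, and taking zero initial disagreement, bounds the accumulated penalized output by $\beta^{2}\sum_k\lVert w_{i,k}\rVert^{2}$ up to a constant fixed by the neighbours, which is the stated $L_2$-gain bound $\beta$.

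For 4), both inequalities in Definition \ref{d1} come from the two halves of the same saddle. Keeping $u_i=u_i^{*}$ and the neighbours frozen, the max-half used in 3) gives $\sum_k r_i(\delta_{i,k},u_{i,k}^{*},u_{-i,k}^{*},w_{i,k},w_{-i,k}^{*})\le V_{i,0}^{*}$ for every $w_i$, hence $J_i(\delta_{i,0},u_i^{*},u_{-i}^{*},w_i,w_{-i}^{*})\le V_{i,0}^{*}=J_i(\delta_{i,0},u_i^{*},u_{-i}^{*},w_i^{*},w_{-i}^{*})$ by 2), the left inequality. Symmetrically, keeping $w_i=w_i^{*}$, the min-half of the saddle gives $V_{i,k+1}^{*}-V_{i,k}^{*}\ge-r_i(\delta_{i,k},u_{i,k},u_{-i,k}^{*},w_{i,k}^{*},w_{-i,k}^{*})$ for any $u_i$; telescoping and noting that either the trajectory is stabilizing, so $V_{i,T+1}^{*}\to0$, or $J_i=\infty$, yields $J_i(\delta_{i,0},u_i,u_{-i}^{*},w_i^{*},w_{-i}^{*})\ge V_{i,0}^{*}=J_i(\delta_{i,0},u_i^{*},u_{-i}^{*},w_i^{*},w_{-i}^{*})$ for all admissible $u_i$, the right inequality. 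Together these say $\{u_i^{*},u_{-i}^{*},w_i^{*},w_{-i}^{*}\}$ is a Nash solution.

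The step I expect to fight with is the graphical-game coupling, which intrudes at every stage because neighbours' controls and disturbances sit inside $J_i$ and $r_i$: concretely, in 1) one must verify that \eqref{stable_condition1} really does produce the pairwise domination $r_i\ge\delta_{i,k}^{\top}Q_{ii}\delta_{i,k}$ once the $\nabla V_j^{*}$-dependent cross terms (weighted by $R_{ij},T_{ij}$) are written out, and in 3) one must control the extra $w_{-i}^{*}$ contributions and work with the discrete-time dynamic-programming inequality rather than an exact completion of squares, so that the constant multiplying the initial and neighbour data does not inflate the claimed gain $\beta$.
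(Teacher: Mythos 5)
Your proposal is correct and follows essentially the same route as the paper: $V_{i,k}^*$ serves as the Lyapunov/storage function with the pairwise domination from \eqref{stable_condition1} giving $\Delta V_{i,k}^*\le-\delta_{i,k}^\top Q_{ii}\delta_{i,k}$ and Lemma \ref{l1} giving synchronization, telescoping the Bellman equation gives 2), and the two halves of the saddle implied by the DTHJI solution (Assumption \ref{a3}) give the $L_2$-gain bound and the Nash inequalities. The only differences are presentational — you work with stage-wise dissipation inequalities and explicit telescoping where the paper uses the summed cost identity \eqref{e2.2.23} with a completion-of-squares step — so no substantive gap.
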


\begin{proof}
\textbf{1)} 
\wxy{To analyze the stability of the closed-loop system, we pose $V_{i,k}^*$ as the Lyapunov function candidate. Its difference at step $k$ is
\begin{align*}  
    \Delta V_{i,k}^* = & -\delta_{i,k}^\top Q_{ii} \delta_{i,k} - u_{i,k}^{*\top} R_{ii} u_{i,k}^* -  \mathop{\sum}\nolimits_{j \in N_i} u_{j,k}^{*\top} R_{i j} u_{j,k}^*  +\beta^2 w_{i,k}^{*\top} T_{ii} w_{i,k}^* +\beta^2\mathop{\sum}\nolimits_{j \in N_i}  w_{j,k}^{*\top} T_{ij} w_{j,k}^* \\
    = & -\delta_{i,k}^\top Q_{ii} \delta_{i,k} - \frac{(d_i + g_i)^2}{4}\nabla V_{i,k+1}^{*\top} \left( B_i^\top R_{ii}^{-1} B_i - \frac{1}{\beta^2} E_i^\top T_{ii}^{-1} E_i\right)\nabla V_{i,k+1}^* \\
    & - \mathop{\sum}\nolimits_{j \in N_i} \frac{a_{ij}^2}{4} \nabla V_{j,k+1}^{*\top} \left( B_j^\top R_{ij}^{-1} B_j - \frac{1}{\beta^2} E_j^\top T_{ij}^{-1} E_j \right) \nabla V_{j,k+1}^*
\end{align*}
From \eqref{stable_condition1}, one has $\Delta V_{i,k}^* \leq  -\underline{\sigma}(Q) \lVert \delta_{i,k} \lVert^2$.
By Lyapunov stability theorem, the closed-loop system \eqref{e2.2.8} is asymptotically stable. 
Using Lemma \ref{l1}, all agents synchronize to the leader.}

\textbf{2)} The cost function of agent $i$ subject to the control policies $u_i$ and $w_i$ can be written as
\begin{align} \label{e2.2.23}
        J_i(\delta_{i,0},u_{i},u_{-i},w_{i},w_{-i}) = &\mathop{\sum}\nolimits_{k=0}^{\infty} r_i(\delta_{i,k}, u_{i,k}, u_{-i,k}, w_{i,k}, w_{-i,k}) \nonumber \\
        =&\mathop{\sum}\nolimits_{k=0}^{\infty} \left(r_i(\delta_{i,k}, u_{i,k}, u_{-i,k}, w_{i,k}, w_{-i,k}) - r_i(\delta_{i,k}, u_{i,k}^*, u_{-i,k}^*, w_{i,k}^*, w_{-i,k}^*)\right) +  V_{i,0}^* 
\end{align}
Letting $u_{-i}$ and $w_{-i}$ adopt the optimal policies given by \eqref{e2.2.14}, one has
\begin{align*} 
    J_i(\delta_{i,0},u_{i},u_{-i}^*,w_{i},w_{-i}^*) \nonumber = &\ V_{i,0}^* + \mathop{\sum}\nolimits_{k = 0}^{\infty}(u_{i,k}-u_{i,k}^*)^\top R_{ii}(u_{i,k}-u_{i,k}^*)+ 2u_{i,k}^{*\top}R_{ii}(u_{i,k}-u_{i,k}^*)\\ &- 2\beta^2w_{i,k}^{*\top} T_{ii}(w_{i,k}-w_{i,k}^*) - \beta^2(w_{i,k}-w_{i,k}^*)^\top T_{ii} (w_{i,k}-w_{i,k}^*).
\end{align*}
Then, taking the optimal control polices $u_{i}^*$ and $w_{i}^*$ yields 
\begin{align*}
    J_i(\delta_{i,0},u_i^*,u_{-i}^*,w_i^*,w_{-i}^*)=V_{i,0}^*.
\end{align*}

\textbf{3)} \wxy{Let $u_{i}$, $u_{-i}$ and $w_{-i}$ take the optimal forms. If Assumption \ref{a3} holds, the cooperative graphical game has a unique solution, and hence the optimal control policy and worst disturbance policy are unique \cite{jiao2016multi}. Then, one has 
\begin{align} \label{po1}
    J_i(\delta_{i,0},u_{i}^*,u_{-i}^*,w_{i}^*,w_{-i}^*) \geq J_i(\delta_{i,0},u_{i}^*,u_{-i}^*,w_{i},w_{-i}^*).
\end{align}
Since $J_i(\delta_{i,0},u_i^*,u_{-i}^*,w_i^*,w_{-i}^*)=V_{i,0}^*$, \eqref{po1} implies that
\begin{align*}
\mathop{\sum}\nolimits_{k=0}^{\infty} r_i(\delta_{i,k}, u_{i,k}^*, u_{-i,k}^*, w_{i,k}, w_{-i,k}^*)\leq V_{i,0}^*,
\end{align*}
and further
$$\mathop{\sum}\nolimits_{k=0}^{\infty} \left(\delta_{i,k}^\top Q_{ii} \delta_{i,k} + u_{i,k}^{*\top} R_{ii} u_{i,k}^* + \mathop{\sum}\nolimits_{j \in N_i} u_{j,k}^{*\top} R_{i j} u_{j,k}^* \right) \leq V^*_{i,0} + \beta^2\mathop{\sum}\nolimits_{k=0}^{\infty}  \left(w_{i,k}^\top T_{ii} w_{i,k} +  \mathop{\sum}\nolimits_{j \in N_i}  w_{j,k}^{*\top} T_{ij} w_{j,k}^* \right).$$
Considering the fact that $w_{i,k} \in L_2[0, \infty)$, $i = 1, 2, \cdots, N$, $L_2$ stability of \eqref{e2.2.8} is achieved.}

\textbf{4)} For arbitrary control policies $u_{i}$, let the initial optimal value function be denoted as $V_{i,0}^*$, and let $u_{-i}$, $w_{i}$ and $w_{-i}$ take the optimal forms.
Then $\eqref{e2.2.23}$ implies
\begin{align*}
    \mathop{\sum}\nolimits_{k=0}^{\infty} r_i(\delta_{i,k}, u_{i,k}, u_{-i,k}^*, w_{i,k}^*, w_{-i,k}^*) + \Delta V_{i,k}^* \geq 0.
\end{align*}
For arbitrary disturbance policies $w_{i}$, setting $u_{i}$, $u_{-i}$, $w_{-i}$ to the optimal form yields
\begin{align*}
    \mathop{\sum}\nolimits_{k=0}^{\infty} r_i(\delta_{i,k}, u_{i,k}^*, u_{-i,k}^*, w_{i,k}, w_{-i,k}^*) + \Delta V_{i,k}^* \leq 0.
\end{align*}
Noting that $\mathop{\sum}\nolimits_{k=0}^{\infty}  \Delta V_{i,k}^* = -J_i(\delta_{i,0},u_i^*,u_{-i}^*,w_i^*,w_{-i}^*)$, one has 
\begin{align*} 
   J_i(\delta_{i,0},u_i^*,u_{-i}^*,w_i,w_{-i}^*)  
    \leq    J_i(\delta_{i,0},u_i^*,u_{-i}^*,w_i^*,w_{-i}^*) \leq  J_i(\delta_{i,0},u_i,u_{-i}^*,w_i^*,w_{-i}^*),
\end{align*}
which shows that $\{{u_{i}^*, u_{-i}^*, w_{i}^*, w_{-i}^*}\}$ constitute a Nash equilibrium solution by Definition \ref{d1}.
\end{proof}

\wxy{
By Lemma \ref{l3}, a Nash equilibrium can be achieved in the zero-sum cooperative graphical game  if the optimal-case of the control policies and the worst-case of the disturbance policies are employed.
However, according to Reference \zcite{lian2024distributed}, the existence of $V_{i,k}$ of the highly-coupled DTHJI equation is not always guaranteed, and the optimal control policy is not fully distributed.
To address these issues, a minmax strategy is employed in the following section that ensures the existence of a solution and solves the optimal control problem in a fully distributed manner.}

\subsubsection{Non-cooperative graphical game for disturbance rejection}
\wxy{
By considering disturbance $w_{i}$, neighbors' control policy $u_{-i}$ and disturbance $w_{-i}$ of agent $i$ as adversarial inputs, agent $i$ makes the best decision for the worst-case behaviors of its neighbors. 
First, we modify the cost function as
\begin{align*}
    \mathcal{J}_{i}(\delta_{i,0},u_{i},u_{-i},w_{i},w_{-i}) = \mathop{\sum}\nolimits_{k=0}^{\infty} \underbrace{\left(\delta_{i,k}^\top Q_{ii} \delta_{i,k} + u_{i,k}^\top R_{ii} u_{i,k} - \Breve{\beta}^2 \mathop{\sum}\nolimits_{j \in N_i} u_{j,k}^\top R_{i j} u_{j,k}  -\Breve{\beta}^2 w_{i,k}^\top T_{ii} w_{i,k} -\Breve{\beta}^2\mathop{\sum}\nolimits_{j \in N_i}  w_{j,k}^\top T_{ij} w_{j,k}\right)}_{\Breve{r}_i(\delta_{i,k}, u_{i,k}, u_{-i,k}, w_{i,k}, w_{-i,k})},
\end{align*}
where $\Breve{\beta}\geq \Breve{\beta}^*\geq0$ and $\Breve{\beta}^*$ is the smallest gain that the disturbance attenuation can achieve.
The minmax strategy employed in this section is defined as follows. 
\begin{definition} \citep*{lian2022online}
    In a non-cooperative graphical game for \eqref{e2.2.8}, the minmax strategy of agent $i$ , $\forall i = 1, 2, \cdots, N$, is defined as 
    \begin{align*}
        u_i^* = \arg \min\limits_{u_{i} \in \mathbb{R}^p}\space\max\limits_{ \{u_{-i} \in \mathbb{R}^p,w_{i}, w_{-i} \in \mathbb{R}^q \}}\space \mathcal{J}_{i}(\delta_{i,0},u_{i},u_{-i},w_{i},w_{-i}).
    \end{align*}
\end{definition}
}

\wxy{
The objective of this section is to seek the distributed minmax solution for the following non-cooperative graphical game 
\begin{align*}
\textbf{Objective} : \min\limits_{u_{i} \in \mathbb{R}^p}\space\max\limits_{ \{u_{-i} \in \mathbb{R}^p,w_{i}, w_{-i} \in \mathbb{R}^q \}} \space \mathcal{J}_{i}(\delta_{i,0},u_{i},u_{-i},w_{i},w_{-i}).
\end{align*}
Correspondingly, we define the value function $\mathcal{V}_{i,k} := \mathcal{V}_i(\delta_{i,k})$ as
\begin{align} \label{non_bellm}
\mathcal{V}_{i,k} = \Breve{r}_i(\delta_{i,k}, u_{i,k}, u_{-i,k}, w_{i,k}, w_{-i,k})+\mathcal{V}_{i,k+1},
\end{align}
and the Hamiltonian function $\mathcal{H}_{i,k} := \mathcal{H}_i(\delta_{i,k}, \nabla \mathcal{V}_{i,,k+1}, u_{i,k}, u_{-i,k}, w_{i,k}, w_{-i,k})$ becomes 
\begin{align} \label{non_Ham}
 \mathcal{H}_{i,k}= \Breve{r}_i(\delta_{i,k}, u_{i,k}, u_{-i,k}, w_{i,k}, w_{-i,k}) + \nabla \mathcal{V}_{i,k+1}^\top \delta_{i,(k+1)},
\end{align}
where $\nabla \mathcal{V}_{i,k+1}=\frac{\partial \mathcal{V}_{i,k+1}}{\partial \delta_{i,k+1}}$.
Taking $\frac{\partial \mathcal{H}_{i,k}}{\partial u_{i,k}} = 0, \frac{\partial \mathcal{H}_{i,k}}{\partial w_{i,k}} = 0$ yields the optimal control policy and the worst-case disturbance
\begin{align} \label{non_ac}
    u_{i,k} & =\frac{d_i+g_i}{2}R_{ii}^{-1}B_i^\top \nabla \mathcal{V}_{i,k+1}, \nonumber  \\
    w_{i,k} & =-\frac{d_i+g_i}{2\Breve{\beta}^2}T_{ii}^{-1}E_i^\top \nabla \mathcal{V}_{i,k+1}.
\end{align}
Similarly, computing $\frac{\partial \mathcal{H}_{i,k}}{\partial u_{j,k}} = 0, \frac{\partial \mathcal{H}_{i,k}}{\partial u_{j,k}} = 0, \; \forall j \in N_i$, yields the worst-case control policy and disturbance of agent $i$'s neighbors
\begin{align} \label{non_adv}
     u_{j,k} & =\frac{a_{ij}}{2\Breve{\beta}^2}R_{ij}^{-1}B_j^\top \nabla \mathcal{V}_{i,k+1}, \nonumber  \\
     w_{j,k} & =\frac{a_{ij}}{2\Breve{\beta}^2}T_{ij}^{-1}E_j^\top \nabla \mathcal{V}_{i,k+1}.
\end{align}
}

\wxy{
Let $\mathcal{V}_{i, k}$ take the optimal value, i.e., $\mathcal{V}_{i, k}^* := \min\limits_{u_{i,k} \in \mathbb{R}^p}\space\max\limits_{  \{u_{-i,k} \in \mathbb{R}^p,w_{i,k}, w_{-i,k} \in \mathbb{R}^q\}} \space \{ \Breve{r}_i(\delta_{i,k}, u_{i,k}, u_{-i,k}, w_{i,k}, w_{-i,k})+ \mathcal{V}_{i,k+1}^*\}$.
Then, substituting \eqref{non_ac} and \eqref{non_adv} into \eqref{non_Ham} yields the discrete-time Hamilton-Jacobi (DTHJ) equation 
\begin{align} \label{non_HJ}
    \Breve{r}_i&\left(\delta_{i,k}, \frac{d_i+g_i}{2}R_{ii}^{-1}B_i^\top \nabla \mathcal{V}_{i,k+1}^*, \frac{a_{ij}}{2\Breve{\beta}^2}R_{ij}^{-1}B_j^\top \nabla \mathcal{V}_{i,k+1}^*, -\frac{d_i+g_i}{2\Breve{\beta}^2}T_{ii}^{-1}E_i^\top \nabla \mathcal{V}_{i,k+1}^*, \frac{a_{ij}}{2\Breve{\beta}^2}T_{ij}^{-1}E_j^\top \nabla \mathcal{V}_{i,k+1}^*\right) \nonumber \\
    & + \nabla \mathcal{V}_{i,k+1}^{*\top} (A\delta_{i,k}-\frac{(d_i+g_i)^2}{2}B_iR_{ii}^{-1}B_i^\top \nabla \mathcal{V}_{i,k+1}^* +\mathop{\sum}\nolimits_{j \in N_i} \frac{a_{ij}^2}{2\Breve{\beta}^2}B_jR_{ij}^{-1}B_j^\top \nabla \mathcal{V}_{i,k+1}^* \nonumber \\
    & -\frac{(d_i+g_i)^2}{2\Breve{\beta}^2}E_iT_{ii}^{-1}E_i^\top \nabla \mathcal{V}_{i,k+1}^*+\mathop{\sum}\nolimits_{j \in N_i}\frac{a_{ij}^2}{2\Breve{\beta}^2}E_jT_{ij}^{-1}E_j^\top \nabla \mathcal{V}_{i,k+1}^*) = 0.
\end{align}
\begin{lemma} \label{l4}
Let $\mathcal{V}_{i,k}^*$ be a positive definite solution of the DTHJ equation \eqref{non_HJ}, or, equivalently the Bellman equation \eqref{non_bellm} with the control policies $u_{i}^*$, $u_{-i}^*$, $w_{i}^*$ and $w_{-i}^*$ given by equation \eqref{non_ac}, \eqref{non_adv}. 
Under Assumption \ref{a1}, the following statements hold:
\begin{itemize}
    \item[\textbf{1)}] If $\Breve{\beta} > 0$ is large enough such that
        \begin{align} \label{stable_condition2}
        (d_i + g_i)^2 B_i^\top R_{ii}^{-1} B_i  \geq  \frac{(d_i + g_i)^2}{\Breve{\beta}^2} E_i^\top T_{ii}^{-1} E_i + \mathop{\sum}\nolimits_{j \in N_i} \frac{a_{ij}^2}{\Breve{\beta}^2} \left( B_j^\top R_{ij}^{-1} B_j +  E_j^\top T_{ij}^{-1} E_j \right),
        \end{align}
    then the error dynamics \eqref{e2.2.8} is asymptotically stable, and all agents synchronize to the leader node;
    \item[\textbf{2)}] The optimal cost function of each agent $i$ is given by $$\mathcal{J}_i(\delta_{i,0},u_i^*,u_{-i}^*,w_i^*,w_{-i}^*)=\mathcal{V}_{i,0}^*;$$
    \item[\textbf{3)}] The system \eqref{e2.2.8} is $L_2$ stable with $L_2$-gain bounded by $\Breve{\beta}$;
    \item[\textbf{4)}] The minmax strategy $u_{i}^*$ is fully distributed, and not a Nash equilibrium.
\end{itemize}
\end{lemma}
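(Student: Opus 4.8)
The plan is to follow the proof of Lemma~\ref{l3} almost verbatim, with the cooperative data $(r_i, V_{i,k}^*, \text{DTHJI equation})$ replaced everywhere by the minmax data $(\Breve{r}_i, \mathcal{V}_{i,k}^*, \text{DTHJ equation})$ of \eqref{non_bellm}--\eqref{non_HJ}, and with $u_{i,k}^*, u_{-i,k}^*, w_{i,k}^*, w_{-i,k}^*$ taken from \eqref{non_ac}--\eqref{non_adv}. For \textbf{1)}, I would use $\mathcal{V}_{i,k}^*$ as a Lyapunov candidate for the closed-loop error system \eqref{e2.2.8}. The Bellman equation \eqref{non_bellm} gives at once $\Delta\mathcal{V}_{i,k}^* = -\Breve{r}_i(\delta_{i,k}, u_{i,k}^*, u_{-i,k}^*, w_{i,k}^*, w_{-i,k}^*)$; substituting \eqref{non_ac}--\eqref{non_adv}, every control and disturbance term of $\Breve{r}_i$ collapses to a quadratic form in $\nabla\mathcal{V}_{i,k+1}^*$, so that $\Breve{r}_i = \delta_{i,k}^\top Q_{ii}\delta_{i,k} + \tfrac{1}{4}\nabla\mathcal{V}_{i,k+1}^{*\top} M_i \nabla\mathcal{V}_{i,k+1}^*$, where $M_i$ equals the left-hand side minus the right-hand side of \eqref{stable_condition2} (the $w_i$, $u_{-i}$ and $w_{-i}$ contributions carry the $1/\Breve{\beta}^2$ factors because of the $1/\Breve{\beta}^2$ appearing in \eqref{non_ac}--\eqref{non_adv}). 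Condition \eqref{stable_condition2} then forces $M_i\ge 0$, hence $\Delta\mathcal{V}_{i,k}^* \le -\delta_{i,k}^\top Q_{ii}\delta_{i,k} < 0$; the Lyapunov stability theorem yields $\delta_{i,k}\to 0$, and Lemma~\ref{l1} then forces $\epsilon_k\to 0$, i.e., synchronization to the leader.

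For \textbf{2)}, evaluating $\mathcal{J}_i$ at the optimal policies and telescoping the Bellman equation \eqref{non_bellm} gives $\mathcal{J}_i(\delta_{i,0}, u_i^*, u_{-i}^*, w_i^*, w_{-i}^*) = \mathcal{V}_{i,0}^* - \lim_{k\to\infty}\mathcal{V}_{i,k+1}^*$, and the limit vanishes because \textbf{1)} makes the optimal closed loop asymptotically stable while $\mathcal{V}_{i,k}^*$ is positive definite. For \textbf{3)}, I would proceed as in Part~3 of Lemma~\ref{l3}: add and subtract $\Breve{r}_i$ evaluated at the optimal policies and cancel the Hamiltonian via the DTHJ equation \eqref{non_HJ}; keeping $u_i^*, u_{-i}^*, w_{-i}^*$ fixed and leaving only $w_i$ free, the sign pattern of $\Breve{r}_i$ (positive only in $u_i$) makes every resulting deviation term non-positive, so $\mathcal{J}_i(\delta_{i,0}, u_i^*, u_{-i}^*, w_i, w_{-i}^*) \le \mathcal{V}_{i,0}^*$. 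Rearranging this and dropping the non-negative left-hand terms against the $-\Breve{\beta}^2 w_{i,k}^\top T_{ii} w_{i,k}$ term on the right bounds $\sum_k \big(\delta_{i,k}^\top Q_{ii}\delta_{i,k} + u_{i,k}^{*\top}R_{ii}u_{i,k}^*\big)$ by $\mathcal{V}_{i,0}^* + \Breve{\beta}^2\sum_k w_{i,k}^\top T_{ii} w_{i,k}$ up to the fixed, finite neighbour terms; since $w_{i,k}\in L_2[0,\infty)$, this is $L_2$ stability with gain $\le\Breve{\beta}$. Note that, in contrast with Lemma~\ref{l3}, no Isaacs-type hypothesis is needed: the sign pattern of $\Breve{r}_i$ already makes the stationary point \eqref{non_ac}--\eqref{non_adv} a saddle point.

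For \textbf{4)}, "fully distributed" is read off \eqref{non_ac} together with the DTHJ equation \eqref{non_HJ}: agent $i$'s equation involves only $\delta_{i,k}$ and the neighbourhood data $\{A, B_j, E_j, R_{ij}, T_{ij}\}_{j\in N_i\cup\{i\}}$, and, unlike the coupled DTHJI system \eqref{e2.2.13}, it does \emph{not} contain the value functions $\mathcal{V}_{j,k}^*$ of other agents; hence $\mathcal{V}_{i,k}^*$, and therefore $u_{i,k}^*$, can be computed (or learned online) by agent $i$ from local information alone. For "not a Nash equilibrium", the worst-case neighbour policies \eqref{non_adv} that agent $i$ plans against do not coincide with the minmax policies the neighbours actually play (their own \eqref{non_ac}), so $u_i^*$ is a best response to an adversarial surrogate of the neighbours rather than to their true policies; the profile $\{u_i^*\}_{i=1}^N$ is therefore not a fixed point of mutual best responses and violates the two-sided inequalities of Definition~\ref{d1} in general. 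I would confirm this with a scalar two-agent instance, in line with the continuous-time observation of \cite{lian2022online}.

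I expect the main obstacle to be of two kinds. The first is the algebraic bookkeeping in \textbf{1)}: checking that the coefficient $M_i$ is literally the gap in \eqref{stable_condition2} once all of \eqref{non_ac}--\eqref{non_adv} are substituted and the $1/\Breve{\beta}^2$ and $1/\Breve{\beta}^4$ factors combine correctly. The second, and conceptually harder, is turning "not a Nash equilibrium" in \textbf{4)} into a rigorous statement, since it cannot be obtained from the completion-of-squares identity used for the other items and genuinely needs an explicit counterexample.
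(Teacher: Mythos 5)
Your parts \textbf{1)}, \textbf{2)} and \textbf{4)} follow the paper's own proof of Lemma \ref{l4} essentially verbatim: the Lyapunov difference $\Delta\mathcal{V}_{i,k}^*=-\Breve{r}_i$ evaluated at \eqref{non_ac}--\eqref{non_adv} collapses, exactly as in the paper, to $-\delta_{i,k}^\top Q_{ii}\delta_{i,k}$ minus a quadratic form in $\nabla\mathcal{V}_{i,k+1}^*$ whose kernel is the difference of the two sides of \eqref{stable_condition2}; your telescoping of \eqref{non_bellm} for \textbf{2)} reaches the same identity the paper gets through its add-and-subtract/completion-of-squares step (a cosmetic difference); and for \textbf{4)} you use the same comparison of \eqref{e2.2.13} with \eqref{non_HJ} for distributedness and the same observation as the paper (citing \cite{lian2022online}) that the neighbour policies $u_i^*$ is secured against maximize $\mathcal{J}_i$, whereas Nash neighbours minimize it; your proposed two-agent counterexample is extra but consistent.

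The one substantive divergence is \textbf{3)}, and it is the weak point of your plan. The paper keeps all adversarial inputs $u_{-i}$, $w_i$, $w_{-i}$ as arbitrary $L_2$ signals and uses only the domination $\mathcal{J}_i(\delta_{i,0},u_i^*,u_{-i},w_i,w_{-i})\leq\max_{\{u_{-i},w_i,w_{-i}\}}\mathcal{J}_i(\delta_{i,0},u_i^*,\cdot)=\mathcal{V}_{i,0}^*$, so that after rearranging, the right-hand side contains only $\mathcal{V}_{i,0}^*$ plus $\Breve{\beta}^2$ times the energy of the \emph{given} adversarial signals, which is finite by assumption; this is precisely the claimed gain bound from the aggregate adversarial input to $(\delta_i,u_i^*)$. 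You instead freeze $u_{-i}=u_{-i}^*$, $w_{-i}=w_{-i}^*$ and let only $w_i$ vary, then move the resulting worst-case neighbour terms to the right calling them ``fixed, finite''; but $u_{-i}^*$ and $w_{-i}^*$ are state-feedback policies evaluated along the trajectory driven by the arbitrary $w_i$, so their energy is not known to be finite until the very $L_2$ bound you are proving is in hand --- as stated the step is circular, and the resulting statement is in any case weaker than the lemma's. A second, minor imprecision: with only $w_i$ deviating, the completion of squares leaves a sign-indefinite cross term in $w_i-w_{i}^*$, so ``every deviation term is non-positive'' is not literally true pointwise; the correct justification is the max-domination inequality above, which is what the paper uses. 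Both issues disappear if you simply do not specialize the neighbours' inputs.
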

}

\begin{proof}
\textbf{1)} 
\wxy{
To analyze the stability of the closed-loop system, we take $\mathcal{V}_{ik}^*$ as the Lyapunov function candidate and its difference at step $k$ is
\begin{align*}  
    \Delta \mathcal{V}_{i,k}^* = & -\delta_{i,k}^\top Q_{ii} \delta_{i,k} - u_{i,k}^{*\top} R_{ii} u_{i,k}^* + \Breve{\beta}^2 \mathop{\sum}\nolimits_{j \in N_i} u_{j,k}^{*\top} R_{i j} u_{j,k}^*  + \Breve{\beta}^2 w_{i,k}^{*\top} T_{ii} w_{i,k}^* + \Breve{\beta}^2\mathop{\sum}\nolimits_{j \in N_i}  w_{j,k}^{*\top} T_{ij} w_{j,k}^* \\
    = & -\delta_{i,k}^\top Q_{ii} \delta_{i,k} - \frac{1}{4}\nabla \mathcal{V}_{i,k+1}^{*\top} \left( (d_i + g_i)^2 B_i^\top R_{ii}^{-1} B_i  - \mathop{\sum}\nolimits_{j \in N_i} \frac{a_{ij}^2}{\Breve{\beta}^2}  B_j^\top R_{ij}^{-1} B_j  - \frac{(d_i + g_i)^2}{\Breve{\beta}^2} E_i^\top T_{ii}^{-1} E_i \right. \\
    & \left. - \mathop{\sum}\nolimits_{j \in N_i} \frac{a_{ij}^2}{\Breve{\beta}^2} E_j^\top T_{ij}^{-1} E_j \right) \nabla \mathcal{V}_{i,k+1}^*
\end{align*}
From \eqref{stable_condition2}, one has $\Delta \mathcal{V}_{i,k}^* \leq  -\underline{\sigma}(Q) \lVert \delta_{i,k} \lVert^2$.
By Lyapunov stability theorem, the closed-loop system \eqref{e2.2.8} is asymptotically stable. 
Using Lemma \ref{l1}, all agents synchronize to the leader.
}

\textbf{2)} 
\wxy{
The cost function of agent $i$ subject to the control policies $u_i$ and $w_i$ can be written as
\begin{align*}
        \mathcal{J}_i(\delta_{i,0},u_{i},u_{-i},w_{i},w_{-i}) = &\mathop{\sum}\nolimits_{k=0}^{\infty} \Breve{r}_i(\delta_{i,k}, u_{i,k}, u_{-i,k}, w_{i,k}, w_{-i,k}) \nonumber \\
        =&\mathop{\sum}\nolimits_{k=0}^{\infty} \left(\Breve{r}_i(\delta_{i,k}, u_{i,k}, u_{-i,k}, w_{i,k}, w_{-i,k}) - \Breve{r}_i(\delta_{i,k}, u_{i,k}^*, u_{-i,k}^*, w_{i,k}^*, w_{-i,k}^*)\right) +  \mathcal{V}_{i,0}^* .
\end{align*}
Let $u_{-i}$ and $w_{-i}$ adopt the optimal form \eqref{non_adv}. One has
\begin{align*} 
    \mathcal{J}_i(\delta_{i,0},u_{i},u_{-i}^*,w_{i},w_{-i}^*) \nonumber = & \mathcal{V}_{i,0}^* + \mathop{\sum}\nolimits_{k = 0}^{\infty}(u_{i,k}-u_{i,k}^*)^\top R_{ii}(u_{i,k}-u_{i,k}^*)+ 2u_{i,k}^{*\top}R_{ii}(u_{i,k}-u_{i,k}^*)\\ &- 2\Breve{\beta}^2w_{i,k}^{*\top} T_{ii}(w_{i,k}-w_{i,k}^*) - \Breve{\beta}^2(w_{i,k}-w_{i,k}^*)^\top T_{ii} (w_{i,k}-w_{i,k}^*).
\end{align*}
Then, taking the optimal polices \eqref{non_ac} yields 
\begin{align*}
    \mathcal{J}_i(\delta_{i,0},u_i^*,u_{-i}^*,w_i^*,w_{-i}^*)=\mathcal{V}_{i,0}^*.
\end{align*}
}

\textbf{3)} 
\wxy{
Since $V^*_{i,0}$ is positive definite and 
\begin{align*}
     \min\limits_{u_{i} \in \mathbb{R}^p}\space\max\limits_{ \{u_{-i} \in \mathbb{R}^p,w_{i}, w_{-i} \in \mathbb{R}^q \}} \space \mathcal{J}_{i}(\delta_{i,0},u_{i},u_{-i},w_{i},w_{-i}) \geq  \min\limits_{u_{i} \in \mathbb{R}^p}\space \mathcal{J}_{i}(\delta_{i,0},u_{i},u_{-i},w_{i},w_{-i}), 
\end{align*}
one has $\min\limits_{u_{i} \in \mathbb{R}^p}\space \mathcal{J}_{i}(\delta_{i,0},u_{i},u_{-i},w_{i},w_{-i}) \leq V^*_{i,0}$, i.e.,
\begin{align*}
\mathop{\sum}\nolimits_{k=0}^{\infty} \Breve{r}_i(\delta_{i,k}, u_{i,k}^*, u_{-i,k}, w_{i,k}, w_{-i,k})\leq \mathcal{V}_{i,0}^*.
\end{align*}
And the $L_2$ stability of \eqref{e2.2.8} can be shown by
$$\mathop{\sum}\nolimits_{k=0}^{\infty} \left(\delta_{i,k}^\top Q_{ii} \delta_{i,k} + u_{i,k}^{*\top} R_{ii} u_{i,k}^* \right) \leq \mathcal{V}^*_{i,0} + \Breve{\beta}^2\mathop{\sum}\nolimits_{k=0}^{\infty}  \left(\mathop{\sum}\nolimits_{j \in N_i} u_{j,k}^\top R_{i j} u_{j,k}  +  w_{i,k}^\top T_{ii} w_{i,k} +  \mathop{\sum}\nolimits_{j \in N_i}  w_{j,k}^\top T_{ij} w_{j,k} \right).$$
}

\textbf{4)} 
\wxy{By comparing \eqref{e2.2.13} and \eqref{non_HJ}, it is clear that the minmax strategy $u_{i}^*$ is fully distributed.
As shown in Reference \zcite{lian2022online}, the minmax strategy is not Nash in the non-cooperative graphical game since $u_{-i}$ of agent $i$'s neighbors in the non-cooperative game are to maximize the cost function $\mathcal{J}_i$, while the $u_{-i}$ of the Nash equilibrium solution are to minimize it.
}
\end{proof}

\section{Solution for graphical game using Q-function}\label{Sec PIZSG}
From \eqref{e2.2.14}, it is noted that the optimal policies for the zero-sum graphical game require knowledge of the system dynamics of each agent.
However, most real-world control systems are nonlinear, with uncertain parameters, whose precise models are difficult to obtain. 
If the system dynamics are unknown, the optimal policies cannot be adopted to solve the $H_{\infty}$ optimal control problem for MAS. 
In this section, a Q-function-based PI algorithm is proposed to solve the $H_{\infty}$ optimal problem for MAS, where the overall system dynamics are assumed to be unknown. 
\subsection{Q-function-based PI algorithm for cooperative graphical game}
According to Reference \zcite{zhang2016data}, the local Q-function of each agent $i$ is defined as 
\begin{align*} 
Q_i(\delta_{i,k}, u_{i,k}, u_{-i,k}, w_{i,k}, w_{-i,k})  =   r_i(\delta_{i,k}, u_{i,k}, u_{-i,k}, w_{i,k}, w_{-i,k}) + V_i(\delta_{i,k+1}).
\end{align*}
Compared with the value function \eqref{e2.2.10}, this Q-function contains not only the local neighbourhood tracking error but also the control and disturbance input.
Denote $Q_i(\delta_{i,k}, u_{i,k}, u_{-i,k}, w_{i,k}, w_{-i,k})$ as $Q_{i,k}$ for brevity. 
Noting that $Q_{i,k} = V_i(\delta_{i,k})$, the DTHJI equation based on Q-function can be written as
\begin{align*}
Q_{i,k}  = r_i(\delta_{i,k}, u_{i,k}, u_{-i,k}, w_{i,k}, w_{-i,k}) + Q_{i,k+1}.
\end{align*}

To solve the DTHJI equation without any information on system dynamics, a Q-function-based PI method is proposed as follows. Let $\epsilon_1$ and $\epsilon_2$ be small positive values acting as tolerances to stop the loop in the following algorithm.

\DontPrintSemicolon
\IncMargin{1em}
\begin{algorithm} 
\SetKwData{Left}{left}
\SetKwData{This}{this}
\SetKwData{Up}{up} 
\SetKwFunction{Union}{Union}
\SetKwFunction{FindCompress}{FindCompress} 
\SetKwInOut{Input}{Step 1}
\SetKwInOut{Output}{Step 2}
\Input{\textbf{Initialize} admissible control policies} 
\Output{}
\For{$u_{i,k}^h, \forall h = 0,1,\dots$, \textnormal{at each step} $h$,}{ 
        \Repeat{$\lVert \mathop{\textnormal{min}}\limits_{u_{i}}Q_{i}^{h,z} - Q_i^{h,z} \rVert \leq \epsilon_2$}{ 
            \For{$w_{i,k}^z, \forall z = 0,1,\dots$, \textnormal{at each step} $z$, }{
            \Repeat{$\lVert \mathop{\textnormal{max}}\limits_{w_{i}}Q_{i}^{h,z} - Q_{i}^{h,z} \rVert \leq \epsilon_1$
            }{
        \textbf{Solve} the following equation for $Q_i^{h,z}$        
           \begin{flalign}\label{e3.1.3}
            Q_{i}^{h,z}\left(\delta_{i,k},u_{i,k}^h,u_{-i,k}^h,w_{i,k}^z,w_{-i,k}^z\right)
            = & r_i\left(\delta_{i,k},u_{i,k}^h,u_{-i,k}^h,w_{i,k}^z,w_{-i,k}^z\right) \nonumber \\
            & + Q_i^{h,z}\big(\delta_{i,k+1},u_{i,k+1}^h, u_{-i,k+1}^h, w_{i,k+1}^z,w_{-i,k+1}^z\big)
            \end{flalign} 
            
            \textbf{Update} the disturbance policies
            \begin{align}\label{e3.1.4}
            w_{i}^{z+1} = \mathop{\textnormal{argmax}}\limits_{w_{i}}Q_{i}^{h,z}\left(\delta_{i,k},u_{i,k}^h,u_{-i,k}^h,w_{i},w_{-i,k}^z\right)
            \end{align}
 	 	   }
            }
        \textbf{Update} the control policies
        \begin{align}\label{e3.1.5}
        u_{i}^{h+1}  =\mathop{\textnormal{argmin}}\limits_{u_{i}}Q_{i}^{h,z}\left(\delta_{i,k},u_{i},u_{-i,k}^h,w_{i,k}^z,w_{-i,k}^z\right)
        \end{align}
 	}
        } 
 	   \caption{\wxy{PI Algorithm for cooperative graphical game for disturbance rejection}}
 	 	  \label{algo_1} 
 	 \end{algorithm}
 \DecMargin{1em}

\begin{remark}\label{rem1}
    It should be noted that $A$, $B_i$, and $E_i$ need to be known for the HJI equation \eqref{e2.2.13} and the optimal policies \eqref{e2.2.14} by the traditional PI algorithm (cf. Reference \zcite{jiao2016multi}), while the Q-function based PI algorithm is model-free since the control policies and disturbance policies are implicitly contained in the Q-function. 
\end{remark}

Define relative weights $\rho_{ij} = \bar{\sigma}(T_{jj}^{-1}T_{ij})$ and $\kappa_{ij} = \bar{\sigma}({R_{jj}^{-1}R_{ij}})$ for $\forall i = 1, 2, \cdots, N$ and $\forall j \in N_i$.
The next theorem confirms the convergence of Algorithm 1 when all agents are updated simultaneously.
\begin{thm} \label{thm1}
    Given arbitrary initial admissible control policies, let all follower agents perform Algorithm \ref{algo_1} simultaneously. 
    Then $u_i$, $w_i$, $u_{-i}$ and $w_{-i}$ converge to the Nash equilibrium and the Q-function converges to the optimal solution of DTHJI equation for large weights $R_{ii}, T_{ii}$ and small relative weights $\rho_{ij}, \kappa_{ij}$.
\end{thm}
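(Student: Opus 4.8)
The plan is to split the argument into two layers. The \emph{inner} layer analyses the nested policy-iteration loops that a single agent $i$ runs while the neighbours' policies $u_{-i},w_{-i}$ are held frozen; this reduces to the classical convergence theory for discrete-time zero-sum (equivalently $H_\infty$) policy iteration and produces a well-defined ``best-response'' solution operator. The \emph{outer} layer treats the simultaneous updating of all $N$ agents by assembling these per-agent operators into a single global map and showing, via the small-gain/Banach fixed-point argument, that this map is a contraction precisely when $R_{ii},T_{ii}$ are large and the relative weights $\rho_{ij},\kappa_{ij}$ are small; its unique fixed point is then identified, through Lemma \ref{l3}, with the Nash equilibrium and the optimal solution of the coupled DTHJI equation \eqref{e2.2.13}.

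\textbf{Inner loops (frozen neighbours).} First I would fix $i$ and freeze $u_{-i},w_{-i}$, so that \eqref{e3.1.3}--\eqref{e3.1.5} become ordinary policy iteration for a two-player zero-sum game with quadratic stage cost $r_i$. For the innermost $z$-loop (disturbance improvement with $u_i^h$ fixed), I would show that the iterates $\{Q_i^{h,z}\}_z$ obtained by solving the Bellman equation \eqref{e3.1.3} and then maximising over $w_i$ in \eqref{e3.1.4} are monotonically nondecreasing in $z$ and bounded above, since the starting policy is admissible (finite value) and $R_{ii},T_{ii}>0$; hence the loop converges and its limit satisfies $\partial Q_i/\partial w_i=0$, i.e. equals $\max_{w_i}Q_i(\cdot\,;u_i^h)$. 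For the $h$-loop (control improvement) I would show $\{\max_{w_i}Q_i^{h}\}_h$ is monotonically nonincreasing and bounded below by $\delta_{i,k}^\top Q_{ii}\delta_{i,k}\ge 0$, so it converges to a positive-definite solution of the per-agent version of \eqref{e2.2.13}. The monotonicity and boundedness follow the standard discrete-time PI arguments once each improved policy is shown to remain admissible, which I would verify by the Lyapunov difference computation already used in item 1 of the proof of Lemma \ref{l3} together with \eqref{stable_condition1}. This yields a solution operator $\mathcal{F}_i$ sending the neighbours' value data to agent $i$'s converged (positive-definite, quadratic) Q-function.

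\textbf{Outer contraction and identification.} Next I would form the global map $\mathcal{F}=(\mathcal{F}_1,\dots,\mathcal{F}_N)$ acting on the stacked tuple of the agents' Q-function kernels and show it is a contraction in a graph-weighted norm on $\mathbb{R}^N$ (built from $\mathcal{L}+\mathcal{G}$, as in Lemma \ref{l1}). Using the explicit policy formulas \eqref{e2.2.14} and the structure of $r_i$, a perturbation of neighbour $j$'s value enters agent $i$'s fixed point only through the cross terms $a_{ij}B_ju_j$ and $a_{ij}E_jw_j$, whose weight in $r_i$ relative to agent $j$'s own diagonal penalty is exactly $\kappa_{ij}=\bar{\sigma}(R_{jj}^{-1}R_{ij})$ and $\rho_{ij}=\bar{\sigma}(T_{jj}^{-1}T_{ij})$, while agent $i$'s own contraction toward its fixed point is governed by $\underline{\sigma}(R_{ii}),\underline{\sigma}(T_{ii})$. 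Estimating the induced operator gives a Lipschitz constant of schematic form $c\,\max_{i,j}\{\rho_{ij},\kappa_{ij}\}\big/(\text{increasing in }\underline{\sigma}(R_{ii}),\underline{\sigma}(T_{ii}))$, strictly below $1$ under the hypotheses, so by the Banach fixed-point theorem the simultaneous iteration converges to a unique fixed point. By construction this fixed point is a positive-definite solution of the coupled DTHJI equation \eqref{e2.2.13}, hence the induced policies are \eqref{e2.2.14}, and Lemma \ref{l3} (with Assumptions \ref{a1}, \ref{a3}) certifies that $\{u_i^*,u_{-i}^*,w_i^*,w_{-i}^*\}$ is a Nash equilibrium in the sense of Definition \ref{d1} and that the Q-function equals the optimal value.

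\textbf{Main obstacle.} The delicate step is the outer contraction estimate: converting the heuristic ``small relative weights $\Rightarrow$ weak coupling $\Rightarrow$ contraction'' into a quantitative bound on the Lipschitz constant of $\mathcal{F}$ in a norm compatible with the directed interaction graph, while simultaneously guaranteeing that every intermediate policy produced along the way stays admissible so that each Bellman equation \eqref{e3.1.3} remains solvable with a positive-definite kernel. A closing remark would note that, since the loops are halted at the tolerances $\epsilon_1,\epsilon_2$, the convergence is to an $O(\epsilon_1+\epsilon_2)$ neighbourhood of the Nash solution, which shrinks to the exact solution as $\epsilon_1,\epsilon_2\to 0$.
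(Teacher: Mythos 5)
There is a genuine mismatch between what you analyze and what Theorem \ref{thm1} asserts. Your two-layer decomposition---inner policy iteration for agent $i$ with the neighbours' policies $u_{-i},w_{-i}$ \emph{frozen}, followed by an outer best-response map $\mathcal{F}=(\mathcal{F}_1,\dots,\mathcal{F}_N)$ iterated to a fixed point---describes a sequential (Gauss--Seidel/best-response) scheme, not Algorithm \ref{algo_1} executed simultaneously. In the simultaneous execution, the evaluation equation \eqref{e3.1.3} for agent $i$ at iteration $(h,z)$ already contains the neighbours' \emph{current} iterates $u_{-i,k}^h$ and $w_{-i,k}^z$, which change at every inner step $z$ and every outer step $h$; hence agent $i$'s inner loop is not a two-player zero-sum policy iteration with fixed exogenous inputs, and the classical monotone-convergence theory you invoke for the frozen-neighbour subproblem does not directly apply. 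The paper's proof works exactly at this point: it compares $Q_i^{h,z}$ with $Q_i^{h,z+1}$ (and $Q_i^{h+1,z}$) directly under the simultaneous update, isolates the perturbation caused by the neighbours' concurrent policy changes in the terms $\Delta r(w_{-i,k}^{z},w_{-i,k}^{z+1})$ and $\Delta r(u_{-i,k}^{h},u_{-i,k}^{h+1})$, and shows these are dominated under the sufficient conditions \eqref{lm_ineq1}--\eqref{lm_ineq2}, which is precisely where the hypotheses ``large $R_{ii},T_{ii}$, small $\rho_{ij},\kappa_{ij}$'' enter; monotonicity plus boundedness then give convergence to the solution of the DTHJI equation, and the Nash property follows from Lemma \ref{l3}.

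Even taken on its own terms, your argument leaves its load-bearing step unproven: the contraction estimate for $\mathcal{F}$ in a graph-weighted norm is only stated schematically, yet it is the entire content of the claim (and you acknowledge this as the ``delicate step''). Establishing that each $\mathcal{F}_i$ is well defined (admissibility and positive definiteness preserved along the inner iterations) and Lipschitz with constant controlled by $\rho_{ij},\kappa_{ij}/\underline{\sigma}(R_{ii}),\underline{\sigma}(T_{ii})$ would require quantitative perturbation bounds on solutions of the coupled Bellman equations that are nowhere sketched; and even if completed, it would prove convergence of an iterated-best-response variant, after which one would still have to argue that the simultaneous scheme of Algorithm \ref{algo_1} inherits the same limit. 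So the proposal, as written, does not establish the theorem; the paper's iteration-by-iteration monotonicity argument with the $\Delta r$ correction terms is the mechanism that handles the simultaneity, and it is absent from your plan.
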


\begin{proof}
    \wxy{
    The proof is carried out in two steps. 
    First, the disturbance policies are updated with the control policies fixed in the inner loop. 
    Define a new function as 
    \begin{align*}
        U_{i}(\delta_{i,k}, u_{i,k}^h,  u_{-i,k}^h, w_{i,k}^{z+1},w_{-i,k}^z) = r_i(\delta_{i,k}, u_{i,k}^h, u_{-i,k}^h, w_{i,k}^{z+1}, w_{-i,k}^z) + Q_{i}^{h,z}(\delta_{i,k+1},u_{i,k+1}^h,u_{-i,k+1}^h, w_{i,k+1}^{z}, w_{-i,k+1}^z).
    \end{align*}
    According to \eqref{e3.1.4}, one has
        \wxy{
        \begin{align} \label{e3.2.1}
          Q_{i}^{h,z}(\delta_{i,k},u_{i,k}^h,u_{-i,k}^h, w_{i,k}^{z}, w_{-i,k}^z)\leq  U_i(\delta_{i,k},u_{i,k}^h,u_{-i,k}^h,w_{i,k}^{z+1},w_{-i,k}^z).
        \end{align}
        }
    From \eqref{e3.1.3}, we have
        \wxy{
        \begin{align*} 
         U_i(\delta_{i,k},u_{i,k}^h,u_{-i,k}^h,w_{i,k}^{z+1},w_{-i,k}^z) = Q_{i}^{h,z+1}(\delta_{i,k},u_{i,k}^h,u_{-i,k}^h,w_{i,k}^{z+1},w_{-i,k}^{z+1})  + \Delta r(w_{-i,k}^{z},w_{-i,k}^{z+1}),
        \end{align*}
        }
    where 
        \wxy{
        \begin{align*}
          \Delta r (w_{-i,k}^{z}, w_{-i,k}^{z+1}) 
        = & r_i(\delta_{i,k}, u_{i,k}^h, u_{-i,k}^h, w_{i,k}^{z+1}, w_{-i,k}^z) - r_i(\delta_{i,k}, u_{i,k}^h, u_{-i,k}^h, w_{i,k}^{z+1}, w_{-i,k}^{z+1})\\
        & + \mathop{\sum}\nolimits_{l=k+1}^{\infty} \left(r_i(\delta_{i,l}, u_{i,l}^h, u_{-i,l}^h, w_{i,l}^z, w_{-i,l}^z) - r_i(\delta_{i,l}, u_{i,l}^h, u_{-i,l}^h, w_{i,l}^{z+1}, w_{-i,l}^{z+1}) \right)\\
        = & \beta^2 \left(\mathop{\sum}\nolimits_{l=k+1}^{\infty} \left( w_{i,l}^{z+1 \top} T_{ii} w_{i,l}^{z+1} - w_{i,l}^{z \top} T_{ii} w_{i,l}^{z} \right) + \mathop{\sum}\nolimits_{l=k}^{\infty}\mathop{\sum}\nolimits_{j \in N_i}  \left( w_{j,l}^{z+1 \top} T_{ij} w_{j,l}^{z+1} - w_{j,l}^{z \top} T_{ij} w_{j,l}^{z} \right) \right) \\
        = & -\beta^2 \mathop{\sum}\nolimits_{l=k}^{\infty} \mathop{\sum}\nolimits_{j \in N_i} \left( 2 w_{j,l}^{z+1 \top} T_{ij}(w_{j,l}^{z} - w_{j,l}^{z+1})  + (w_{j,l}^{z} - w_{j,l}^{z+1 })^{\top} T_{ij} (w_{j,l}^{z} - w_{j,l}^{z+1}) \right) \\
        & - \beta^2 \mathop{\sum}\nolimits_{l=k+1}^{\infty} \left( 2 w_{i,l}^{z+1 \top} T_{ii}(w_{i,l}^{z} - w_{i,l}^{z+1})  + (w_{i,l}^{z} - w_{i,l}^{z+1 })^{\top} T_{ii} (w_{i,l}^{z} - w_{i,l}^{z+1}) \right)
        \end{align*}
        }
    To guarantee the following inequality
    \wxy{
    \begin{align}\label{e3.2.2}
          U_i(\delta_{i,k},u_{i,k}^h,u_{-i,k}^h,w_{i,k}^{z+1},w_{-i,k}^z)  \leq  Q_{i}^{h,z+1}(\delta_{i,k},u_{i,k}^h,u_{-i,k}^h,w_{i,k}^{z+1},w_{-i,k}^{z+1}),
    \end{align}
    }
    a sufficient condition is needed by considering \eqref{e2.2.14}
    \wxy{
    \begin{align} \label{lm_ineq1}
        \beta^2\underline{\sigma}(T_{ij})\lVert \Delta w_{j,l}^z \rVert & \geq (d_j+g_j)  \rho_{ij} \lVert E_j \rVert \lVert \nabla V_{j,l+1}^{h,z} \rVert,
    \end{align}
    }
    $\forall l \geq k$ and $j \in N_i \cup \{ i\} $, where \wxy{$\Delta w_{j,l}^z = w_{j,l}^{z+1} - w_{j,l}^{z}$} and $V_{j,l+1}^{h,z} = Q_{i}^{h,z}(\delta_{i,l+1},u_{i,l+1}^{h},u_{-i,l+1}^{h},w_{i,l+1}^{z},w_{-i,l+1}^{z})$. 
    It can be seen obviously that the inequality \eqref{lm_ineq1} holds for \wxy{small values of $\rho_{ij}$ and large values of $\underline{\sigma}(T_{ii})$.} 
    Combining \eqref{e3.2.1} and \eqref{e3.2.2}, we have
    \wxy{
    \begin{align*}
        Q_i^{h,z+1}(\delta_{i,k}, u_{i,k}, u_{-i,k}, w_{i,k}, w_{-i,k}) \geq  Q_i^{h,z}(\delta_{i,k}, u_{i,k}, u_{-i,k}, w_{i,k}, w_{-i,k}).
    \end{align*}  
    }
Next, consider the outer loop. Define a new function as 
    \begin{align*}
        Y_{i}(\delta_{i,k}, u_{i,k}^{h+1},  u_{-i,k}^h, w_{i,k}^{z},w_{-i,k}^z) = r_i(\delta_{i,k}, u_{i,k}^{h+1}, u_{-i,k}^h, w_{i,k}^{z}, w_{-i,k}^z) + Q_{i}^{h,z}(\delta_{i,k+1},u_{i,k+1}^h,u_{-i,k+1}^h, w_{i,k+1}^{z}, w_{-i,k+1}^z).
    \end{align*}
According to \eqref{e3.1.5}, one has
    \wxy{
    \begin{align}\label{e3.2.5}
         Q_{i}^{h,z}(\delta_{i,k},u_{i,k}^h,u_{-i,k}^h,w_{i,k}^{z},w_{-i,k}^z) \geq  Y_{i}(\delta_{i,k},u_{i,k}^{h+1},u_{-i,k}^h,w_{i,k}^{z},w_{-i,k}^z).
    \end{align}
    }
    Similar to \eqref{e3.2.2}
    \wxy{
    \begin{align*}
         Y_{i}( \delta_{i,k}, u_{i,k}^{h+1}, u_{-i,k}^h,  w_{i,k}^{z},w_{-i,k}^z) = Q_{i}^{h+1,z}(\delta_{i,k},u_{i,k}^{h+1},u_{-i,k}^{h+1},w_{i,k}^{z},w_{-i,k}^{z})  + \Delta r(u_{-i,k}^{h},u_{-i,k}^{h+1}),
    \end{align*}
    }
    where 
        \wxy{
        \begin{align*}
          \Delta r(u_{-i,k}^{h},u_{-i,k}^{h+1}) 
        = & r_i(\delta_{i,k}, u_{i,k}^{h+1}, u_{-i,k}^h, w_{i,k}^{z}, w_{-i,k}^z) - r_i(\delta_{i,k}, u_{i,k}^{h+1}, u_{-i,k}^{h+1}, w_{i,k}^{z}, w_{-i,k}^{z})\\
        & + \mathop{\sum}\nolimits_{l=k+1}^{\infty} \left(r_i(\delta_{i,l}, u_{i,l}^h, u_{-i,l}^h, w_{i,l}^z, w_{-i,l}^z) - r_i(\delta_{i,l}, u_{i,l}^{h+1}, u_{-i,l}^{h+1}, w_{i,l}^{z}, w_{-i,l}^{z}) \right)\\
        = & - \left(\mathop{\sum}\nolimits_{l=k+1}^{\infty} \left( u_{i,l}^{h+1 \top} R_{ii} u_{i,l}^{h+1} - u_{i,l}^{h \top} R_{ii} u_{i,l}^{h} \right) + \mathop{\sum}\nolimits_{l=k}^{\infty}\mathop{\sum}\nolimits_{j \in N_i}  \left( u_{j,l}^{h+1 \top} R_{ij} u_{j,l}^{h+1} - u_{j,l}^{h \top} R_{ij} u_{j,l}^{h} \right) \right) \\
        = & \mathop{\sum}\nolimits_{l=k}^{\infty} \mathop{\sum}\nolimits_{j \in N_i} \left( 2 u_{j,l}^{h+1 \top} R_{ij}(u_{j,l}^{h} - u_{j,l}^{h+1})  + (u_{j,l}^{h} - u_{j,l}^{h+1 })^{\top} R_{ij} (u_{j,l}^{h} - u_{j,l}^{h+1}) \right) \\
        & +\mathop{\sum}\nolimits_{l=k+1}^{\infty} \left( 2 u_{i,l}^{h+1 \top} R_{ii}(u_{i,l}^{h} - u_{i,l}^{h+1})  + (u_{i,l}^{h} - u_{i,l}^{h+1 })^{\top} R_{ii} (u_{i,l}^{h} - u_{i,l}^{h+1}) \right)
        \end{align*}
        }
    To guarantee the following inequality
    \wxy{
    \begin{align} \label{e3.2.7}
         Y_i(\delta_{i,k},u_{i,k}^{h+1},u_{-i,k}^h,w_{i,k}^{z},w_{-i,k}^z)  \geq  Q_{i}^{h+1,z}(\delta_{i,k},u_{i,k}^{h+1},u_{-i,k}^{h+1},w_{i,k}^{z},w_{-i,k}^{z})
    \end{align}
    }
     a sufficient condition is required by \eqref{e2.2.14}
        \wxy{
        \begin{align}\label{lm_ineq2}
        \underline{\sigma}(R_{ij})\lVert \Delta u_{j,l}^h \rVert & \geq (d_j+g_j) \kappa_{ij} \lVert B_j \rVert \lVert \nabla V_{j,l+1}^{h,z} \rVert
        \end{align}
        }
    for $\forall l \geq k$ and $j \in N_i \cup \{ i\}$, where \wxy{$\Delta u_{j,l}^h = u_{j,l}^{h+1} - u_{j,l}^{h}$}.
    The inequality \eqref{lm_ineq2} can be satisfied for small $\kappa_{ij}$ and large $\underline{\sigma}(R_{ii})$. 
    Combining \eqref{e3.2.5} and \eqref{e3.2.7}, we have
    \wxy{
    \begin{align*}
         Q_i^{h+1,z}(\delta_{i,k}, u_{i,k}, u_{-i,k}, w_{i,k}, w_{-i,k})\leq  Q_i^{h,z}(\delta_{i,k}, u_{i,k}, u_{-i,k}, w_{i,k}, w_{-i,k}).
    \end{align*}
    }Since the Q-function has an upper bound, $Q_{i}^{h,z}$ will converge as the $w_i$ is updated. 
    Additionally, while the disturbance policy $w_i$ is fixed, the Q-function is monotonically decreasing and bounded from below by updating $u_i$. 
    According to the convergence of the Q-function with the updates of $u_i$ and $w_i$, the Q-function will converge to the unique solution of the DTHJI equation.
    This completes the proof.} 
\end{proof}

\subsection{Q-function-based PI algorithm for non-cooperative graphical game}
\wxy{
The local Q-function of each agent $i$ in a non-cooperative graphical game for disturbance rejection is defined as 
\begin{align*} 
       \mathcal{Q}_i(\delta_{i,k}, u_{i,k}, u_{-i,k}, w_{i,k}, w_{-i,k})  =   \Breve{r}_i(\delta_{i,k}, u_{i,k}, u_{-i,k}, w_{i,k}, w_{-i,k}) 
      + \mathcal{V}_i(\delta_{i,k+1}).
\end{align*}
Compared with the value function \eqref{non_bellm}, this Q-function contains not only the local neighbourhood tracking error but also the control and disturbance input.
Denote $\mathcal{Q}_i(\delta_{i,k}, u_{i,k}, u_{-i,k}, w_{i,k}, w_{-i,k})$ as $\mathcal{Q}_{i,k}$ for brevity. Noting that $\mathcal{Q}_{i,k} = \mathcal{V}_i(\delta_{i,k})$, the DTHJ equation based on Q-function can be 
written as
\begin{align*}
     \mathcal{Q}_{i,k}  = \Breve{r}_i(\delta_{i,k}, u_{i,k}, u_{-i,k}, w_{i,k}, w_{-i,k}) 
      + \mathcal{Q}_{i,k+1}.
\end{align*}
To solve the DTHJ equation without any information on system dynamics, a Q-function-based PI method is proposed as follows. Let $\Breve{\epsilon}_1$ and $\Breve{\epsilon}_2$ be small positive values acting as tolerances to stop the loop in the following algorithm.}

\DontPrintSemicolon
\IncMargin{1em}
\begin{algorithm} 
\SetKwData{Left}{left}
\SetKwData{This}{this}
\SetKwData{Up}{up} 
\SetKwFunction{Union}{Union}
\SetKwFunction{FindCompress}{FindCompress} 
\SetKwInOut{Input}{Step 1}
\SetKwInOut{Output}{Step 2}
\Input{\wxy{\textbf{Initialize} admissible control policies}} 
\Output{}
\For{\wxy{$u_{i,k}^h, \forall h = 0,1,\dots$, \textnormal{at each step} $h$},}{ 
        \Repeat{\wxy{$\lVert \mathop{\textnormal{min}}\limits_{u_{i}} \mathcal{Q}_{i}^{h,z} - \mathcal{Q}_i^{h,z} \rVert \leq \Breve{\epsilon}_2$}}{ 
            \For{\wxy{$u_{-i,k}^z, w_{i,k}^z, w_{-i,k}^z \forall z = 0,1,\dots$, \textnormal{at each step} $z$, }}{
            \Repeat{\wxy{$\lVert \mathop{\textnormal{max}}\limits_{\{u_{-i}, w_{i}, w_{-i}\}} \mathcal{Q}_{i}^{h,z} -  \mathcal{Q}_{i}^{h,z} \rVert \leq \Breve{\epsilon}_1$}
            }{\wxy{
        \textbf{Solve} the following equation for $\mathcal{Q}_i^{h,z}$        
           \begin{flalign*}
            \mathcal{Q}_{i}^{h,z}\left(\delta_{i,k},u_{i,k}^h,u_{-i,k}^z,w_{i,k}^z,w_{-i,k}^z\right)
            = & \Breve{r}_i\left(\delta_{i,k},u_{i,k}^h,u_{-i,k}^z,w_{i,k}^z,w_{-i,k}^z\right) \nonumber \\
            & + \mathcal{Q}_i^{h,z}\big(\delta_{i,k+1},u_{i,k+1}^h, u_{-i,k+1}^z, w_{i,k+1}^z,w_{-i,k+1}^z\big)
            \end{flalign*} }
            
            \wxy{\textbf{Update} the disturbance and adversarial policies
            \begin{align*}
            & u_{-i}^{z+1} = \mathop{\textnormal{argmax}}\limits_{u_{-i}} \mathcal{Q}_{i}^{h,z}\left(\delta_{i,k},u_{i,k}^h,u_{-i}^z,w_{i,k}^z,w_{-i,k}^z\right) \\
            & w_{i}^{z+1} = \mathop{\textnormal{argmax}}\limits_{w_{i}}\mathcal{Q}_{i}^{h,z}\left(\delta_{i,k},u_{i,k}^h,u_{-i,k}^z,w_{i},w_{-i,k}^z\right) \\
            & w_{-i}^{z+1} = \mathop{\textnormal{argmax}}\limits_{w_{-i}} \mathcal{Q}_{i}^{h,z}\left(\delta_{i,k},u_{i,k}^h,u_{-i,k}^z,w_{i,k}^z,w_{-i}\right)
            \end{align*}
 	 	   }}
            }
        \wxy{\textbf{Update} the control policies
        \begin{align*}
        u_{i}^{h+1}  =\mathop{\textnormal{argmin}}\limits_{u_{i}} \mathcal{Q}_{i}^{h,z}\left(\delta_{i,k},u_{i},u_{-i,k}^z,w_{i,k}^z,w_{-i,k}^z\right)
        \end{align*}
 	}}
        } 
 	   \caption{\wxy{PI Algorithm for non-cooperative graphical game for disturbance rejection}}
 	 	  \label{algo_2} 
 	 \end{algorithm}
 \DecMargin{1em} 

\wxy{
The next theorem provides the sufficient condition for the convergence of Algorithm \ref{algo_2} when all agents are updated simultaneously.
\begin{thm} \label{thm2}
    Given arbitrary initial admissible control policies, let all follower agents perform Algorithm \ref{algo_2} simultaneously. 
    Then $u_i$ converges to the distributed minmax solution and the Q-function converges to the optimal solution of DTHJ equation for large weights $R_{ii}, R_{ij}, T_{ii}, T_{ij}$, $\forall j \in N_i$.
\end{thm}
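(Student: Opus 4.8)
The plan is to reproduce the two-loop monotonicity argument of the proof of Theorem~\ref{thm1}, now carried out for the modified cost $\Breve{r}_i$, the only structural novelty being that the inner loop of Algorithm~\ref{algo_2} improves the three adversarial policies $u_{-i}$, $w_i$ and $w_{-i}$ simultaneously rather than $w_i$ alone. Fix the outer index $h$. I would first show the inner iteration produces a nondecreasing sequence $\mathcal{Q}_i^{h,z}$. Introduce the auxiliary function
\begin{align*}
\mathcal{U}_i = \Breve{r}_i\!\left(\delta_{i,k}, u_{i,k}^h, u_{-i,k}^{z+1}, w_{i,k}^{z+1}, w_{-i,k}^{z+1}\right) + \mathcal{Q}_{i}^{h,z}\!\left(\delta_{i,k+1}, u_{i,k+1}^h, u_{-i,k+1}^{z}, w_{i,k+1}^{z}, w_{-i,k+1}^{z}\right),
\end{align*}
which inserts the improved adversarial actions into the one-step cost while keeping the old tail. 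The three argmax updates of Algorithm~\ref{algo_2} give $\mathcal{Q}_i^{h,z}\le\mathcal{U}_i$ at once. Rewriting $\mathcal{U}_i$ via the Bellman-type identity yields $\mathcal{U}_i = \mathcal{Q}_i^{h,z+1} + \Delta\Breve{r}$, where $\Delta\Breve{r}$ collects quadratic and cross terms in $\Delta u_{j,l}^{z} = u_{j,l}^{z+1}-u_{j,l}^{z}$, $\Delta w_{j,l}^{z}$ and $\Delta w_{i,l}^{z}$, each carrying a factor $\Breve{\beta}^2$ together with $R_{ij}$, $T_{ij}$ or $T_{ii}$.

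To force $\mathcal{U}_i\le\mathcal{Q}_i^{h,z+1}$ I would impose sufficient conditions of the same flavour as \eqref{lm_ineq1}, obtained by bounding the worst-case actions \eqref{non_ac}--\eqref{non_adv}; for the $w_j$ channel, for instance,
\begin{align*}
\Breve{\beta}^2\,\underline{\sigma}(T_{ij})\,\lVert \Delta w_{j,l}^{z}\rVert \ \ge\ a_{ij}\,\bar{\sigma}(T_{ij}^{-1})\,\lVert E_j\rVert\,\lVert \nabla\mathcal{V}_{i,l+1}^{h,z}\rVert, \qquad \forall l\ge k, \ j\in N_i,
\end{align*}
with analogous inequalities for the $u_j$ channel (in terms of $R_{ij}$, $\lVert B_j\rVert$), the $w_i$ channel ($T_{ii}$, $\lVert E_i\rVert$) and the $u_i$ channel ($R_{ii}$, $\lVert B_i\rVert$). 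Since $\bar{\sigma}(T_{ij}^{-1})=1/\underline{\sigma}(T_{ij})$ and likewise for the remaining weights, each such inequality holds once $R_{ii}, R_{ij}, T_{ii}, T_{ij}$ are large, which is exactly the hypothesis of the theorem. Chaining $\mathcal{Q}_i^{h,z}\le\mathcal{U}_i\le\mathcal{Q}_i^{h,z+1}$ shows the inner sequence is nondecreasing; being bounded above, it converges, to the value of the maximization over $\{u_{-i},w_i,w_{-i}\}$ with $u_i^h$ frozen.

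Next, with the inner loop converged, I would run the outer loop exactly as in Theorem~\ref{thm1}: define $\mathcal{Y}_i$ by inserting the improved control $u_i^{h+1}$ into the one-step cost with the old tail, use the argmin update to obtain $\mathcal{Q}_i^{h,z}\ge\mathcal{Y}_i$, expand $\mathcal{Y}_i = \mathcal{Q}_i^{h+1,z}+\Delta\Breve{r}$ with $\Delta\Breve{r}$ now assembled from $\Delta u_{i,l}^{h}=u_{i,l}^{h+1}-u_{i,l}^{h}$ weighted by $R_{ii}$, and invoke a sufficient condition of the type $\underline{\sigma}(R_{ii})\lVert\Delta u_{i,l}^{h}\rVert\ge (d_i+g_i)\lVert B_i\rVert\,\lVert\nabla\mathcal{V}_{i,l+1}^{h,z}\rVert$, valid for large $R_{ii}$, to get $\mathcal{Y}_i\ge\mathcal{Q}_i^{h+1,z}$ and hence $\mathcal{Q}_i^{h+1,z}\le\mathcal{Q}_i^{h,z}$. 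Thus the outer sequence is nonincreasing and, being bounded below, convergent. Passing to the limit in the solve/update steps of Algorithm~\ref{algo_2} shows the limit $\mathcal{Q}_i^*$ satisfies the Q-function form of the DTHJ equation with the policies \eqref{non_ac}--\eqref{non_adv}; by positive-definiteness and uniqueness of the solution of \eqref{non_HJ}, $\mathcal{Q}_i^*=\mathcal{V}_i^*$, so $u_i^h$ converges to the minmax policy, which is fully distributed by part~4) of Lemma~\ref{l4}.

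The step I expect to be the main obstacle is the inner loop, specifically the bookkeeping of $\Delta\Breve{r}$: with three adversarial channels improved at once, $\Delta\Breve{r}$ mixes $\Delta u_{j,l}^{z}$, $\Delta w_{j,l}^{z}$ and $\Delta w_{i,l}^{z}$, and one must verify that each group of cross terms is dominated by its own positive quadratic weight. This is precisely where the contrast with the cooperative case surfaces: in Theorem~\ref{thm1} small relative weights $\rho_{ij},\kappa_{ij}$ sufficed because the cooperative worst-case neighbour actions scale with $R_{jj}^{-1},T_{jj}^{-1}$, whereas here the non-cooperative worst-case neighbour actions \eqref{non_adv} scale with $R_{ij}^{-1},T_{ij}^{-1}$, so largeness of $R_{ij},T_{ij}$ (and $R_{ii},T_{ii}$) is what makes the cross terms negligible. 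A secondary point to handle with care is the a priori upper bound for the nondecreasing inner sequence and the lower bound for the nonincreasing outer sequence, which should be extracted from admissibility of the initial control policies together with the positive-definiteness of $\mathcal{V}_i^*$.
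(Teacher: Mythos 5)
Correct, and essentially the same approach as the paper intends: the paper omits this proof, stating only that it parallels Theorem~\ref{thm1}, and your reconstruction is precisely that parallel --- the two-loop monotonicity argument for $\mathcal{Q}_i^{h,z}$ with the modified cost $\Breve{r}_i$, the inner maximization now spread over the three adversarial channels $u_{-i}$, $w_i$, $w_{-i}$, and the dominance conditions on the $\Delta\Breve{r}$ cross terms re-expressed as largeness of $R_{ii}$, $R_{ij}$, $T_{ii}$, $T_{ij}$ (since the adversarial policies \eqref{non_adv} scale with $R_{ij}^{-1}$, $T_{ij}^{-1}$ rather than $R_{jj}^{-1}$, $T_{jj}^{-1}$), exactly matching the theorem's hypothesis. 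Your identification of where the non-cooperative case differs from the cooperative one is the right observation, and the level of rigor in bounding the cross terms is the same as in the paper's own proof of Theorem~\ref{thm1}.
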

\begin{proof}
    The proof is similar to the proof of Theorem \ref{thm1} and thus is omitted. 
\end{proof}}

\section{Online implementation}\label{Sec MFANN}
\wxy{
In this section, the Q-function-based PI methods for cooperative graphical game and non-cooperative graphical game are respectively implemented via an actor-disturber-critic structure and an actor-disturber-adversary-critic structure for online training.
}
\subsection{Online implementation for cooperative graphical game}
\subsubsection{Optimal policies for Q-function}
To implement the online model-free PI method, the form of the optimal policies can be used as the target policies to update the actor and disturber.
Considering the quadratic form of the Q-function, let $S_i$ be the kernel matrix such that
\begin{align*}
    Q_{i,k}  =&\ z_{i,k}^\top S_iz_{i,k} \\
    = &\ z_{i,k}^\top
    \begin{bmatrix}
        S_{\delta_i \delta_i} & S_{\delta_i u_i} & S_{\delta_i u_{-i}} & S_{\delta_i w_i} & S_{\delta_i w_{-i}} \\ 
        S_{u_i \delta_i} & S_{u_i u_i} & S_{u_i u_{-i}} & S_{u_i w_i} & S_{u_i w_{-i}} \\
        S_{u_{-i} \delta_i} & S_{u_{-i} u_i} & S_{u_{-i} u_{-i}} & S_{u_{-i} w_i} & S_{u_{-i} w_{-i}}\\
        S_{w_i \delta_i} & S_{w_i u_i} & S_{w_i u_{-i}} & S_{w_i w_i} & S_{w_i w_{-i}}\\
        S_{w_{-i} \delta_i} & S_{w_{-i} u_i} & S_{w_{-i} u_{-i}} & S_{w_{-i} w_i} & S_{w_{-i} w_{-i}}
    \end{bmatrix}
        z_{i,k},
\end{align*}
where $z_{i,k} = \col(\delta_{i,k},u_{i,k},u_{-i,k},w_{i,k},w_{-i,k})$.

Using the optimal condition $\frac{\partial Q_{i,k}}{\partial 
 u_{i,k}}=0$ and $\frac{\partial Q_{i,k}}{\partial w_{i,k}}=0$, $u_{i,k}^*$ and $w_{i,k}^*$ can be calculated by
\begin{align}\label{e4.1.2}
    u_{i,k}^*  =  & -\left(S_{u_iu_i}-S_{u_iw_i}S_{w_iw_i}^{-1}S_{w_iu_i}\right)^{-1} \left[\left(-S_{u_iw_i}S_{w_iw_i}^{-1}S_{w_i\delta_i}+S_{u_i\delta_i} \right)\delta_{i,k} \nonumber \right. \\  & \left.  +\left(S_{u_iu_{-i}}-S_{u_iw_i}S_{w_iw_i}^{-1}S_{w_iu_{-i}}\right)u_{-i,k}  + \left(S_{u_iw_{-i}}-S_{u_iw_i}S_{w_iw_i}^{-1}S_{w_iw_{-i}}\right)w_{-i,k}\right] \nonumber \\
    w_{i,k}^*  =  & -\left(S_{w_iw_i}-S_{w_iu_i}S_{u_iu_i}^{-1}S_{u_iw_i}\right)^{-1} \left[\left(-S_{w_iu_i}S_{u_iu_i}^{-1}S_{u_i\delta_i}+S_{w_i\delta_i}\right)\delta_{i,k} \nonumber \right. \\ &\left. +\left(S_{w_iu_{-i}}-S_{w_iu_i}S_{u_iu_i}^{-1}S_{u_iu_{-i}}\right)u_{-i,k} +\left(S_{w_iw_{-i}}-S_{w_iu_i}S_{u_iu_i}^{-1}S_{u_iw_{-i}}\right)w_{-i,k}\right]
\end{align}

\subsubsection{Actor-disturber-critic neural network}
In this section, critic network $\hat{Q}_{i,k}$, actor networks $\hat{u}_{i,k}$ and disturber networks $\hat{w}_{i,k}$ are used to approximate the Q-value function $Q_{i,k}$, control policy $u_{i,k}$ and disturbance input policy $w_{i,k}$ for each agent, respectively.
The Q-function is approximated by 
\begin{align}\label{e4.2.1}
    \hat{Q}_{i,k} = z_{i,k}^\top W_{c,i}^k z_{i,k},
\end{align}
where $W_{c,i}^k$ is the critic weights.
The objective of the critic network is to minimize the square residual error
\begin{align*}
    E_{c,i,k} = \frac{1}{2}e_{c,i,k}^\top e_{c,i,k},
\end{align*}
where the temporal difference (TD) error $e_{c,i,k}$ is defined as
\begin{align}\label{e4.2.3}
    e_{c,i,k} = r_{i,k}
     + \hat{Q}_{i,k+1} - \hat{Q}_{i,k}.
\end{align}

The tuning law for the critic NN is designed as
\begin{align}\label{e4.2.4}
    W_{c,i}^{k+1} & = W_{c,i}^k - \alpha_{c,i}\frac{\partial E_{c,i,k}}{\partial e_{c,i,k}}\frac{\partial e_{c,i,k}}{\partial W_{c,i}^k}\nonumber\\
    & = W_{c,i}^k - \alpha_{c,i}e_{c,i,k}(z_{i,k+1}z_{i,k+1}^\top-z_{i,k}z_{i,k}^\top),
\end{align}
where $\alpha_{c,i}$ is the learning rate of the $i^{th}$ critic NN.

The actor NN to approximate the control policy is designed as
\begin{align}\label{e4.2.5}
    \hat{u}_{i,k} = {W_{a,i}^k}^\top \phi(\delta_{i,k}),
\end{align}
where $W_{a,i}^k \in \mathbb{R}^{m \times p} $ is the actor weights and $\phi(\cdot) \in \mathbb{R}^m$ is the basis function of actor NN. 

The approximation error of the actor is defined as
\begin{align}\label{e4.2.6}
    e_{a,i,k} = \hat{u}_{i,k} - u_{i,k}^{Q},
\end{align}
where $u_{i,k}^{Q}$ is the target control policy given in \eqref{e4.1.2} by replacing $S_i$ with $W_{c,i}^k$.

The objective function of actor can be formalized as follows to minimize the approximation error
\begin{align*}
    E_{a,i,k}=\frac{1}{2}e_{a,i,k}^\top e_{a,i,k}
\end{align*}
via updating the actor network weights
\begin{align}\label{e4.2.7}
    W_{a,i}^{k+1}  & = W_{a,i}^{k} - \alpha_{a,i} \frac{\partial E_{a,i,k}}{\partial e_{a,i,k}} \frac{\partial e_{a,i,k}}{\partial \hat{u}_{i,k}} \frac{\partial \hat{u}_{i,k}}{\partial W_{a,i}^k}\nonumber \\ &
    = W_{a,i}^{k} - \alpha_{a,i} \phi(\delta_{i,k})e_{a,i,k}^\top,
\end{align}
where $\alpha_{a,i}$ is the actor NN learning rate.

Similarly, the disturbance policy is approximated by the policy
\begin{align}\label{e4.2.8}
    \hat{w}_{i,k} = {W_{d,i}^k}^\top \varphi(\delta_{i,k}),
\end{align}
where $W_{d,i}^k \in \mathbb{R}^{o \times p} $ is the disturber weights and $\varphi(\cdot) \in \mathbb{R}^o$ is the basis function of disturber NN. 

The approximation error is defined as
\begin{align}\label{e4.2.9}
    e_{d,i,k} = \hat{w}_{i,k} - w_{i,k}^{Q},
\end{align}
where $w_{i,k}^{Q}$ is the target disturbance policy given in \eqref{e4.1.2} by replacing $S_i$ with $W_{c,i}^k$.

The square residual error is defined for the disturber as 
\begin{align*}
    E_{d,i,k}=\frac{1}{2} e_{d,i,k}^\top e_{d,i,k}.
\end{align*}
by adjusting the NN parameters
\begin{align}\label{e4.2.10}
    W_{d,i}^{k+1}  & = W_{d,i}^{k} - \alpha_{d,i} \frac{\partial E_{d,i,k}}{\partial e_{d,i,k}} \frac{\partial e_{d,i,k}}{\partial \hat{w}_{i,k}} \frac{\partial \hat{w}_{i,k}}{\partial W_{d,i}^k}\nonumber
    \\ & = W_{d,i}^{k} - \alpha_{d,i}   \varphi(\delta_{i,k})e_{d,i,k}^\top,
\end{align}
where $\alpha_{d,i}$ is the disturber network learning rate.

\begin{remark}\label{rem2}
    Note that the online training process and design of actor, disturber and critic NNs require no information of the system dynamics $A$, $B_i$ and $E_i$ of agent $i$. The critic weights should be initialized such that $S_{u_iu_i}$ and $S_{w_iw_i}$ are nonsingular. 
\end{remark}



Let $W_{c,i}^*$, $W_{a,i}^*$ and $W_{d,i}^*$ be the optimal weights for critic NN, actor NN and disturber NN. 
Define the critic NN errors $\Tilde{W}_{c,i}^{k}$, the actor NN errors $\Tilde{W}_{a,i}^{k}$ and the disturber NN errors $\Tilde{W}_{d,i}^{k}$ as
\begin{align}\label{e4.2.11}
    \Tilde{W}_{c,i}^{k} & = W_{c,i}^{k}-W_{c,i}^* \nonumber \\
    \Tilde{W}_{a,i}^{k} & = W_{a,i}^{k}-W_{a,i}^*  \\
    \Tilde{W}_{d,i}^{k} & = W_{d,i}^{k}-W_{d,i}^* \nonumber
\end{align}
Then the dynamics of the critic, actor and disturber NN approximation errors can be obtained as 
\begin{align}\label{e4.2.12}
    \Tilde{W}_{vc,i}^{k+1} & = \Tilde{W}_{vc,i}^{k}- \alpha_{c,i}\eta(z_{i,k}) e_{c,i,k}\\ \label{e4.2.13}
    \Tilde{W}_{va,i}^{k+1} & = \Tilde{W}_{va,i}^{k}- \alpha_{a,i} \phi_v(\delta_{i,k}) e_{a,i,k} \\ \label{e4.2.14}
    \Tilde{W}_{vd,i}^{k+1} & = \Tilde{W}_{vd,i}^{k}-\alpha_{d,i} \varphi_v(\delta_{i,k}) e_{d,i,k}.
\end{align}
where $W_{vc,i}^k = \textnormal{vec}(W_{c,i}^k)$, $\eta(z_{i,k}) = z_{i,k+1} \otimes z_{i,k+1} - z_{i,k} \otimes z_{i,k}$, $W_{va,i}^k = \textnormal{vec}(W_{a,i}^k)$, $\phi_v(\delta_{i,k}) = I_p \otimes \phi(\delta_{i,k})$, $W_{vd,i}^k = \textnormal{vec}(W_{d,i}^k)$ and $\varphi_v(\delta_{i,k}) = I_p \otimes \varphi(\delta_{i,k})$.

To guarantee that $W_{vc,i}^{k}$, $W_{va,i}^{k}$ and $W_{vd,i}^{k}$ converge to $W_{vc,i}^{*}$, $W_{va,i}^{*}$ and $W_{vd,i}^{*}$, the persistence of excitation condition should be satisfied.
\begin{definition} \citep*{ioannou2006adaptive}
    A signal $v_k \in \mathbb{R}^q$ is said to be persistently exciting over an interval $[k, k + T]$ if there exists $\beta > 0$, 
    \begin{align*}
    \mathop{\sum}\nolimits_{l=k}^{k+T} v_l v_l^\top \geq \beta I_q
    \end{align*}
    is satisfied for $\forall k = 0, 1, 2, \cdots$.
\end{definition}

The following assumption is made for showing the convergence of the online method.
\begin{assumption} \label{a4}
Given the NNs \eqref{e4.2.1}, \eqref{e4.2.5} and \eqref{e4.2.8}, the following conditions hold:
\begin{enumerate}[1)] 
    \item The optimal critic NN weights $W_{vc,i}^*$, actor NN weights $W_{va,i}^*$ and disturber NN weights, $W_{vd,i}^*$ are bounded by positive constants $\bar{W}_{c,i}$, $\bar{W}_{a,i}$ and $\bar{W}_{d,i}$.
    \item The activation function $\eta(z_{i,k})$ satisfies the persistence of excitation condition.
    \item The activation functions $\eta(z_{i,k})$, $\phi_v(\delta_{i,k})$ and $\varphi_v(\delta_{i,k})$ are bounded by positive constants $\bar{\eta}_i$, $\bar{\phi}_i$ and $\bar{\varphi}_i$.
    \item The target policies $u_{i,k}^{Q}$ and $w_{i,k}^{Q}$ are bounded by $\bar{u}_i$ and $\bar{w}_i$.
\end{enumerate}
\end{assumption}

\begin{remark}
    Assumption \ref{a4} is a standard assumption in adaptive dynamic programming using neural networks \cite{zhang2014online}.
    Assumption 3.2 can be satisfied by adding probing noise to control input \citep*{li2018off}.
    Assumption 3.3 can be satisfied by taking standard sigmoid, Gaussian or other NN activation functions \citep*{vamvoudakis2010online}. 
    The bounds are only used for showing the convergence of the proposed method and not needed in learning process.
\end{remark}

The convergence of the online model-free method to the optimal control policies and disturbance policies is shown by the next theorem.
\begin{thm} \label{thm3}
Let the critic NN, actor NN and disturber NN be given by \eqref{e4.2.1}, \eqref{e4.2.5} and \eqref{e4.2.8}.
Suppose Assumption \ref{a4} holds $\forall i = 1, 2, \cdots, N$.
Tune the NN weights by \eqref{e4.2.4}, \eqref{e4.2.7} and \eqref{e4.2.10}.
Then the critic weights errors $\Tilde{W}_{c,i}$, actor weights errors $\Tilde{W}_{a,i}$ and disturber weights errors $\Tilde{W}_{d,i}$ are uniformly ultimately bounded (UUB).
Moreover, $\hat{Q}_i$ converges to the approximate optimal solution of the DTHJI equation based on Q-function and $\hat{u}_i$, $\hat{w}_i$ converge to the approximate Nash solution of the cooperative graphical game.
\end{thm}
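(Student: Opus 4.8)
The plan is to construct a composite Lyapunov function built from the three weight-error norms and show that its difference along the coupled error dynamics \eqref{e4.2.12}--\eqref{e4.2.14} is negative outside a compact set. First I would write out the TD error $e_{c,i,k}$ in terms of the critic weight error: substituting the Bellman relation that the optimal weights $W_{c,i}^*$ satisfy into \eqref{e4.2.3} gives $e_{c,i,k} = \eta(z_{i,k})^\top \tilde W_{vc,i}^k + \varepsilon_{c,i,k}$, where $\varepsilon_{c,i,k}$ is a residual NN reconstruction error that is bounded under Assumption \ref{a4}. Similarly, from \eqref{e4.2.6} and \eqref{e4.2.9} the actor/disturber errors split as $e_{a,i,k} = \phi_v(\delta_{i,k})^\top \tilde W_{va,i}^k + \varepsilon_{a,i,k}$ and $e_{d,i,k} = \varphi_v(\delta_{i,k})^\top \tilde W_{vd,i}^k + \varepsilon_{d,i,k}$, where the extra terms collect the difference between the target policies $u_{i,k}^Q, w_{i,k}^Q$ computed from $W_{c,i}^k$ versus from $W_{c,i}^*$; these are Lipschitz in $\tilde W_{vc,i}^k$ on the region where $S_{u_iu_i}$, $S_{w_iw_i}$ stay nonsingular (Remark \ref{rem2}), so they can be bounded by a constant times $\lVert \tilde W_{vc,i}^k\rVert$ plus the NN reconstruction residual.

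Next I would take $L_{i,k} = \frac{1}{\alpha_{c,i}}\lVert \tilde W_{vc,i}^k\rVert^2 + \frac{1}{\alpha_{a,i}}\lVert \tilde W_{va,i}^k\rVert^2 + \frac{1}{\alpha_{d,i}}\lVert \tilde W_{vd,i}^k\rVert^2$ and $L_k = \sum_{i=1}^N L_{i,k}$, and compute $\Delta L_{i,k}$ term by term. For the critic term, the standard discrete-time gradient-descent identity gives $\Delta$ of the critic part equal to $-\alpha_{c,i}(2-\alpha_{c,i}\lVert\eta(z_{i,k})\rVert^2)\,$ times roughly $\lVert e_{c,i,k}\rVert^2$ up to cross terms with $\varepsilon_{c,i,k}$; under the persistence-of-excitation condition in Assumption \ref{a4} and a small-enough learning rate $\alpha_{c,i} < 2/\bar\eta_i^2$, summing over an excitation window yields a strictly negative term $-c_1\lVert\tilde W_{vc,i}^k\rVert^2$. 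The actor and disturber terms are handled the same way using boundedness of $\phi_v,\varphi_v$, giving $-c_2\lVert\tilde W_{va,i}^k\rVert^2 - c_3\lVert\tilde W_{vd,i}^k\rVert^2$. The cross terms and the residuals $\varepsilon_{\bullet,i,k}$ contribute linear-in-$\tilde W$ and constant terms; completing the square absorbs the linear terms into the negative quadratics at the cost of enlarging the constant. This delivers $\Delta L_k \le -c\lVert \tilde{\mathcal W}_k\rVert^2 + d$ for positive constants $c,d$, hence $\tilde{\mathcal W}_k = \col(\tilde W_{vc,i}^k,\tilde W_{va,i}^k,\tilde W_{vd,i}^k)_{i=1}^N$ enters and stays in the ball $\lVert\tilde{\mathcal W}_k\rVert \le \sqrt{d/c}$, i.e. UUB. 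Finally, from the UUB bound on $\tilde W_{vc,i}$ and the quadratic parametrization \eqref{e4.2.1}, $\hat Q_i$ is within $O(\sqrt{d/c}\,\lVert z_{i,k}\rVert^2 + \bar\varepsilon)$ of the DTHJI-optimal $Q_i^*$, and since $\hat u_i,\hat w_i$ track $u_{i,k}^Q,w_{i,k}^Q$ which are continuous functions of the critic weights, they converge to a neighborhood of the Nash policies \eqref{e2.2.14}; invoking Lemma \ref{l3} identifies that limit as the approximate Nash solution.

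The main obstacle I anticipate is the coupling: $e_{a,i,k}$ and $e_{d,i,k}$ depend on $\tilde W_{vc,i}^k$ through the target policies, and the target policies \eqref{e4.1.2} involve inverses of critic sub-blocks, so the Lipschitz bound on the $\varepsilon_{a},\varepsilon_d$ terms is only valid on a region where $S_{u_iu_i}-S_{u_iw_i}S_{w_iw_i}^{-1}S_{w_iu_i}$ (and its dual) is uniformly nonsingular. Closing the argument cleanly therefore requires either an a priori invariance argument confining $W_{c,i}^k$ to such a region, or choosing the ultimate bound $\sqrt{d/c}$ small (via small learning rates and rich excitation) so that the ball around $W_{c,i}^*$ stays inside it — a bootstrap that must be stated carefully. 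A secondary technical point is that, because all $N$ agents update simultaneously, the neighbor signals $u_{-i},w_{-i}$ appearing in $z_{i,k}$ are themselves time-varying NN outputs; one must argue the regressors $\eta(z_{i,k})$ remain bounded and persistently exciting along the closed loop, which is where Assumption \ref{a4}(2)--(3) and the probing noise are used.
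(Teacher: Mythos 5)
Your proposal is correct and follows essentially the same route as the paper's proof: a composite Lyapunov function of the critic, actor and disturber weight errors, expansion of its first difference along the gradient-descent updates \eqref{e4.2.4}, \eqref{e4.2.7}, \eqref{e4.2.10}, bounding of the residual terms via the boundedness items of Assumption \ref{a4}, and use of the persistence-of-excitation condition to conclude UUB of the weight errors and hence convergence of $\hat Q_i$, $\hat u_i$, $\hat w_i$ to the approximate DTHJI/Nash solution. The only notable differences are in two sub-steps: the paper first shows $\Delta L_i^k<0$ outside a ball for the regressor-projected vector $Z_{i,k}=\col\bigl(\tilde W_{vc,i}^{k\top}\eta(z_{i,k}),\tilde W_{va,i}^{k\top}\phi_v(\delta_{i,k}),\tilde W_{vd,i}^{k\top}\varphi_v(\delta_{i,k})\bigr)$ and then lifts this to the full weight errors via a uniform-complete-observability (virtual output-feedback system) argument based on PE, rather than summing over the excitation window inside the Lyapunov difference as you suggest, and it dispenses with your anticipated Lipschitz/nonsingularity obstacle by directly assuming the target policies $u_{i,k}^{Q}$, $w_{i,k}^{Q}$ are bounded (Assumption \ref{a4}, item 4), together with the nonsingular critic initialization noted in Remark \ref{rem2}.
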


\begin{proof}
See Appendix.
\end{proof}

\subsection{\wxy{Online implementation for non-cooperative graphical game}}
\subsubsection{\wxy{Q-function-based optimal policies}}
\wxy{
To implement the online model-free PI method, the form of the optimal policies can be used as the target policies to update the actor, disturber and adversary.
Considering the quadratic form of the Q-function, let $\mathcal{S}_i$ be the kernel matrix such that
\begin{align*}
    \mathcal{Q}_{i,k}  =&\ z_{i,k}^\top \mathcal{S}_iz_{i,k} \\
    = &\ z_{i,k}^\top
    \begin{bmatrix}
        \mathcal{S}_{\delta_i \delta_i} & \mathcal{S}_{\delta_i u_i} & \mathcal{S}_{\delta_i u_{-i}} & \mathcal{S}_{\delta_i w_i} & \mathcal{S}_{\delta_i w_{-i}} \\ 
        \mathcal{S}_{u_i \delta_i} & \mathcal{S}_{u_i u_i} & \mathcal{S}_{u_i u_{-i}} & \mathcal{S}_{u_i w_i} & \mathcal{S}_{u_i w_{-i}} \\
        \mathcal{S}_{u_{-i} \delta_i} & \mathcal{S}_{u_{-i} u_i} & \mathcal{S}_{u_{-i} u_{-i}} & \mathcal{S}_{u_{-i} w_i} & \mathcal{S}_{u_{-i} w_{-i}}\\
        \mathcal{S}_{w_i \delta_i} & \mathcal{S}_{w_i u_i} & \mathcal{S}_{w_i u_{-i}} & \mathcal{S}_{w_i w_i} & \mathcal{S}_{w_i w_{-i}}\\
        \mathcal{S}_{w_{-i} \delta_i} & \mathcal{S}_{w_{-i} u_i} & \mathcal{S}_{w_{-i} u_{-i}} & \mathcal{S}_{w_{-i} w_i} & \mathcal{S}_{w_{-i} w_{-i}}
    \end{bmatrix}
        z_{i,k},
\end{align*}
where $z_{i,k} = \col(\delta_{i,k},u_{i,k},u_{-i,k},w_{i,k},w_{-i,k})$.}

\wxy{
Using the optimal condition $\frac{\partial \mathcal{Q}_{i,k}}{\partial u_{i,k}}=0$ and $\frac{\partial \mathcal{Q}_{i,k}}{\partial w_{i,k}}=0$, $u_{i,k}^*$ and $w_{i,k}^*$ can be calculated as
\begin{align} \label{non_target}
    \begin{bmatrix}
        u_{i,k}^* \\
        u_{-i,k}^* \\
        w_{i,k}^* \\
        w_{-i,k}^*
    \end{bmatrix}
    = -
    \begin{bmatrix}
        \mathcal{S}_{u_i u_i} & \mathcal{S}_{u_i u_{-i}} & \mathcal{S}_{u_i w_i} & \mathcal{S}_{u_i w_{-i}} \\
        \mathcal{S}_{u_{-i} u_i} & \mathcal{S}_{u_{-i} u_{-i}} & \mathcal{S}_{u_{-i} w_i} & \mathcal{S}_{u_{-i} w_{-i}}\\
        \mathcal{S}_{w_i u_i} & \mathcal{S}_{w_i u_{-i}} & \mathcal{S}_{w_i w_i} & \mathcal{S}_{w_i w_{-i}}\\
        \mathcal{S}_{w_{-i} u_i} & \mathcal{S}_{w_{-i} u_{-i}} & \mathcal{S}_{w_{-i} w_i} & \mathcal{S}_{w_{-i} w_{-i}}
    \end{bmatrix}^{-1}
    \begin{bmatrix}
        \mathcal{S}_{u_i \delta_i} \\
        \mathcal{S}_{u_{-i} \delta_i} \\
        \mathcal{S}_{w_i \delta_i} \\
        \mathcal{S}_{w_{-i} \delta_i}
    \end{bmatrix}
    \delta_{i,k}.
\end{align}
}
\subsubsection{\wxy{Actor-disturber-adversary-critic neural network}}
\wxy{
In this section, critic network $\hat{\mathcal{Q}}_{i,k}$, actor networks $\hat{u}_{i,k}$, disturber networks $\hat{w}_{i,k}$ and adversary networks $\hat{u}_{ij,k}, \hat{w}_{ij,k}$ are used to approximate the Q-value function $\mathcal{Q}_{i,k}$, control policy $u_{i,k}$, disturbance input policy $w_{i,k}$ and adversary input policy $u_{ij,k}, w_{ij,k}$ for each agent $i$ and its neighbor $j$, $j \in N_i$, respectively.
The Q-function is approximated by 
\begin{align}\label{non_NN_1}
    \hat{\mathcal{Q}}_{i,k} = \Breve{z}_{i,k}^\top \mathcal{W}_{c,i}^k \Breve{z}_{i,k},
\end{align}
where $\Breve{z}_{i,k} = \col(\delta_{i,k},\hat{u}_{i,k},\hat{u}_{ij,k},\hat{w}_{i,k},\hat{w}_{ij,k})$, $j \in N_i$, and $\mathcal{W}_{c,i}^k$ is the critic weights.
The objective of the critic network is to minimize the square residual error
\begin{align*}
    \mathcal{E}_{c,i,k} = \frac{1}{2}\Breve{e}_{c,i,k}^\top \Breve{e}_{c,i,k},
\end{align*}
where the temporal difference (TD) error is defined as
\begin{align*}
    \Breve{e}_{c,i,k}  &  = \Breve{r}_{i,k}
     + \hat{\mathcal{Q}}_{i,k+1} - \hat{\mathcal{Q}}_{i,k}.
\end{align*}
}

\wxy{
The tuning law for the critic NN is designed as
\begin{align}\label{non_NN_tune_1}
    \mathcal{W}_{c,i}^{k+1} & = \mathcal{W}_{c,i}^k - \Breve{\alpha}_{c,i}\frac{\partial \mathcal{E}_{c,i,k}}{\partial \Breve{e}_{c,i,k}}\frac{\partial \Breve{e}_{c,i,k}}{\partial \mathcal{W}_{c,i}^k}\nonumber\\
    & = \mathcal{W}_{c,i}^k - \Breve{\alpha}_{c,i} \Breve{e}_{c,i,k}(\Breve{z}_{i,k+1}\Breve{z}_{i,k+1}^\top-\Breve{z}_{i,k}\Breve{z}_{i,k}^\top),
\end{align}
where $\Breve{\alpha}_{c,i}$ is the learning rate of the $i^{th}$ critic NN.}

\wxy{
The actor NN to approximate the control policy is designed as
\begin{align}\label{non_NN_2}
    \hat{u}_{i,k} = {\mathcal{W}_{a,i}^k}^\top \Breve{\phi}(\delta_{i,k}),
\end{align}
where $\mathcal{W}_{a,i}^k \in \mathbb{R}^{m \times p} $ is the actor weights and $\Breve{\phi}(\cdot) \in \mathbb{R}^m$ is the basis function of actor NN. }

\wxy{
The approximation error of the actor is defined as
\begin{align*}
    \Breve{e}_{a,i,k} = \hat{u}_{i,k} - u_{i,k}^{\mathcal{Q}},
\end{align*}
where $u_{i,k}^{\mathcal{Q}}$ is the target control policy given in \eqref{non_target} by replacing $\mathcal{S}_i$ with $\mathcal{W}_{c,i}^k$.}

\wxy{
The objective function of the actor can be formalized as follows to minimize the approximation error
\begin{align*}
    \mathcal{E}_{a,i,k}=\frac{1}{2}\Breve{e}_{a,i,k}^\top \Breve{e}_{a,i,k}
\end{align*}
via updating the actor network weights
\begin{align}\label{non_NN_tune_2}
    \mathcal{W}_{a,i}^{k+1}  & = \mathcal{W}_{a,i}^{k} - \Breve{\alpha}_{a,i} \frac{\partial \mathcal{E}_{a,i,k}}{\partial \Breve{e}_{a,i,k}} \frac{\partial \Breve{e}_{a,i,k}}{\partial \hat{u}_{i,k}} \frac{\partial \hat{u}_{i,k}}{\partial \mathcal{W}_{a,i}^k}\nonumber \\ &
    = \mathcal{W}_{a,i}^{k} - \Breve{\alpha}_{a,i} \Breve{\phi}(\delta_{i,k})\Breve{e}_{a,i,k}^\top,
\end{align}
where $\Breve{\alpha}_{a,i}$ is the actor NN learning rate.}

\wxy{
Similarly, the disturbance policy is approximated by the policy
\begin{align}\label{non_NN_3}
    \hat{w}_{i,k} = {\mathcal{W}_{d,i}^k}^\top \Breve{\varphi}(\delta_{i,k}),
\end{align}
where $W_{d,i}^k \in \mathbb{R}^{o \times p} $ is the disturber weights and $\Breve{\varphi}(\cdot) \in \mathbb{R}^o$ is the basis function of disturber NN.} 

\wxy{
The approximation error is defined as
\begin{align*}
    \Breve{e}_{d,i,k} = \hat{w}_{i,k} - w_{i,k}^{\mathcal{Q}},
\end{align*}
where $w_{i,k}^{\mathcal{Q}}$ is the target disturbance policy given in \eqref{non_target} by replacing $\mathcal{S}_i$ with $\mathcal{W}_{c,i}^k$.}

\wxy{
The square residual error is defined for the disturber as 
\begin{align*}
    \mathcal{E}_{d,i,k}=\frac{1}{2} \Breve{e}_{d,i,k}^\top \Breve{e}_{d,i,k}.
\end{align*}
by adjusting the NN parameters
\begin{align}\label{non_NN_tune_3}
    \mathcal{W}_{d,i}^{k+1}  & = \mathcal{W}_{d,i}^{k} - \Breve{\alpha}_{d,i} \frac{\partial E_{d,i,k}}{\partial \Breve{e}_{d,i,k}} \frac{\partial \Breve{e}_{d,i,k}}{\partial \hat{w}_{i,k}} \frac{\partial \hat{w}_{i,k}}{\partial \mathcal{W}_{d,i}^k}\nonumber
    \\ & = \mathcal{W}_{d,i}^{k} - \Breve{\alpha}_{d,i}   \Breve{\varphi}(\delta_{i,k})\Breve{e}_{d,i,k}^\top,
\end{align}
where $\Breve{\alpha}_{di}$ is the disturber network learning rate.}

\wxy{
The worst-case input and disturbance policy of agent $i$'s neighborhood are approximated by 
\begin{align}
    \hat{u}_{ij,k} = {\mathcal{W}_{a, ij}^k}^\top \Breve{\sigma}(\delta_{i,k}) \label{non_NN_4}, \\
    \hat{w}_{ij,k} = {\mathcal{W}_{d, ij}^k}^\top \Breve{\psi}(\delta_{i,k}) \label{non_NN_5}
\end{align}
for all $j \in N_i$, where $\mathcal{W}_{a, ij} \in \mathbb{R}^{m \times p}$ and $\mathcal{W}_{d, ij} \in \mathbb{R}^{o \times p}$ are the adversary NN weights of neighbour $j$ with respect to agent $i$, $\Breve{\sigma}(\cdot) \in \mathbb{R}^m$ and  $\Breve{\psi}(\cdot) \in \mathbb{R}^o$ are the basis functions.}  

\wxy{
By defining approximation errors
\begin{align*}
    \Breve{e}_{a, ij, k} = \hat{u}_{ij,k} - u_{ij,k}^{\mathcal{Q}}, \\
    \Breve{e}_{d, ij, k} = \hat{w}_{ij,k} - d_{ij,k}^{\mathcal{Q}},
\end{align*}
the square residual error for each adversarial input and disturbance NN of agent $i$ are given as
\begin{align*}
    \mathcal{E}_{a, ij, k}=\frac{1}{2} \Breve{e}_{a, ij, k}^\top \Breve{e}_{a, ij, k}, \\
    \mathcal{E}_{d, ij, k}=\frac{1}{2} \Breve{e}_{d, ij, k}^\top \Breve{e}_{d, ij, k},
\end{align*}
where $u_{ij,k}^{\mathcal{Q}}$ and $d_{ij,k}^{\mathcal{Q}}$ are the target adversary policies given in \eqref{non_target} by replacing $\mathcal{S}_i$ with $\mathcal{W}_{c,i}^k$.
To minimize $\mathcal{E}_{a, ij, k}$ and $\mathcal{E}_{d, ij, k}$, the update law for the adversary NNs are designed as 
\begin{align}
    \mathcal{W}_{a,ij}^{k+1}  & = \mathcal{W}_{a,ij}^{k} - \Breve{\alpha}_{a,ij} \frac{\partial E_{a,ij,k}}{\partial \Breve{e}_{a,ij,k}} \frac{\partial \Breve{e}_{a,ij,k}}{\partial \hat{u}_{ij,k}} \frac{\partial \hat{u}_{ij,k}}{\partial \mathcal{W}_{a,ij}^k}\nonumber \\ 
    & = \mathcal{W}_{a,ij}^{k} - \Breve{\alpha}_{a,ij} \Breve{\sigma}  (\delta_{i,k})\Breve{e}_{a,ij,k}^\top, \label{non_NN_tune_4} \\
    \mathcal{W}_{d,ij}^{k+1}  & = \mathcal{W}_{d,ij}^{k} - \Breve{\alpha}_{d,ij} \frac{\partial E_{d,ij,k}}{\partial \Breve{e}_{d,ij,k}} \frac{\partial \Breve{e}_{d,ij,k}}{\partial \hat{w}_{ij,k}} \frac{\partial \hat{w}_{ij,k}}{\partial \mathcal{W}_{d,ij}^k}\nonumber \\ 
    & = \mathcal{W}_{d,ij}^{k} - \Breve{\alpha}_{d,ij}   \Breve{\psi}(\delta_{i,k})\Breve{e}_{d,ij,k}^\top \label{non_NN_tune_5},
\end{align}
where $\Breve{\alpha}_{a,ij}$ and $\Breve{\alpha}_{d,ij}$ are the adversary input and disturber network learning rate, respectively.}

\wxy{
Let $ \mathcal{W}_{c,i}^*$, $ \mathcal{W}_{a,i}^*$, $ \mathcal{W}_{d,i}^*$, $ \mathcal{W}_{a,ij}^*$ and $ \mathcal{W}_{d,ij}^*$ be the optimal weights for critic NN, actor NN, disturber NN and adversary NNs.
Define the critic NN errors $\Tilde{ \mathcal{W}}_{c,i}^{k}$, the actor NN errors $\Tilde{ \mathcal{W}}_{a,i}^{k}$, the disturber NN errors $\Tilde{ \mathcal{W}}_{d,i}^{k}$ and the adeversary NN errors $\Tilde{ \mathcal{W}}_{a,ij}^{k}, \Tilde{ \mathcal{W}}_{d,ij}^{k}$ as
$\Tilde{\mathcal{W}}_{c,i}^{k} =\mathcal{W}_{c,i}^{k}-\mathcal{W}_{c,i}^*,
    \Tilde{\mathcal{W}}_{a,i}^{k} =\mathcal{W}_{a,i}^{k}-\mathcal{W}_{a,i}^*, 
    \Tilde{\mathcal{W}}_{d,i}^{k} =\mathcal{W}_{d,i}^{k}-\mathcal{W}_{d,i}^*, 
    \Tilde{\mathcal{W}}_{a,i}^{k} =\mathcal{W}_{a,ij}^{k}-\mathcal{W}_{a,ij}^*,  
    \Tilde{\mathcal{W}}_{d,i}^{k} =\mathcal{W}_{d,ij}^{k}-\mathcal{W}_{d,ij}^*$.
Then the dynamics of the critic, actor and disturber NN approximation errors can be obtained as 
\begin{align*}\label{e4.2.12}
    \Tilde{\mathcal{W}}_{vc,i}^{k+1} & = \Tilde{\mathcal{W}}_{vc,i}^{k}- \Breve{\alpha}_{c,i}\Breve{\eta}(\Breve{z}_{i,k}) \Breve{e}_{c,i,k},\\ 
    \Tilde{\mathcal{W}}_{va,i}^{k+1} & = \Tilde{\mathcal{W}}_{va,i}^{k}- \Breve{\alpha}_{a,i} \Breve{\phi}_v(\delta_{i,k}) \Breve{e}_{a,i,k}, \\
    \Tilde{\mathcal{W}}_{vd,i}^{k+1} & = \Tilde{\mathcal{W}}_{vd,i}^{k}- \Breve{\alpha}_{d,i} \Breve{\varphi}_v(\delta_{i,k}) \Breve{e}_{d,i,k}, \\
    \Tilde{\mathcal{W}}_{va,ij}^{k+1} & = \Tilde{\mathcal{W}}_{va,ij}^{k}- \Breve{\alpha}_{a,ij} \Breve{\sigma}_v(\delta_{i,k}) \Breve{e}_{a,i,k}, \\
    \Tilde{\mathcal{W}}_{vd,ij}^{k+1} & = \Tilde{\mathcal{W}}_{vd,ij}^{k}- \Breve{\alpha}_{d,ij} \Breve{\psi}_v(\delta_{i,k}) \Breve{e}_{d,ij,k}.
\end{align*}
where $\Tilde{\mathcal{W}}_{vc,i}^k = \textnormal{vec}(\Tilde{\mathcal{W}}_{c,i}^k)$, $\Breve{\eta}(z_{i,k}) = \Breve{z}_{i,k+1} \otimes \Breve{z}_{i,k+1} - \Breve{z}_{i,k} \otimes \Breve{z}_{i,k}$, $\Tilde{\mathcal{W}}_{va,i}^k = \textnormal{vec}(W_{a,i}^k)$, $\Breve{\phi}_v(\delta_{i,k}) = I_p \otimes \Breve{\phi}(\delta_{i,k})$, $\Tilde{\mathcal{W}}_{vd,i}^k = \textnormal{vec}(W_{d,i}^k)$, $\Breve{\varphi}_v(\delta_{i,k}) = I_p \otimes \Breve{\varphi}(\delta_{i,k})$, $\Tilde{\mathcal{W}}_{va,ij}^k = \textnormal{vec}(W_{a,ij}^k)$, $\Breve{\sigma}_v(\delta_{i,k}) = I_p \otimes \Breve{\sigma}(\delta_{i,k})$, $\Tilde{\mathcal{W}}_{vd,ij}^k = \textnormal{vec}(W_{d,ij}^k)$ and $\Breve{\psi}_v(\delta_{i,k}) = I_p \otimes \Breve{\psi}(\delta_{i,k})$.
}

\wxy{
The following assumption is made for showing the convergence of the online method.
\begin{assumption} \label{a5}
Given the NNs \eqref{non_NN_1}, \eqref{non_NN_2}, \eqref{non_NN_3}, \eqref{non_NN_4} and \eqref{non_NN_5}, the following conditions hold:
\begin{itemize} 
    \item[\textbf{1)}] The optimal critic NN weights $\mathcal{W}_{c,i}^*$, actor NN weights $\mathcal{W}_{a,i}^*$, disturber NN weights $\mathcal{W}_{d,i}^*$ and adversary NN weights $\mathcal{W}_{a,ij}^*, \mathcal{W}_{d,ij}^*$ are bounded by positive constants $\bar{\mathcal{W}}_{c,i}$, $\bar{\mathcal{W}}_{a,i}$, $\bar{\mathcal{W}}_{d,i}$, $\bar{\mathcal{W}}_{a,ij}$ and $\bar{\mathcal{W}}_{d,ij}$.
    \item[\textbf{2)}] The activation function $\Breve{\eta}(z_{i,k})$ satisfies the persistence of excitation condition.
    \item[\textbf{3)}] The activation functions $\Breve{\eta}(z_{i,k})$, $\Breve{\phi}_v(\delta_{i,k})$, $\Breve{\varphi}_v(\delta_{ik})$, $\Breve{\sigma}_v(\delta_{i,k})$ and $\Breve{\psi}_v(\delta_{ik})$ are bounded by positive constants $\bar{\Breve{\eta}}_i$, $\bar{\Breve{\phi}}_i$, $\bar{\Breve{\varphi}}_i$, $\bar{\Breve{\sigma}}_i$ and $\bar{\Breve{\psi}}_i$.
    \item[\textbf{4)}] The target policies $\Breve{u}_{i,k}^{\mathcal{Q}}$, $\Breve{w}_{i,k}^{\mathcal{Q}}$, $\Breve{u}_{ij,k}^{\mathcal{Q}}$ and $\Breve{w}_{ij,k}^{\mathcal{Q}}$ are bounded by $\bar{\Breve{u}}_i$, $\bar{\Breve{w}}_i$, $\bar{\Breve{u}}_{ij}$ and $\bar{\Breve{w}}_{ij}$ for all $j \in N_i$.
\end{itemize}
\end{assumption}
}

\wxy{
The convergence of the online Q-learning method to the optimal control policies, the worst-case disturbance policies and the worst-case adversary policies is shown by the next theorem.
\begin{thm} \label{thm4}
Let the critic NN, actor NN and disturber NN be given by \eqref{non_NN_1}, \eqref{non_NN_2}, \eqref{non_NN_3}, \eqref{non_NN_4} and \eqref{non_NN_5}.
Suppose Assumption \ref{a5} holds $\forall i = 1, 2, \cdots, N$.
Tune the NN weights by \eqref{non_NN_tune_1}, \eqref{non_NN_tune_2}, \eqref{non_NN_tune_3}, \eqref{non_NN_tune_4} and \eqref{non_NN_tune_5}.
Then the critic weights errors $\Tilde{W}_{c,i}$, actor weights errors $\Tilde{W}_{a,i}$, disturber weights errors $\Tilde{W}_{d,i}$, and adversary weights errors $\Tilde{ \mathcal{W}}_{a,ij}, \Tilde{ \mathcal{W}}_{d,ij}$ are uniformly ultimately bounded (UUB), for $j = 1, 2, \cdots, N$.
Moreover, $\hat{Q}_i$ converges to the approximate optimal solution of the DTHJ equation based on the Q-function and $\hat{u}_i$ converges to the approximate minmax solution of the non-cooperative graphical game.
\end{thm}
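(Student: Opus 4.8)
The plan is to mirror the Lyapunov argument used for Theorem~\ref{thm3}, enlarged to accommodate the two families of adversary networks. First I would introduce the composite weight-error Lyapunov candidate
\begin{align*}
L_k = \sum_{i=1}^{N}\left( \frac{\lVert \Tilde{\mathcal{W}}_{vc,i}^{k}\rVert^2}{\Breve{\alpha}_{c,i}} + \frac{\lVert \Tilde{\mathcal{W}}_{va,i}^{k}\rVert^2}{\Breve{\alpha}_{a,i}} + \frac{\lVert \Tilde{\mathcal{W}}_{vd,i}^{k}\rVert^2}{\Breve{\alpha}_{d,i}} + \sum_{j\in N_i}\frac{\lVert \Tilde{\mathcal{W}}_{va,ij}^{k}\rVert^2}{\Breve{\alpha}_{a,ij}} + \sum_{j\in N_i}\frac{\lVert \Tilde{\mathcal{W}}_{vd,ij}^{k}\rVert^2}{\Breve{\alpha}_{d,ij}}\right),
\end{align*}
and evaluate its first difference $\Delta L_k$ along the five weight-error recursions. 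Writing the TD error as $\Breve{e}_{c,i,k}=\Breve{\eta}(\Breve{z}_{i,k})^\top\Tilde{\mathcal{W}}_{vc,i}^{k}+\varepsilon_{c,i,k}$, where $\varepsilon_{c,i,k}$ is the bounded Bellman/NN reconstruction residual associated with the optimal critic weights, the critic term contributes $-\Breve{\alpha}_{c,i}(2-\Breve{\alpha}_{c,i}\bar{\Breve{\eta}}_i^{2})\lVert\Breve{\eta}^\top\Tilde{\mathcal{W}}_{vc,i}^{k}\rVert^2$ plus a term proportional to $\lVert\varepsilon_{c,i,k}\rVert^2$; summing over a persistence-of-excitation window $[k,k+T]$ and using Assumption~\ref{a5}, item~2) (so that $\sum_l \Breve{\eta}_l\Breve{\eta}_l^\top\ge\beta I$) turns this into a genuinely negative definite contribution $-c_{c,i}\lVert\Tilde{\mathcal{W}}_{vc,i}^{k}\rVert^2$ modulo a constant.

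For the actor, disturber and adversary terms I would expand each approximation error; e.g.\ $\Breve{e}_{a,i,k}=\Tilde{\mathcal{W}}_{a,i}^{k\top}\Breve{\phi}(\delta_{i,k})-(u_{i,k}^{\mathcal{Q}}-u_{i,k}^{*})-\varepsilon_{a,i,k}$, and similarly for $\Breve{e}_{d,i,k}$, $\Breve{e}_{a,ij,k}$, $\Breve{e}_{d,ij,k}$. The crucial point is that $u_{i,k}^{\mathcal{Q}}$, $w_{i,k}^{\mathcal{Q}}$, $u_{ij,k}^{\mathcal{Q}}$, $w_{ij,k}^{\mathcal{Q}}$ are obtained from the current critic weights through the matrix formula \eqref{non_target}, so on the set where the inverted block of $\mathcal{W}_{c,i}^{k}$ stays nonsingular (guaranteed near $\mathcal{W}_{c,i}^{*}$, cf.\ the initialization requirement in Remark~\ref{rem2}) one has Lipschitz-type bounds $\lVert u_{i,k}^{\mathcal{Q}}-u_{i,k}^{*}\rVert\le\ell_{a,i}\lVert\Tilde{\mathcal{W}}_{c,i}^{k}\rVert$, and likewise for the other three target policies. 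Each policy term then contributes $-\Breve{\alpha}_{\bullet}(2-\Breve{\alpha}_{\bullet}\bar{\Breve{\cdot}}_i^{2})\lVert(\text{regressor})^\top\Tilde{\mathcal{W}}_\bullet^{k}\rVert^2 + c_\bullet\lVert\Tilde{\mathcal{W}}_{c,i}^{k}\rVert^2 + (\text{const})$. Collecting everything, applying Young's inequality to the cross terms that couple the policy errors to the critic error, and choosing the learning rates $\Breve{\alpha}_{c,i},\Breve{\alpha}_{a,i},\Breve{\alpha}_{d,i},\Breve{\alpha}_{a,ij},\Breve{\alpha}_{d,ij}$ small enough—in particular small relative to $\beta/T$ and to the activation bounds of Assumption~\ref{a5}, item~3)—makes $\Delta L_k$ (in the summed, interval sense) bounded above by a negative definite quadratic form in the stacked error vector plus a constant, so all errors are UUB by the standard discrete-time Lyapunov UUB theorem. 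Once the weights are inside this ultimate bound, $\hat{\mathcal{Q}}_i$ satisfies the Q-function DTHJ equation up to a small residual and hence approximates its optimal solution; invoking Lemma~\ref{l4}, which characterizes the distributed minmax policy via \eqref{non_target}, together with the convergence of the ideal PI in Theorem~\ref{thm2}, $\hat{u}_i$ converges to a neighborhood of the distributed minmax solution (and $\hat{w}_i$, $\hat{u}_{ij}$, $\hat{w}_{ij}$ to the worst-case policies).

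The main obstacle is the critic--policy coupling compounded by the increased number of networks: because every target policy in \eqref{non_target} depends on the critic weights through a matrix inverse, the four policy-error recursions are all driven by $\Tilde{\mathcal{W}}_{c,i}^{k}$, and the Lipschitz bounds needed to control $u^{\mathcal{Q}}-u^{*}$, $w^{\mathcal{Q}}-w^{*}$, $u_{ij}^{\mathcal{Q}}-u_{ij}^{*}$, $w_{ij}^{\mathcal{Q}}-w_{ij}^{*}$ are only valid on the region where the relevant block of $\mathcal{W}_{c,i}^{k}$ remains invertible, so one must confine the analysis to a forward-invariant compact set. Making the resulting composite quadratic form negative definite forces a simultaneous smallness and ordering condition on the $2+2|N_i|$ learning rates of every agent; verifying that such a choice is feasible at once for all $i$ and all $j\in N_i$, and reconciling the per-step recursions with the window-based positivity supplied by persistence of excitation, is the delicate bookkeeping, rather than any single inequality.
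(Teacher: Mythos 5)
Your proposal is correct in its overall architecture and follows the same route the paper intends: Theorem~\ref{thm4} is proved in the paper only by the remark that it is ``similar to the proof of Theorem~\ref{thm3}'', i.e.\ a weighted sum-of-squared-weight-errors Lyapunov candidate, a bound on its first difference, negativity outside a ball, and then the UUB and approximate-optimality conclusions — exactly your skeleton, enlarged by the adversary terms. Where you genuinely deviate is in two sub-steps. First, you control the target-policy terms by deriving Lipschitz-type bounds $\lVert u^{\mathcal Q}-u^{*}\rVert\le\ell\lVert\Tilde{\mathcal W}_{c,i}^{k}\rVert$ through the matrix inverse in \eqref{non_target}, which forces you to restrict to a set where the inverted block stays nonsingular and to argue forward invariance; the paper sidesteps all of this by Assumption~\ref{a5}, item~4), which directly bounds $u_{i,k}^{\mathcal Q}$, $w_{i,k}^{\mathcal Q}$, $u_{ij,k}^{\mathcal Q}$, $w_{ij,k}^{\mathcal Q}$ by constants, so the cross terms are handled with Cauchy--Schwarz/Young exactly as in \eqref{e4.3.11}--\eqref{e4.3.12} and no invertibility or invariance argument is needed. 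Second, you invoke persistence of excitation by summing $\Delta L_k$ over a window to make the critic contribution negative definite in $\Tilde{\mathcal W}_{vc,i}^{k}$ itself, whereas the paper's Theorem~\ref{thm3} proof only obtains negativity in the projected errors $Z_{i,k}=\col(\Tilde W_{vc,i}^{k\top}\eta,\Tilde W_{va,i}^{k\top}\phi_v,\Tilde W_{vd,i}^{k\top}\varphi_v)$, concludes $Z_{i,k}$ is UUB, and then lifts this to the full weight errors through the virtual system \eqref{e4.3.21} and the PE/uniform-complete-observability argument. Your variant is more self-contained (it does not lean on the boundedness of the targets as an assumption) but pays for it with the invertibility caveat you acknowledge and do not fully close; the paper's variant is simpler and is fully covered by its stated assumptions. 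Given Assumption~\ref{a5}, you could drop the Lipschitz/invariance machinery entirely and your proof collapses to the paper's, so there is no substantive gap — only an unnecessarily hard path on one step.
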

}

\begin{proof}
\wxy{The proof is similar to the proof of Theorem \ref{thm3} and thus is omitted. }
\end{proof}

\section{Simulation results}\label{Sec Sim}

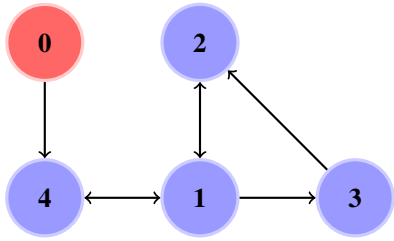
\begin{figure}
\centering
\begin{tikzpicture}[transform shape,scale=1]
    \centering%
    \node (0) [circle, draw=red!20, fill=red!60, very thick, minimum size=10mm] {\textbf{0}};
     \node (4) [circle, below=of 0, draw=blue!20, fill=blue!40, very thick, minimum size=10mm] {\textbf{4}};
    \node (1) [circle, right=of 4, draw=blue!20, fill=blue!40, very thick, minimum size=10mm] {\textbf{1}};
    \node (2) [circle, above=of 1, draw=blue!20, fill=blue!40, very thick, minimum size=10mm] {\textbf{2}};
    \node (3) [circle, right=of 1, draw=blue!20, fill=blue!40, very thick, minimum size=10mm] {\textbf{3}};
       \draw[thick,->] (0)--(4);
      \draw[thick,<->, left] (4) edge (1);
      \draw[thick,->, left] (1) edge (3);
        \draw[thick,->, left] (3) edge (2);
        \draw[thick,<->, left] (1) edge (2);
\end{tikzpicture}
 \caption{Topology structure of MAS}
    \label{fig1}
\end{figure}


\begin{figure}[htbp]
    \centering
    \subfigure[States]
        {\includegraphics[width=0.49\textwidth]{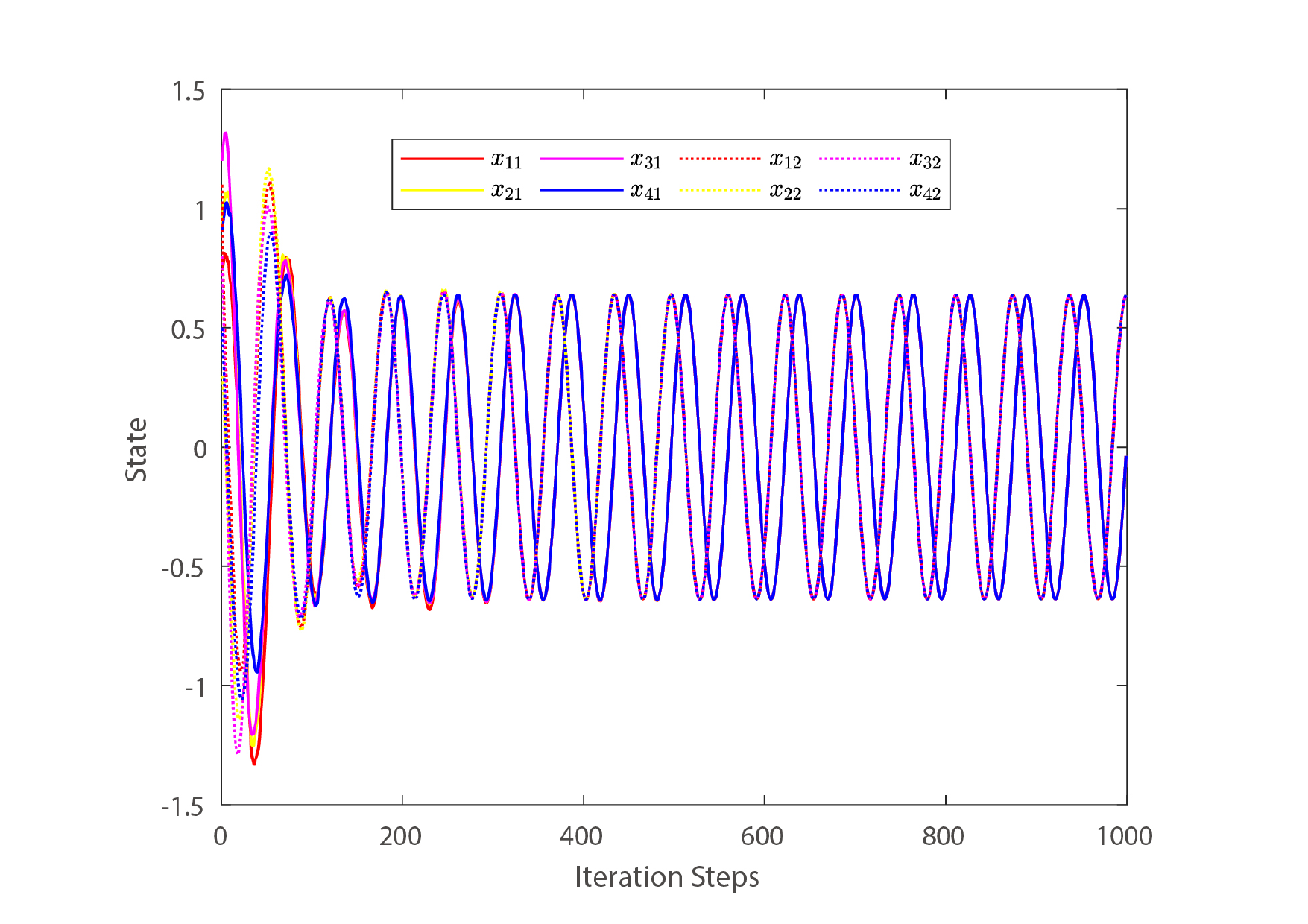}}
    \subfigure[Synchronization errors]
        {\includegraphics[width=0.49\textwidth]{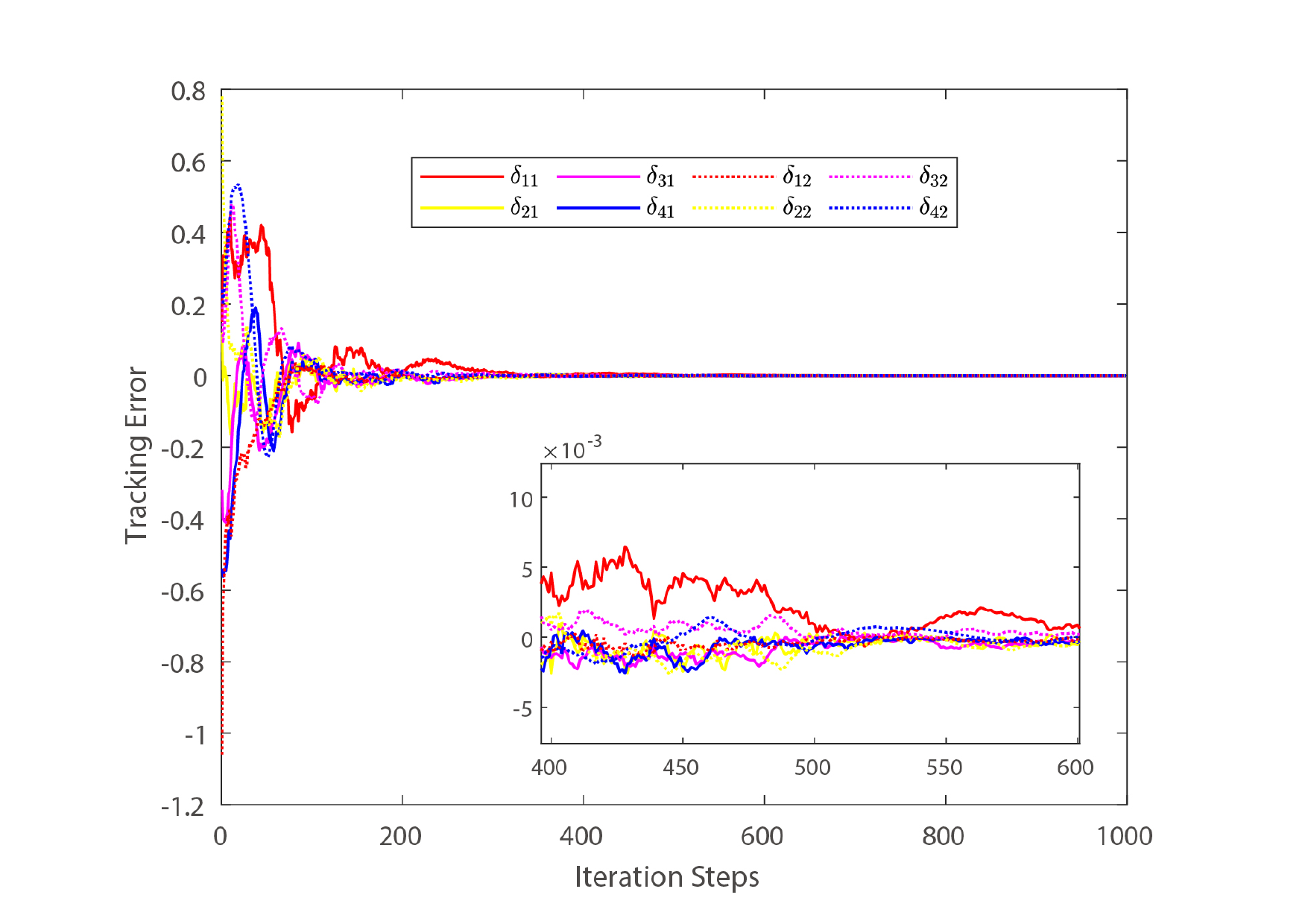}}
    \caption{Profiles of states and synchronization errors \wxy{in the cooperative graphical game}}
    \label{fig3}
\end{figure}

\begin{figure}[htbp]
    \centering
    \subfigure[Critic weights]
        {\includegraphics[width=0.49\textwidth]{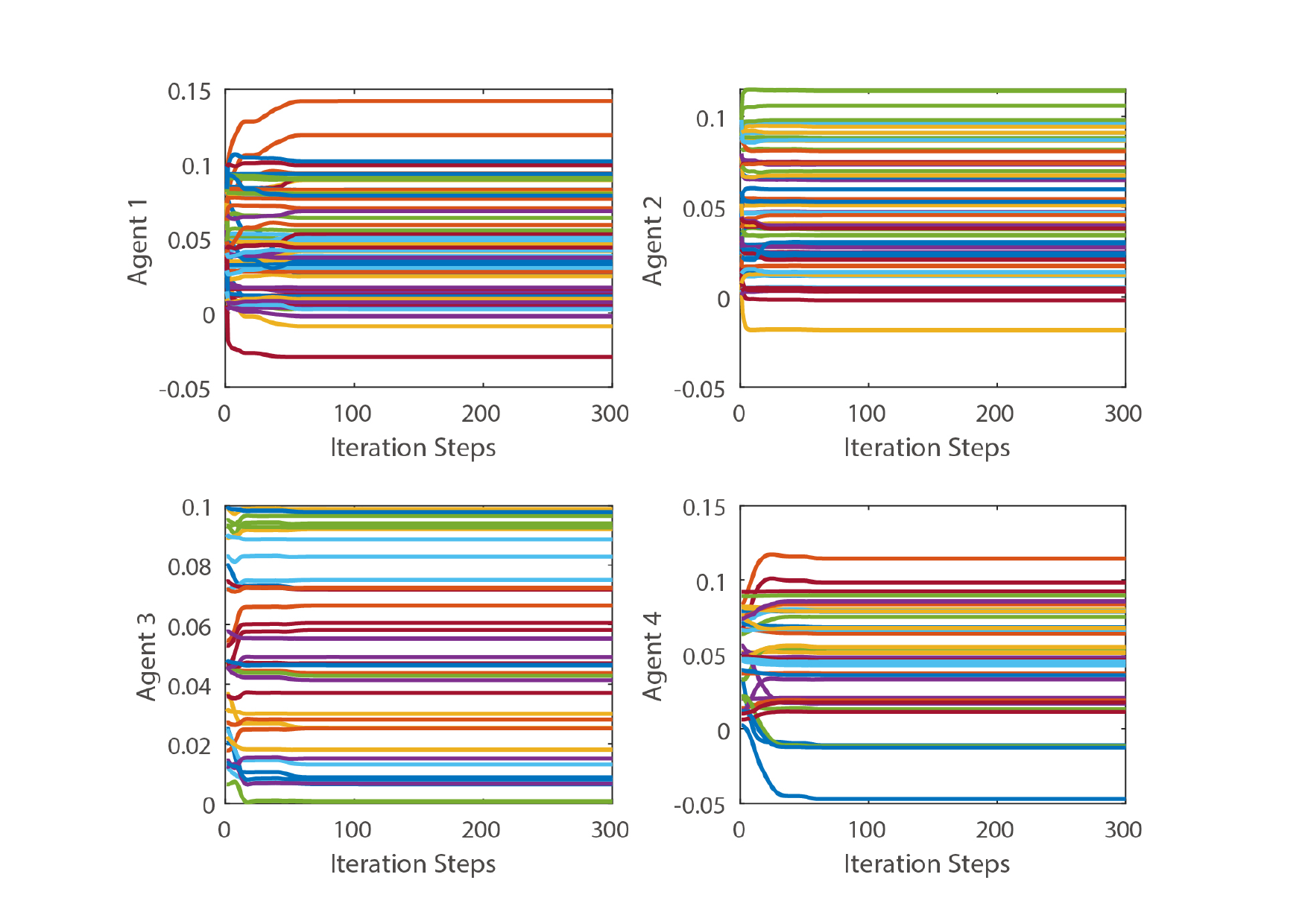}}
    \subfigure[Actor and disturber weights]
        {\includegraphics[width=0.49\textwidth]{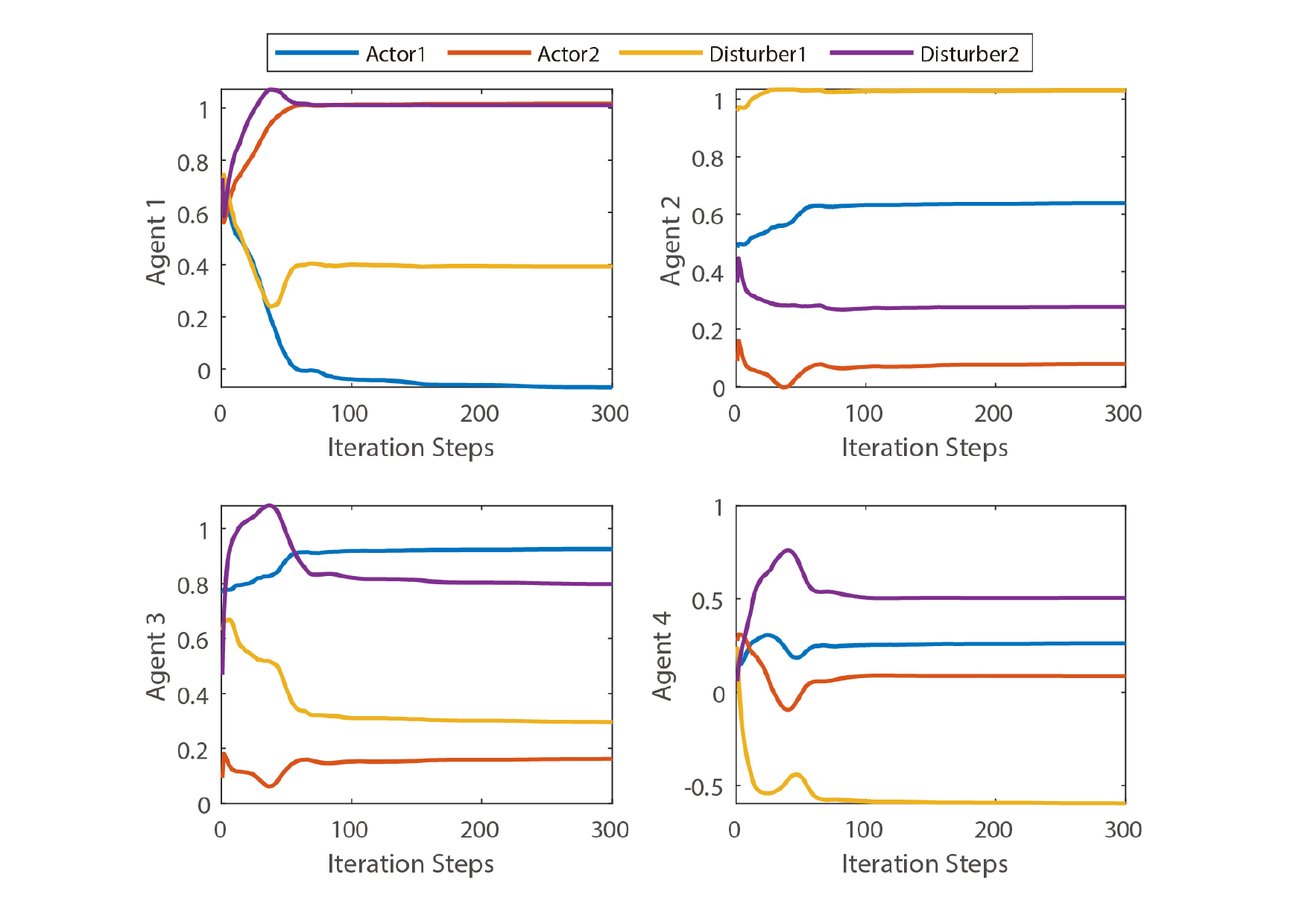}}
    \caption{Neural network weights update process \wxy{in the cooperative graphical game}}
    \label{fig4}

    \centering
    \includegraphics[width=0.5\textwidth]{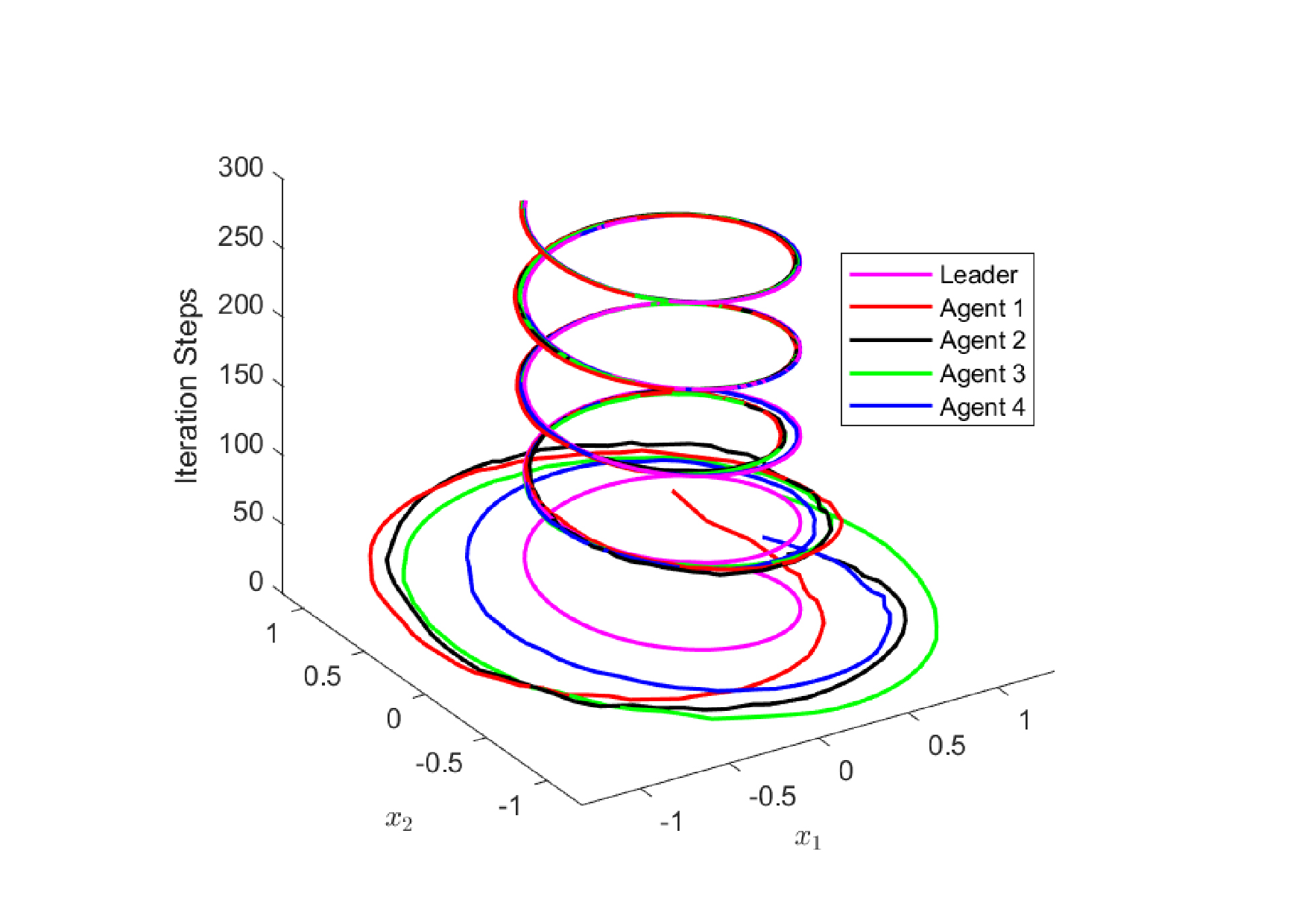}    
    \caption{3-D phase plane \wxy{in the cooperative graphical game}}  \label{fig7}  
    
\end{figure}

\begin{figure}[htbp]
    \centering
    \includegraphics[width=0.5\textwidth]{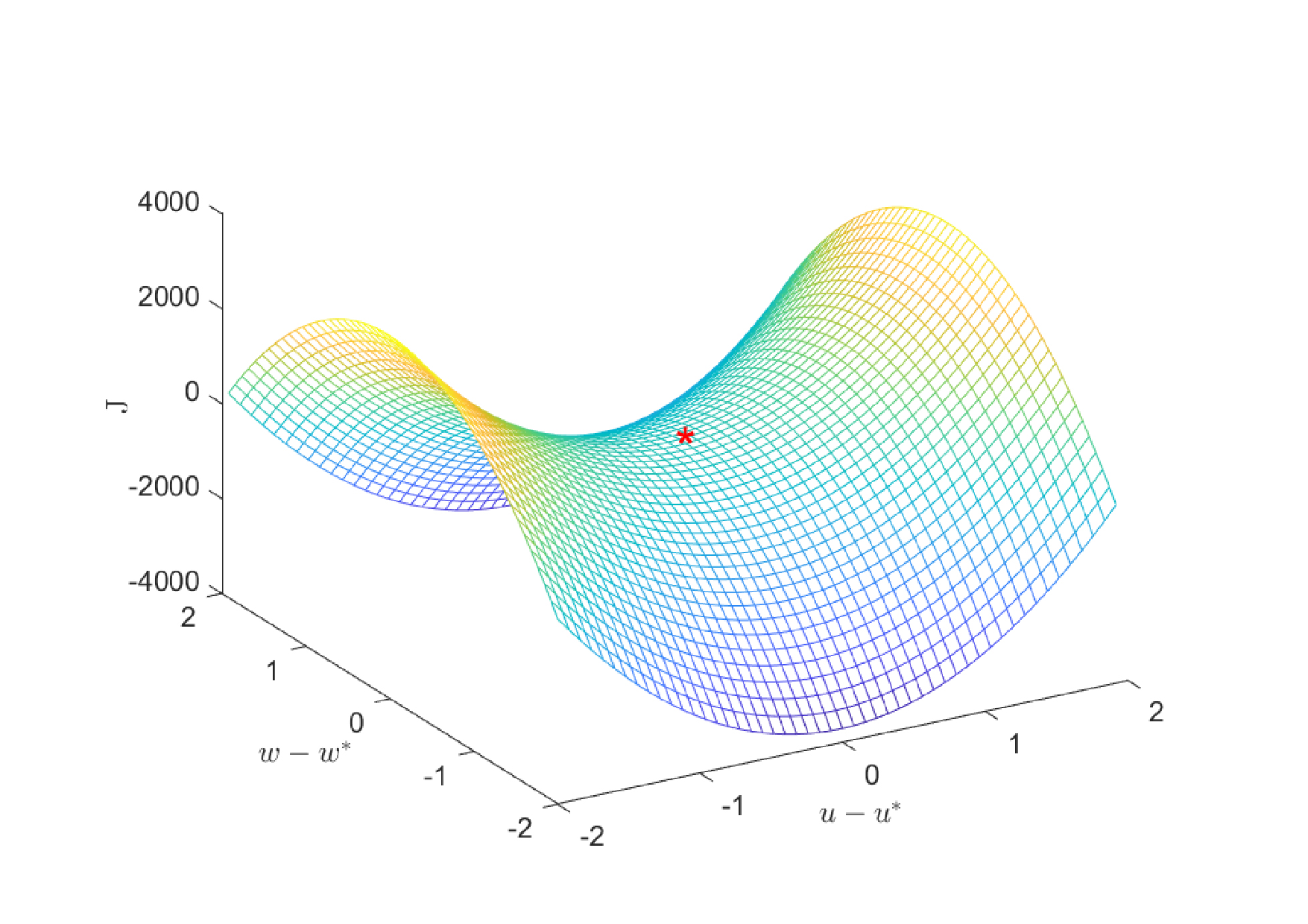}    
    \caption{Nash equilibrium}  \label{fig8}  
\end{figure}

\wxy{The system dynamics in this example are taken from Reference \zcite{abouheaf2014multi}. }
A multi-agent system with four followers and one leader under directed topology is given in Fig.\ref{fig1}.
The node labelled 0 is the leader, and the other nodes are the followers.
The edge weights and the pinning gains are given by 
\begin{align*}
e_{12} =0.8, e_{14} =0.7,e_{23} =0.6, e_{21} =0.6, e_{31} =0.8, \\
e_{41} =0.4, g_1 = g_2 = g_3 = 0, g_4 =1, e_{ij} = 0,j \not\in N_i.
\end{align*}
The dynamics for each agent are set to
\begin{equation*}
A = 
\begin{bmatrix}
0.995 & 0.09983 \\
-0.09983 & 0.995
\end{bmatrix}, B_1 = 
\begin{bmatrix}
    0.2047 \\ 0.08984
\end{bmatrix},
B_2 = 
\begin{bmatrix}
    0.2147 \\ 0.2895
\end{bmatrix},B_3 = 
\begin{bmatrix}
    0.2097 \\ 0.1897
\end{bmatrix},
B_4 = 
\begin{bmatrix}
    0.2 \\ 0.1
\end{bmatrix}.
\end{equation*}
The disturbance attenuation is set to \wxy{$\beta = \Breve{\beta} = 1$}, and
\begin{align*}
E_1 = 
\begin{bmatrix}
    0.21 \\ 0.0984
\end{bmatrix},
E_2 = 
\begin{bmatrix}
    0.32 \\ 0.084
\end{bmatrix},
E_3 = 
\begin{bmatrix}
    0.14 \\ 0.072
\end{bmatrix},
E_4 = 
\begin{bmatrix}
    0.16 \\ 0.024
\end{bmatrix}.
\end{align*}

The weighting matrices in the cost function are chosen as
\begin{align*}
&Q_{11} = I_{2 \times 2}, Q_{22} = I_{2 \times 2}, Q_{33} = I_{2 \times 2}, Q_{44} = I_{2 \times 2}, \\
   & R_{11} =1, R_{22} =1, R_{33} = 1, R_{44} = 1, R_{12} = 1,\\
    &  R_{14}  =1,R_{21} =1, R_{23} =1, R_{31} =1, R_{41} = 1, \\
    &T_{11} = 1,T_{22} = 1,T_{33} = 1, T_{12} =1,  T_{14}  =1, \\
   &T_{21} =1, T_{23} =1,  T_{31} =1, T_{41} = 1,T_{44}= 1,
\end{align*}
and $R_{ij}, T_{ij} = 0$ for $j \not\in N_i$. 
The initial states of the leader are set to be $x_0(0) = \col(0.4,0.5)$, and the initial states of the followers are chosen as $x_1(0)=\col(0.8,1.1)$, $x_2(0)=\col(0.9,0.3)$, $x_3(0)=\col(1.2,0.8)$ and $x_4(0)=\col(0.9,0.5)$.

\wxy{1) Cooperative graphical game:}
The learning rates of the critic network, the actor network and the disturber network \wxy{in the cooperative graphical game} are selected as 
\begin{equation*}
    \alpha_{c,i} = 0.1, \alpha_{a,i} = 0.1,\alpha_{d,i} = 0.1, \quad i = 1, 2, \dots, N.
\end{equation*}
The basis function for both the actor NN and the disturber NN of agent $i$ is selected as $\delta_{i,k}$.

Figure \ref{fig3} shows the states and synchronization errors evolution of all follower nodes.
Figure \ref{fig4} shows the NN weights of each agent, including critic weights, actor weights and disturber weights.
Figure \ref{fig7} shows the 3-D phase plane of agents 1, 2, 3, 4 and leader 0.
Figure \ref{fig8} shows that the $(\hat{u}_i,\hat{w}_i)$ is the saddle point of the graphical zero-sum game.
Obviously, synchronization of the MAS is achieved under the $L_2$ bounded disturbance and the Nash equilibrium seeking problem is solved.

\begin{figure}[htbp]
    \centering
    \subfigure[States]
        {\includegraphics[width=0.49\textwidth]{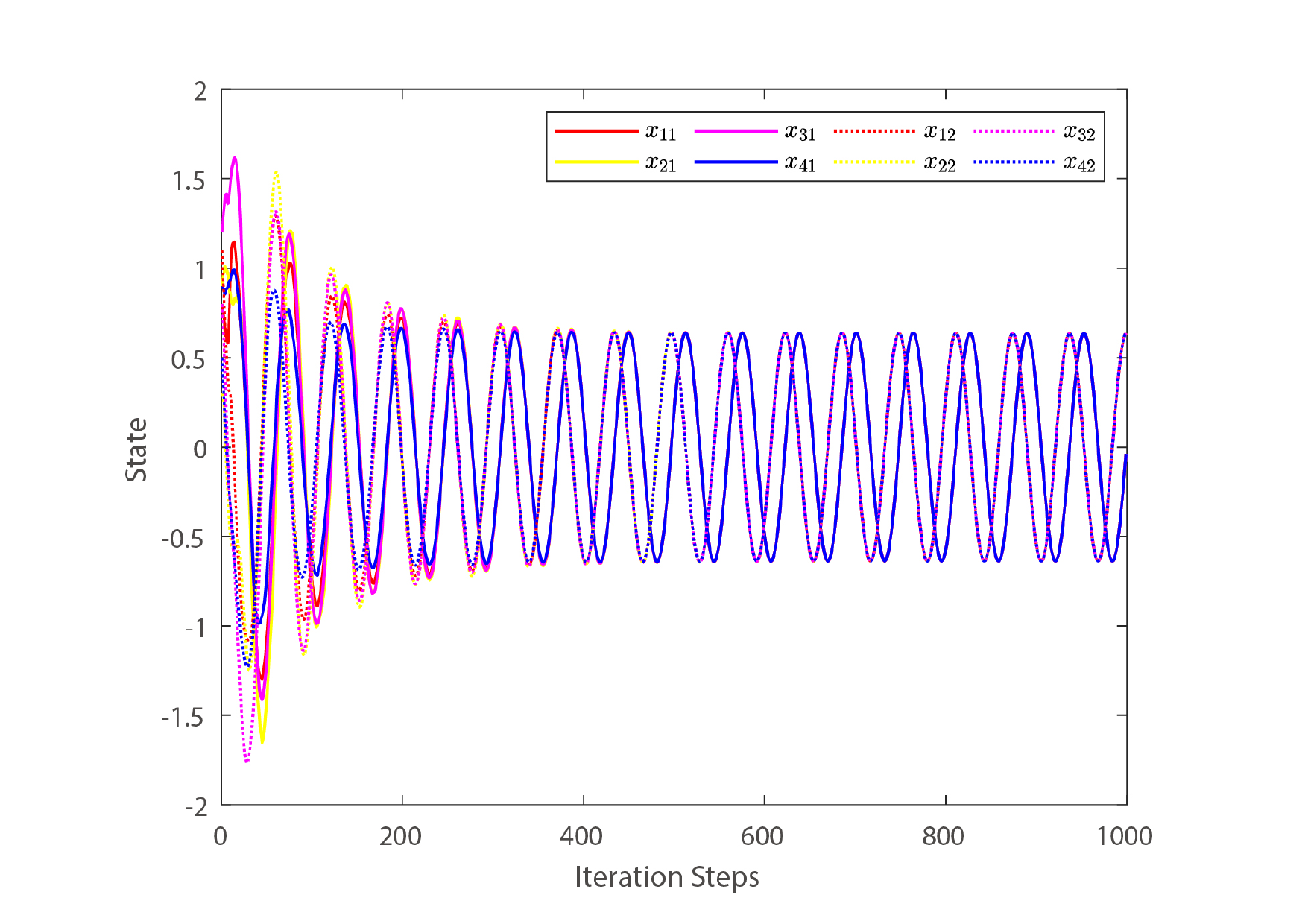}}
    \subfigure[Synchronization errors]
        {\includegraphics[width=0.49\textwidth]{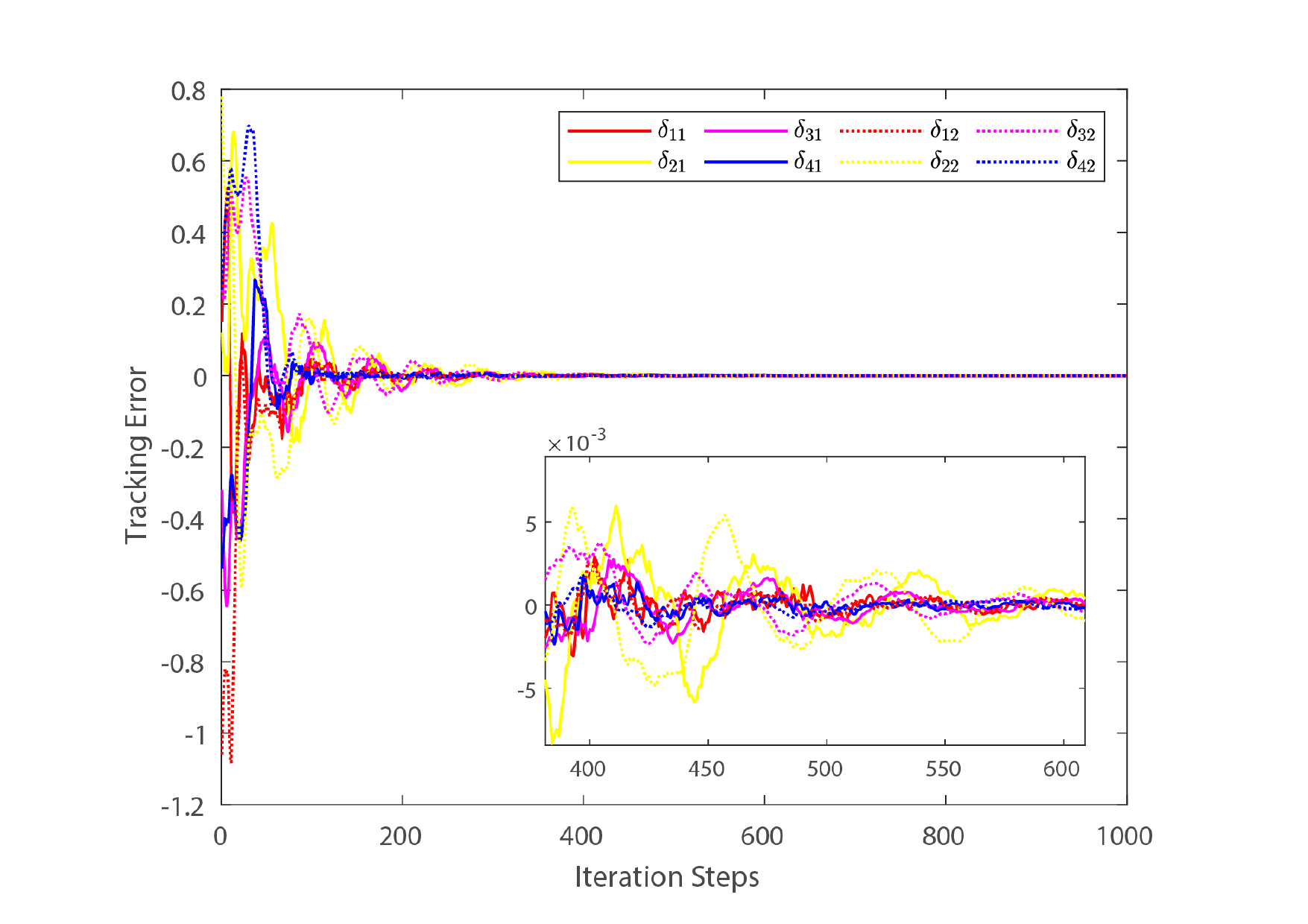}}
    \caption{\wxy{Profiles of states and synchronization errors in the non-cooperative graphical game}}
    \label{fig9}
\end{figure}

\begin{figure}[htbp]
    \centering
    \subfigure[Critic weights]
        {\includegraphics[width=0.49\textwidth]{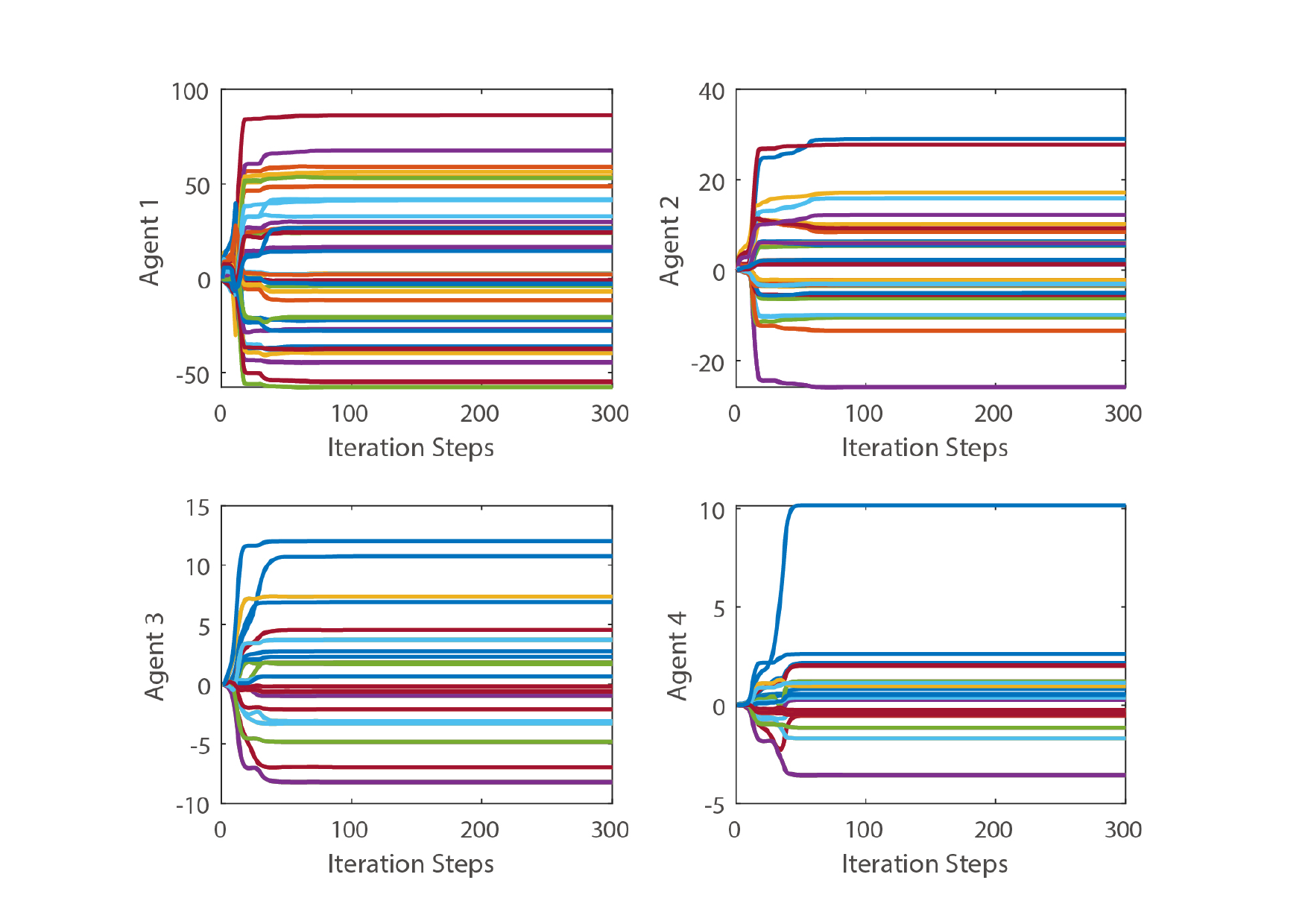}}
    \subfigure[Actor and disturber weights]
        {\includegraphics[width=0.49\textwidth]{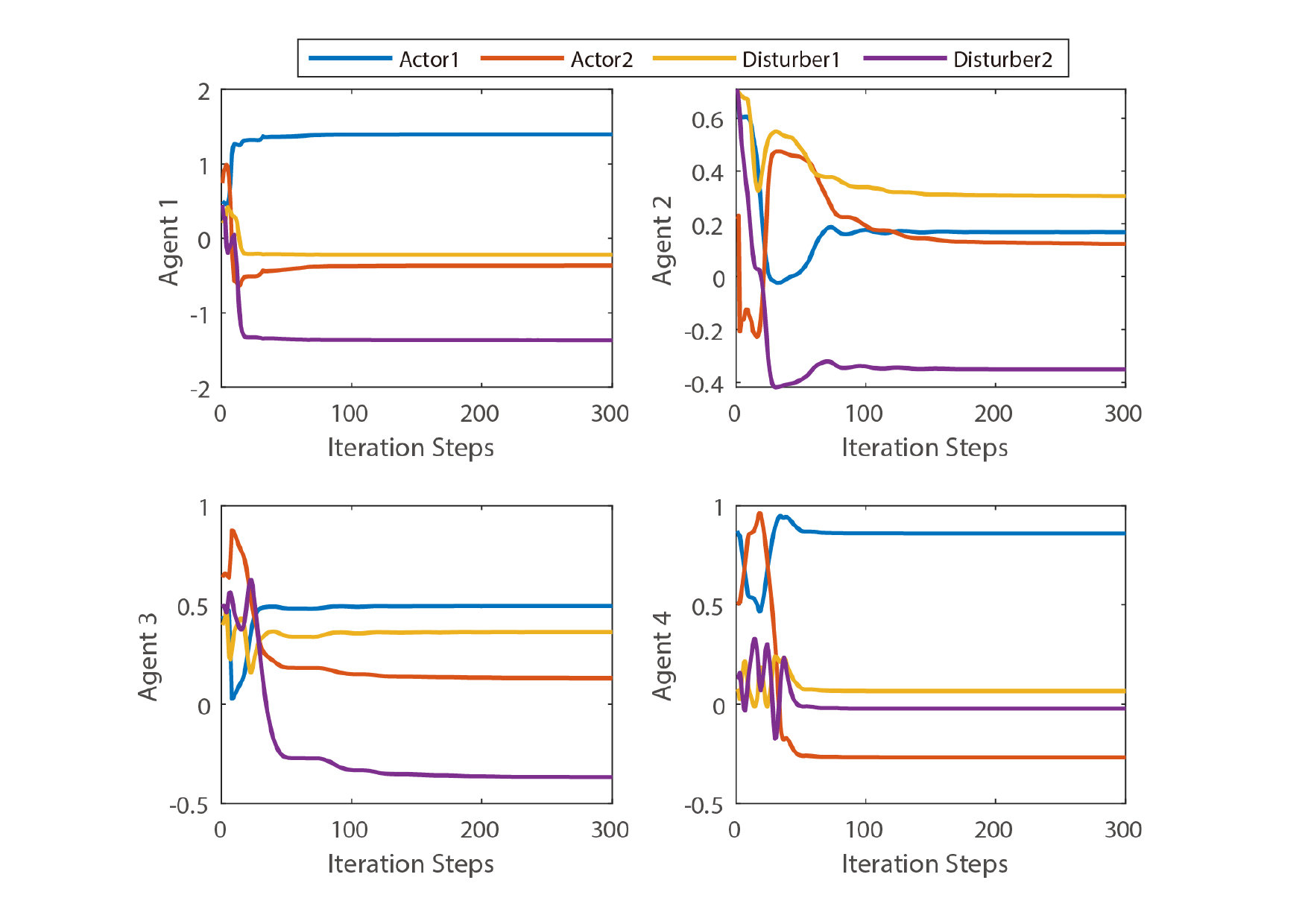}}
    \subfigure[Adversary weights]
        {\includegraphics[width=0.49\textwidth]{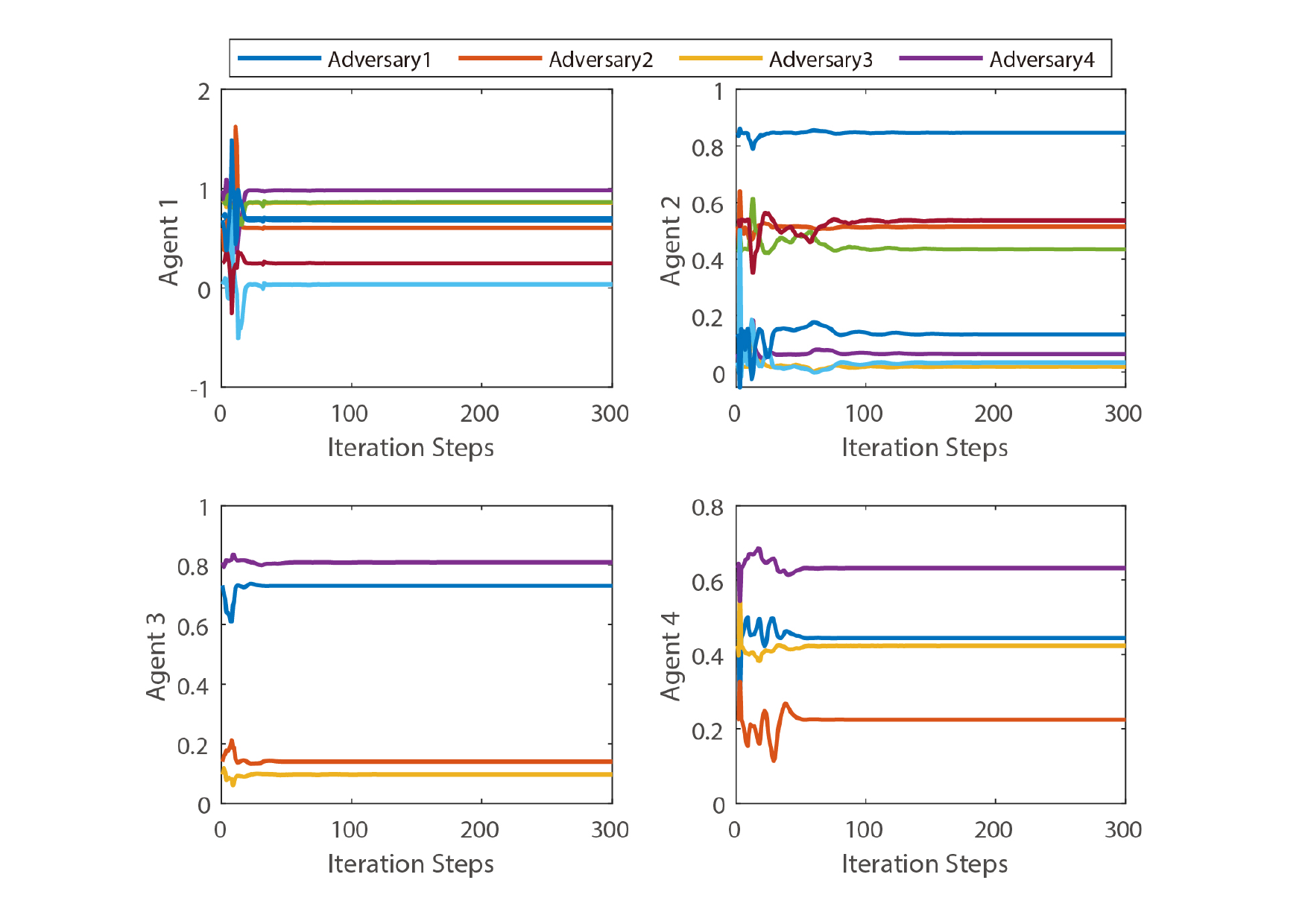}}
    \caption{\wxy{Neural network weights update process in the non-cooperative graphical game}}
    \label{fig10}

    \centering
    \includegraphics[width=0.5\textwidth]{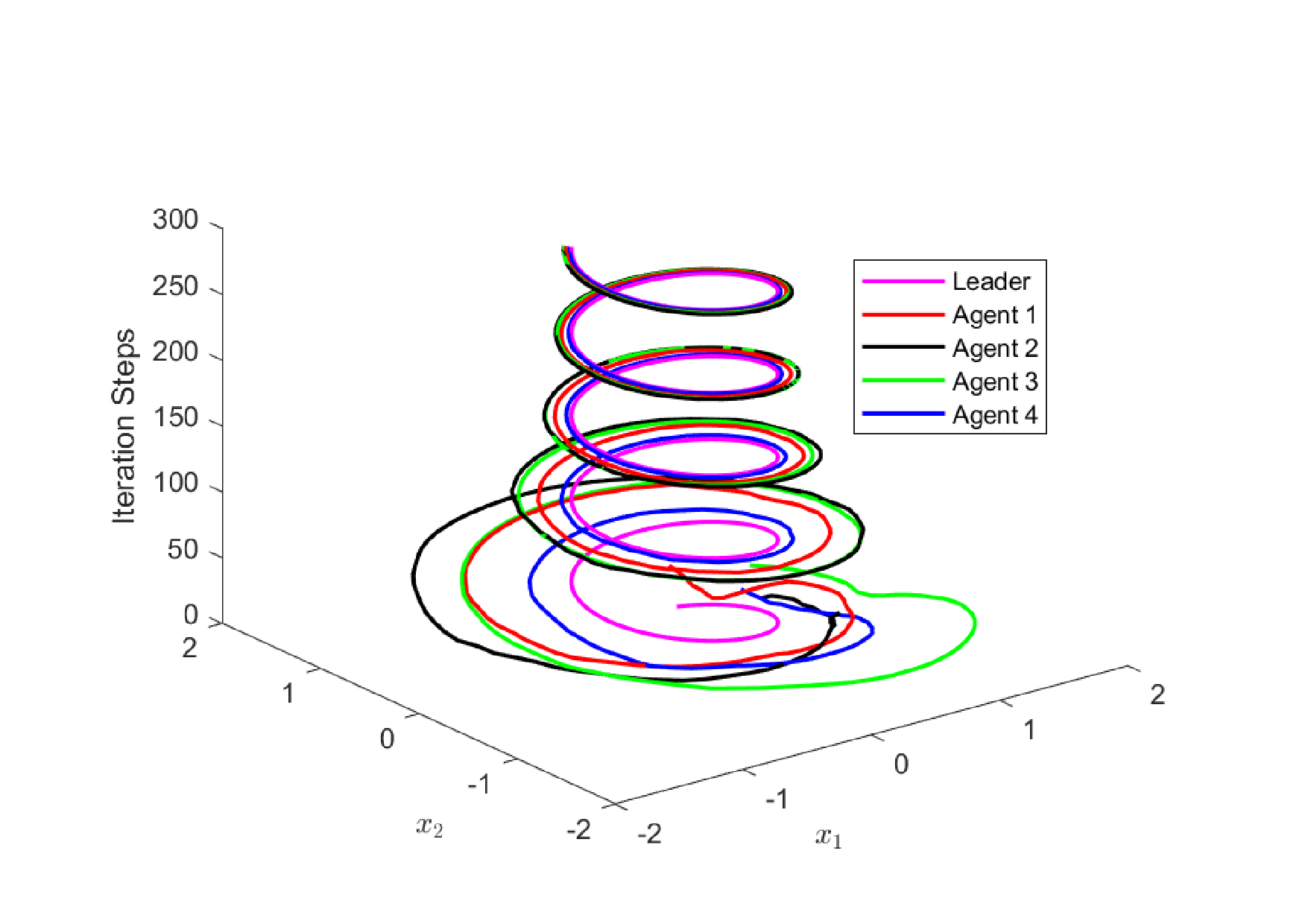}    
    \caption{\wxy{3-D phase plane in the non-cooperative graphical game}}  \label{fig11}  
\end{figure}

\wxy{2) Non-cooperative graphical game:
The learning rates of the critic network, the actor network, the disturber network and the adversary networks in the non-cooperative graphical game are selected as 
\begin{equation*}
    \Breve{\alpha}_{c,i} = 0.1,\quad\Breve{\alpha}_{a,i} = 0.1,\quad \Breve{\alpha}_{d,i} = 0.1, \quad\Breve{\alpha}_{a,ij} = 0.05,\quad\Breve{\alpha}_{d,ij} = 0.05, \quad i = 1, 2, \dots, N, \quad j \in N_i
\end{equation*}
The basis function for the actor NN, the disturber NN and the adversary NNs of agent $i$ is selected as $\delta_{i,k}$.}

\wxy{
Figure \ref{fig9} shows the states and synchronization errors evolution of all follower nodes.
Figure \ref{fig10} shows the NN weights of each agent, including critic weights, actor weights, disturber weights and adversary weights.
Figure \ref{fig11} shows the 3-D phase plane of followers and leader 0.
Synchronization of the MAS is achieved in the presence of the $L_2$ bounded disturbance.}

\section{Conclusions} \label{Sec_conclusion}
\wxy{This article studied both cooperative and non-cooperative graphical games  with disturbance rejection in discrete-time MAS.
To seek the Nash equilibrium in the cooperative graphical game and the distributed minmax solution in the non-cooperative graphical game without any knowledge of agents dynamics, Q-function based PI methods were proposed.
To implement these model-free methods, an actor-disturber-critic framework was employed to approximate the Q-functions, control policies and disturbance policies for the cooperative graphical game.
For the non-cooperative graphical game, adversary NNs were incorporated to approximate the adversary policies.
The convergence of the model-free policy iteration to both the approximate Nash equilibrium solution and the fully distributed solution were rigorously analyzed.
Simulation results showed the effectiveness and feasibility of the proposed methods.}

\section{Appendix} \label{App}
\textbf{\emph{Proof for Theorem \ref{thm3}:}}
A Lyapunov function candidate is selected as
\begin{align}\label{e4.3.2}
    L_i^k  = &\ L_{c,i}^k+L_{a,i}^k+L_{d,i}^k \nonumber \\
           = &\ \frac{1}{\alpha_{c,i}}\lVert \Tilde{W}_{c,i}^k\lVert^2+\frac{1}{\alpha_{a,i}r_{a,i}}\lVert\Tilde{W}_{a,i}^k\lVert^2 +\frac{1}{\alpha_{d,i}r_{d,i}}\lVert\Tilde{W}_{d,i}^k\lVert^2,
\end{align}
where $r_{a,i} > 0$ and $r_{d,i} > 0$ are the weighting factors.


Consider the critic NN part first by \eqref{e4.2.12} and notice that $\lVert \Tilde{W}_{c,i}^{k+1} \rVert^2 = \lVert \Tilde{W}_{vc,i}^{k+1} \rVert^2$.
One has
\begin{align}\label{e4.3.4}
     \lVert \Tilde{W}_{c,i}^{k+1} \rVert^2 = &\ \lVert \Tilde{W}_{vc,i}^{k}- \alpha_{c,i}e_{c,i,k}\eta(z_{i,k})\rVert^2 \nonumber \\
    =&\ \alpha_{c,i}^2e_{c,i,k}^2\lVert \eta(z_{i,k})\rVert^2 + \lVert\Tilde{W}_{vc,i}^{k}\rVert^2 -2\alpha_{c,i}e_{c,i,k}\Tilde{W}_{vc,i}^{k\top}\eta(z_{i,k})
\end{align}
Note that the third term of \eqref{e4.3.4} can be rewritten using the TD error \eqref{e4.2.3} as
  \begin{align}\label{e4.3.5}
    2e_{c,i,k}\Tilde{W}_{vc,i}^{k\top}\eta(z_{i,k}) 
    = \lVert e_{c,i,k}\rVert^2 +\lVert\Tilde{W}_{vc,i}^{k\top}\eta(z_{i,k})\rVert^2 - \left\lVert r_{i,k} +W_{vc,i}^{*\top}\eta(z_{i,k})\right\rVert^2.
  \end{align}
Taking the first difference of $L_{c,i}^k$ by combining \eqref{e4.3.4} and \eqref{e4.3.5} yields
    \begin{align}\label{e4.3.6}
        \Delta L_{c,i}^k =&\ \frac{1}{\alpha_{c,i}}\left[\lVert\Tilde{W}_{c,i}^{k+1} \lVert^2-\lVert\Tilde{W}_{c,i}^k\lVert^2\right] \nonumber   \\ 
        =&\ \left\lVert r_{i,k} +W_{vc,i}^{*\top}\eta(z_{i,k})\right\rVert^2 -
        \lVert \Tilde{W}_{vc,i}^{k\top}\eta(z_{i,k})\rVert^2 
         -(1-\alpha_{c,i} \lVert \eta(z_{i,k}) \rVert^2) e_{c,i,k}^2 \nonumber  \\ 
        \leq&\  2\lVert r_{i,k} \rVert^2+2\lVert W_{vc,i}^{*\top}\eta(z_{i,k})\rVert^2 -
        \lVert \Tilde{W}_{vc,i}^{k\top}\eta(z_{i,k})\rVert^2 -(1-\alpha_{c,i} \lVert \eta(z_{i,k}) \rVert^2) e_{c,i,k}^2.
    \end{align}
Next, consider the actor NN part.
By similar manipulations to \eqref{e4.3.4}, one has
    \begin{align}\label{e4.3.8}
    \lVert \Tilde{W}_{a,i}^{k+1} \rVert^2 =  &\  \lVert \Tilde{W}_{va,i}^{k}- \alpha_{a,i} \phi_v(\delta_{i,k})  e_{a,i,k}\rVert^2 \notag\\
    = &\ \lVert \Tilde{W}_{va,i}^{k} \rVert^2  - \alpha_{a,i} \left( \lVert e_{a,i,k} \rVert^2 +\lVert \phi_v(\delta_{i,k})^\top \Tilde{W}_{va,i}^{k} \rVert^2  - \lVert \phi_v(\delta_{i,k})^\top W_{va,i}^{*} - u_{i,k}^Q \rVert^2  - \alpha_{a,i}\lVert \phi_v(\delta_{i,k})  e_{a,i,k}\rVert^2 \right). 
    \end{align}

Using \eqref{e4.2.6} and \eqref{e4.3.8}, the first difference of$L_{a,i}^k$ is 
    \begin{align}\label{e4.3.11}
    r_{a,i}\Delta L_{a,i}^k  = &\ \frac{1}{\alpha_{a,i}}\left[\lVert \Tilde{W}_{a,i}^{k+1} \rVert^2-\lVert \Tilde{W}_{a,i}^{k} \rVert^2\right] \notag\\ 
     = &\  -\lVert \Tilde{W}_{va,i}^{k\top} \phi_v(\delta_{i,k})\rVert^2    +
    \alpha_{a,i}\lVert \phi_v(\delta_{i,k})  e_{a,i,k}\rVert^2  +  \lVert W_{va,i}^{*\top} - u_{i,k}^Q \phi_v(\delta_{i,k}) \rVert^2 - \lVert e_{a,i,k} \rVert^2 \notag\\ 
     \leq &\ \lVert W_{va,i}^{*} \rVert^2 \lVert \phi_v(\delta_{i,k}) \rVert^2 + 2 \lVert u_{i,k}^{Q}\rVert \lVert \Tilde{W}_{va,i}^{k\top} \phi_v(\delta_{i,k})\rVert - \lVert \Tilde{W}_{va,i}^{k\top} \phi_v(\delta_{i,k}) \rVert^2  +
     \alpha_{a,i} \lVert u_{i,k}^{Q} \rVert^2 \lVert \phi_v(\delta_{i,k})\rVert^2 \notag \\ 
     &\ - (1-\alpha_{a,i} \lVert \phi_v(\delta_{i,k}) \rVert^2) \lVert W_{va,i}^{k\top} \phi_v(\delta_{i,k}) \rVert ^2- 2\alpha_{a,i} \lVert \phi_v(\delta_{i,k}) \rVert^2  W_{va,i}^{k\top} \phi_v(\delta_{i,k}) u_{i,k}^{Q}.
    \end{align}
    
By Cauchy-Schwarz inequality, one has
    \begin{align}\label{e4.3.12}
  - 2\alpha_{a,i} \lVert \phi_v(\delta_{i,k})\rVert^2  W_{va,i}^{k\top} \phi_v(\delta_{i,k}) u_{i,k}^{Q} 
    \leq \alpha_{a,i}\lVert \phi_v(\delta_{i,k})\rVert^2\left(\lVert u_{i,k}^{Q}\rVert^2+\rVert W_{va,i}^{k\top} \phi_v(\delta_{i,k})\lVert^2\right).
    \end{align}
    
Combining \eqref{e4.3.11} and \eqref{e4.3.12}, the first difference $L_{a,i}^k$ can be written as
    \begin{align*} 
    \Delta L_{a,i}^k \leq &\ 
    \frac{1}{r_{a,i}}\left[\lVert W_{va,i}^{*} \rVert^2 \lVert \phi_v(\delta_{i,k}) \rVert^2 + 2 \lVert u_{i,k}^{Q}\rVert \lVert \Tilde{W}_{va,i}^{k\top} \phi_v(\delta_{i,k})\rVert \right.\notag\\
    & \left.- \lVert \Tilde{W}_{va,i}^{k\top} \phi_v(\delta_{i,k}) \rVert^2 +
     2\alpha_{a,i} \lVert u_{i,k}^{Q} \rVert^2 \lVert \phi_v(\delta_{i,k})\rVert^2- (1-2\alpha_{a,i} \lVert \phi_v(\delta_{i,k}) \rVert^2) \lVert W_{va,i}^{k\top} \phi_v(\delta_{i,k}) \rVert ^2\right].
    \end{align*}

Similarly, the first difference of $L_{d,i}^k$ is bounded by
    \begin{align}\label{e4.3.15}
    \Delta L_{d,i}^k \leq &\
    \frac{1}{r_{d,i}}\left[\lVert W_{vd,i}^{*} \rVert^2 \lVert \varphi_v(\delta_{i,k}) \rVert^2 + 2 \lVert w_{i,k}^{Q}\rVert \lVert \Tilde{W}_{vd,i}^{k\top} \varphi_v(\delta_{i,k})\rVert\right. \notag\\
    & \left. - \lVert \Tilde{W}_{vd,i}^{k\top} \varphi_v(\delta_{i,k}) \rVert^2 +
     2\alpha_{d,i} \lVert w_{i,k}^{Q} \rVert^2 \lVert \varphi_v(\delta_{i,k})\rVert^2  - (1-2\alpha_{d,i} \lVert \varphi_v(\delta_{i,k}) \rVert^2) \lVert W_{vd,i}^{k\top} \varphi(\delta_{i,k}) \rVert ^2\right].
    \end{align}

Substituting \eqref{e4.3.6}, \eqref{e4.3.8} and \eqref{e4.3.15} into the first difference of $L_i^k$ yields
    \begin{align}\label{e4.3.16}
    \Delta L_i^k    = &\ \Delta L_{c,i}^k + \Delta L_{a,i}^k + \Delta L_{d,i}^k \notag \\
    \leq &-(1-\alpha_{c,i} \lVert \eta(z_{i,k}) \rVert^2) e_{c,i,k}^2 - \frac{1}{r_{a,i}} (1-2\alpha_{a,i} \lVert \phi_v(\delta_{i,k}) \rVert^2) \lVert W_{va,i}^{k\top} \phi(\delta_{i,k}) \rVert ^2 - \frac{1}{r_{d,i}} (1-2\alpha_{d,i} \lVert \varphi_v(\delta_{i,k}) \rVert^2) \lVert W_{vd,i}^{k\top} \varphi(\delta_{i,k}) \rVert ^2 \notag \\ 
    &- \lVert \Tilde{W}_{vc,i}^{k\top}\eta(z_{i,k})\rVert^2-
    \frac{1}{r_{a,i}}\lVert \Tilde{W}_{va,i}^{k\top}\phi_v(\delta_{i,k})\rVert^2 -
    \frac{1}{r_{d,i}}\lVert \Tilde{W}_{vd,i}^{k\top}\varphi_v(\delta_{i,k})\rVert^2
   + 2\lVert r_{i,k} \rVert^2+2\lVert W_{vc,i}^{*\top}\eta(z_{i,k})\rVert^2 + \frac{1}{r_{a,i}}\lVert W_{va,i}^{*\top} \rVert^2 \lVert \phi_v(\delta_{i,k}) \rVert^2\notag\\ 
    & + \frac{2}{r_{a,i}} \lVert u_{i,k}^{Q}\rVert \lVert \Tilde{W}_{va,i}^{k\top} \phi_v(\delta_{i,k})\rVert+ \frac{2\alpha_{a,i}}{r_{a,i}} \lVert u_{i,k}^{Q} \rVert^2 \lVert \phi_v(\delta_{i,k})\rVert^2 + \frac{1}{r_{d,i}}\lVert W_{vd,i}^{*\top} \rVert^2 \lVert \varphi_v(\delta_{i,k}) \rVert^2 \notag\\
    & + \frac{2}{r_{d,i}} \lVert w_{i,k}^{Q}\rVert \lVert \Tilde{W}_{vd,i}^{k\top} \varphi_v(\delta_{i,k})\rVert  + \frac{2\alpha_{d,i}}{r_{d,i}} \lVert w_{i,k}^{Q} \rVert^2 \lVert \varphi_v(\delta_{i,k})\rVert^2.
    \end{align}

The first difference of $L_i^k$ can be made negative by selecting $a_{a,i}$, $a_{d,i}$, $a_{c,i}$ such that the first three terms are negative. This is always possible since $\eta(z_{i,k})$, $\phi_v(\delta_{i,k})$ and $\varphi_v(\delta_{i,k})$ are bounded.

Define $$ Z_{i,k} =\col\left(\Tilde{W}_{vc,i}^{k\top}\eta(z_{i,k}),\Tilde{W}_{va,i}^{k\top}\phi_v(\delta_{i,k}),\Tilde{W}_{vd,i}^{k\top}\varphi_v(\delta_{i,k})\right).$$
Then \eqref{e4.3.16} becomes
    \begin{align}\label{e4.3.19}
    \Delta L_{i,k} \leq &\  \bar{\sigma}_2 \lVert Z_{i,k} \rVert- Z_{i,k}^\top \bar{\sigma}_1 Z_{i,k}
     + 2\lVert r_{i,k} \rVert^2 +2\lVert W_{vc,i}^{*\top}\eta(z_{i,k})\rVert^2 + \frac{2\alpha_{a,i}}{r_{a,i}} \lVert u_{i,k}^{Q} \rVert^2 \lVert \phi_v(\delta_{i,k})\rVert^2 \notag \\
    & + \frac{2\alpha_{d,i}}{r_{d,i}} \lVert w_{i,k}^{Q} \rVert^2 \lVert \varphi_v(\delta_{i,k})\rVert^2 + \frac{1}{r_{a,i}}\lVert W_{va,i}^{*\top} \rVert^2 \lVert \phi_v(\delta_{i,k}) \rVert^2+ \frac{1}{r_{d,i}}\lVert W_{vd,i}^{*\top} \rVert^2 \lVert \varphi_v(\delta_{i,k}) \rVert^2
    \end{align}
where 
    \begin{align*}
\bar{\sigma}_1 =\min\left\{1, \frac{1}{r_{a,i}}, \frac{1}{r_{d,i}}\right\},\;\;\;\;
\bar{\sigma}_2 =\max\left\{ \frac{2 \lVert \bar{u}_i\rVert}{r_{a,i}}, \frac{2 \lVert \bar{w}_i\rVert}{r_{d,i}} \right\}.
    \end{align*}

Let $\bar{r}_i$, $\bar{\eta}_i$, $\bar{\phi}_i$ and $\bar{\varphi}_i$be the upper bound of $r_i$, $\eta(z_{i,k})$, $\phi_v(\delta_{i,k})$, $\varphi_v(\delta_{i,k})$.
Considering the upper bound of each term except $Z_{i,k}$ in \eqref{e4.3.19}, the following bound is obtained
    \begin{align}\label{e4.3.20}
   2\lVert& r_{i,k} \rVert^2+2\lVert W_{vc,i}^{*\top}\eta(z_{i,k})\rVert^2  
   + \frac{1}{r_{a,i}}\lVert W_{va,i}^{*\top} \rVert^2 \lVert \phi_v(\delta_{i,k}) \rVert^2  + \frac{2\alpha_{a,i}}{r_{a,i}} \lVert u_{i,k}^{Q} \rVert^2 \lVert \phi_v(\delta_{i,k})\rVert^2 \notag\\ 
    & +  \frac{1}{r_{d,i}}\lVert W_{vd,i}^{*\top} \rVert^2 \lVert \varphi_v(\delta_{i,k}) \rVert^2 + \frac{2\alpha_{d,i}}{r_{d,i}} \lVert w_{i,k}^{Q} \rVert^2 \lVert \varphi_v(\delta_{i,k})\rVert^2 \notag\\ 
    \leq &\
    2 \bar{r}_i^2 + 2\bar{\eta}_i^2 \bar{W}_{c,i}^2 + \frac{1}{r_{a,i}} \bar{W}_{a,i}^2 \bar{\phi}_i^2 + \frac{2\alpha_{ai}}{r_{a,i}} \bar{u}_{i}^2 \bar{\phi}_i^2 + \frac{1}{r_{d,i}} \bar{W}_{d,i}^2 \bar{\varphi}_i^2 + \frac{2\alpha_{d,i}}{r_{d,i}} \bar{w}_{i}^2 \bar{\varphi}_i^2 
    =  M.
    \end{align}
Combining \eqref{e4.3.19} and \eqref{e4.3.20}, we have:
    \begin{align}
    \Delta L_{i}^k  \leq
    - Z_{i,k}^\top \bar{\sigma}_1 Z_{i,k}
     + \bar{\sigma}_2 \lVert Z_{i,k} \rVert + M.\notag
    \end{align}
Therefore, $\Delta L_{i}^k<0$ if 
\begin{align*}
    \lVert Z_{i,k} \rVert  > \frac{\bar{\sigma_2}}{2\bar{\sigma_1}} + \sqrt{\frac{M}{\bar{\sigma_1}} + \frac{\bar{\sigma_2}^2}{4\bar{\sigma_1}^2}} \equiv B_z
\end{align*}
and the learning rates satisfy $a_{c,i} \geq 1/\bar{\eta}_i^2$, $a_{a,i} \geq 1 /2\bar{\phi}_i^2$ and $a_{d,i} \geq 1 /2\bar{\varphi}_i^2$.
Combing $\eqref{e4.2.3}$ and $\eqref{e4.2.12}$ leads to a virtual system
\begin{align} \label{e4.3.21}
   \Tilde{W}_{vc,i}^{k+1} & = A_{c,i,k}\Tilde{W}_{vc,i}^{k} + B_{c,i,k}u_{c,i,k} \notag \\
    y_{c,i,k} & = C_{c,i,k} \Tilde{W}_{vc,i}^{k}
\end{align}
where $A_{c,i,k} = (I - \alpha_{c,i} \eta(z_{i,k}) \eta(z_{i,k})^\top)$, $B_{c,i,k} = -\alpha_{c,i}\eta(z_{i,k})$, $C_{c,i,k}=\eta(z_{i,k})^\top$ and $u_{c,i,k} = r_{i,k} + W_{vc,i}^{*\top} \eta(z_{i,k})$.
The PE condition for $\eta(z_{i,k})$ is equivalent to the uniformly completely observability (UCO) of $(I, C_{c,i,k})$. 
Moreover, the output feedback does not change the UCO property of $(A_{c,i,k}, C_{c,i,k})$.
If we use the output feedback, the system \eqref{e4.3.21} becomes 
\begin{align*}
    \Tilde{W}_{vc,i}^{k+1} & = A_{c,i,k}\Tilde{W}_{vc,i}^{k} + B_{c,i,k}(u_{c,i,k} - y_{c,i,k})\\
    & = \Tilde{W}_{vc,i}^{k} + B_{c,i,k}u_{c,i,k} \\
    y_{c,i,k} & = C_{c,i,k} \Tilde{W}_{vc,i}^{k}.
\end{align*}
The PE condition guarantees that $\Tilde{W}_{vc,i}^{k}$ is bounded if $y_{c,i,k}$ is bounded.
The similar analysis can be taken to $\Tilde{W}_{va,i}^{k}$ and $\Tilde{W}_{vd,i}^{k}$. 
Since $Z_{i,k}$ is UUB, the critic NN errors $\Tilde{W}_{c,i}^k$, the actor NN errors $\Tilde{W}_{a,i}^k$ and the disturber NN errors $\Tilde{W}_{d,i}^k$ are UUB.
It can be concluded that $\hat{Q}_{i}$ converges to the approximate optimal value of the cooperative zero-sum graphical game and $\hat{u}_i$, $\hat{w}_i$ converge to the approximate zero-sum Nash equilibrium solution. This completes the proof.\hfill$\Box$

\section*{Acknowledgments}
This work was supported by the National Natural Science Foundation of China under grant 62473114, the Guangdong Basic and Applied Basic Research Foundation under project 2023A1515011981, the Shenzhen Science and Technology Program under projects JCYJ20220818102416036 and RCJC20210609104400005, and partly by NSERC.

\section*{Conflict of Interest}
The authors declare that there is no conflict of interest on the article.

\section*{Data Availability Statement}
Data sharing not applicable to this article as no datasets were generated or analyzed during the current study.

\end{document}